\documentclass[3p,11pt,fleqn,sort&compress]{elsarticle}

\usepackage{multirow}
\usepackage{footmisc}
\usepackage{tikz}
\usetikzlibrary{calc}
\usepackage{subcaption}

\usepackage{amsmath,amssymb,amsthm}

\usepackage[T1]{fontenc}
\usepackage[utf8]{inputenc}

\usepackage{ifthen}
\newboolean{colorChanges}
\setboolean{colorChanges}{false}

\ifthenelse{\boolean{colorChanges}}{

\newcommand{\LinhModified}[1]{\textcolor{blue}{#1}\xspace}
\newcommand{\markLinhModified}{\color{blue}}

}{

\newcommand{\LinhModified}[1]{#1\xspace}
\newcommand{\markLinhModified}{}

}

\theoremstyle{plain}
\newtheorem{theorem}{Theorem}[section]
\newtheorem{lemma}[theorem]{Lemma}
\newtheorem{corollary}[theorem]{Corollary}

\newtheorem{proposition}[theorem]{Proposition}

\theoremstyle{definition}
\newtheorem{definition}[theorem]{Definition}
\newtheorem{example}[theorem]{Example}

\newtheorem{remark}[theorem]{Remark}

\usepackage{tabularx}
\usepackage{nicematrix}
\usepackage{booktabs}


\DeclareFontFamily{OT1}{pzc}{}
\DeclareFontShape{OT1}{pzc}{m}{it}{<-> s * [1.200] pzcmi7t}{}
\DeclareMathAlphabet{\mathpzc}{OT1}{pzc}{m}{it}
\renewcommand{\leq}{\leqslant}
\renewcommand{\geq}{\geqslant}
\renewcommand{\le}{\leqslant}


\def\eqref#1{(\ref{#1})}
\def\defeq{\stackrel{\mathrm{def}}{=}}
\def\tuple#1{\langle#1\rangle}

\newcommand{\mL}{\mathcal{L}}
\newcommand{\mLe}{\mathcal{L}_\varepsilon}

\newcommand{\bL}{\mathbf{L}}
\newcommand{\bLdb}{\bL^{\leq k}}


\newcommand{\myend}{\mbox{}\hfill{\small$\Box$}}

\newcommand{\cnv}[1]{{#1}^{-1}}

\newcommand{\mA}{\mathpzc{A}}
\newcommand{\mAp}{{\mathpzc{A}'}}

\newcommand{\mB}{\mathpzc{B}}

\newcommand{\mF}{\mathpzc{F}}
\newcommand{\dF}{d\!\mathpzc{F}}

\newcommand{\QA}{Q}
\newcommand{\TA}{\delta}
\newcommand{\IA}{I}
\newcommand{\FA}{F}
\newcommand{\QAp}{Q'}
\newcommand{\TAp}{\delta'}
\newcommand{\IAp}{I'}
\newcommand{\FAp}{F'}

\newcommand{\lee}{\le_\varepsilon}
\newcommand{\leep}{\le_{\varepsilon'}}
\newcommand{\eqe}{=_\varepsilon}
\newcommand{\eqep}{=_{\varepsilon'}}
\newcommand{\otimese}{\otimes_{\!\varepsilon}}
\newcommand{\toe}{\to_{\!\varepsilon}}
\newcommand{\circe}{\,\circ_{\!\varepsilon\,}}
\newcommand{\circep}{\,\circ_{\!\varepsilon'\,}}
\newcommand{\lande}{\land_\varepsilon}
\newcommand{\lore}{\lor_{\!\varepsilon}}
\newcommand{\bigveee}{\textstyle\bigvee_{\!\!\varepsilon}}
\newcommand{\bigwedgee}{\textstyle\bigwedge_\varepsilon}
\newcommand{\setminuse}{\!\setminus_{\varepsilon\!}}
\newcommand{\slashe}{/_{\!\!\varepsilon\,}}

\newcommand{\NN}{\mathbb{N}}
\newcommand{\FPO}{FPO\xspace}
\newcommand{\FPOs}{FPOs\xspace}

\newcommand{\FfA}{FfA\xspace}
\newcommand{\FfAs}{FfAs\xspace}
\newcommand{\DfA}{DfA\xspace}
\newcommand{\DfAs}{DfAs\xspace}
\newcommand{\NfA}{NfA\xspace}
\newcommand{\NfAs}{NfAs\xspace}

\newcommand{\True}{\textit{true}}
\newcommand{\lfdeg}{\textit{lfdeg}}

\makeatletter
\newcommand{\dbloverline}[1]{\overline{\dbl@overline{#1}}}
\newcommand{\dbl@overline}[1]{\mathpalette\dbl@@overline{#1}}
\newcommand{\dbl@@overline}[2]{%
  \begingroup
  \sbox\z@{$\m@th#1\overline{#2}$}%
  \ht\z@=\dimexpr\ht\z@-2\dbl@adjust{#1}\relax
  \box\z@
  \ifx#1\scriptstyle\kern-\scriptspace\else
  \ifx#1\scriptscriptstyle\kern-\scriptspace\fi\fi
  \endgroup
}
\newcommand{\dbl@adjust}[1]{%
  \fontdimen8
  \ifx#1\displaystyle\textfont\else
  \ifx#1\textstyle\textfont\else
  \ifx#1\scriptstyle\scriptfont\else
  \scriptscriptfont\fi\fi\fi 3
}
\makeatother


\usepackage{tikz}    
\usetikzlibrary{shapes}
\usetikzlibrary{fit}


\usepackage{tikz-3dplot,pgfplots}
\pgfplotsset{compat=newest}
\usetikzlibrary{shapes.geometric,arrows,arrows.meta}
\usetikzlibrary{fadings}
\tikzfading 
[
  name=fade out,
  inner color=transparent!0,
  outer color=transparent!100
]
\usetikzlibrary{arrows.meta,automata,positioning,shadows}

\tikzset{
    invisible/.style={opacity=0},
    visible on/.style={alt={#1{}{invisible}}},
    alt/.code args={<#1>#2#3}{%
      \alt<#1>{\pgfkeysalso{#2}}{\pgfkeysalso{#3}} 
    },
  }
\usetikzlibrary{calc,shapes}


\usepackage[ruled,vlined,linesnumbered]{algorithm2e}
\usepackage{multicol}
\RestyleAlgo{ruled}
\SetKwComment{Comment}{/* }{ */}
\SetKwInput{Input}{Input}
\SetKwInput{Output}{Output}
\SetKwInput{Purpose}{Purpose}
\SetKw{Break}{break}
\SetKw{Continue}{continue}
\SetKw{Return}{return}
\SetKw{New}{new\,}
\SetKwFor{RepTimes}{repeat}{times}{end}

\newcommand{\ReductionByRightInvariance}{\mbox{$\mathsf{ReductionByRightInvariance}$}\xspace}
\newcommand{\SoftStateReductionZ}{\mbox{$\mathsf{SoftStateReduction}_0$}\xspace}
\newcommand{\SoftStateReduction}{\mbox{$\mathsf{SoftStateReduction}$}\xspace}


\journal{arXiv}

\begin{document}
\sloppy

\begin{frontmatter}

\title{Soft state reduction of fuzzy automata over residuated lattices}

\author[1,2]{Linh Anh Nguyen}
\ead{nguyen@mimuw.edu.pl}
\ead{nalinh@ntt.edu.vn}
\address[1]{Faculty of Mathematics, Informatics and Mechanics, University of Warsaw}
\address[2]{Faculty of Information Technology, Nguyen Tat Thanh University}

\author[3]{Son Thanh Cao}
\ead{sonct@vinhuni.edu.vn}
\address[3]{Faculty of Information Technology, School of Engineering and Technology, Vinh University}

\author[4]{Stefan Stanimirovi\'{c}}
\ead{stefan.stanimirovic@pmf.edu.rs}
\address[4]{University of Ni\v s, Faculty of Sciences and Mathematics}

\begin{abstract}
State reduction of finite automata plays a significant role in improving efficiency in formal verification, pattern recognition, and machine learning, where automata-based models are widely used. While deterministic automata have well-defined minimization procedures, reducing states in nondeterministic fuzzy finite automata (\FfAs) remains challenging, especially for FfAs over non-locally finite residuated lattices like the product and Hamacher structures. This work introduces soft state reduction, an approximate method that leverages a small threshold~$\varepsilon$ possibly combined with a word length bound~$k$ to balance reduction accuracy and computational feasibility. By omitting fuzzy values smaller than~$\varepsilon$, the underlying residuated lattice usually becomes locally finite, making computations more tractable. We introduce \LinhModified{and study} approximate invariances\LinhModified{, which are fuzzy relations that allow merging of ``almost equivalent'' states of an FfA up to a tolerance level $\varepsilon$ and, optionally, to words of bounded length $k$. We further} present an algorithm which iteratively applies \LinhModified{such} invariances to achieve reduction while preserving approximate language equivalence. Our method effectively reduces \FfAs where existing techniques fail.
\end{abstract}

\begin{keyword}
approximate state reduction \sep fuzzy automata \sep residuated lattices \sep approximate language equivalence
\end{keyword}

\end{frontmatter}

\section{Introduction}

The problem of reducing the number of states in a finite automaton is a fundamental topic in automata theory, with applications spanning formal verification~\cite{DBLP:books/sp/Milner80,DBLP:books/el/01/Glabbeek01}, pattern recognition~\cite{DBLP:journals/pieee/Rabiner89,DBLP:books/cu/MBCT2015}, and text processing~\cite{DBLP:books/aw/HopcroftU79,DBLP:journals/coling/Mohri97}. In the case of deterministic finite automata (\DfAs), a well-established minimization procedure ensures that every \DfA has a unique  smallest equivalent representation, up to isomorphism. For nondeterministic finite automata (\NfAs), the situation is considerably more challenging. Unlike \DfAs, \NfAs do not necessarily have a unique minimal form, and determining the smallest equivalent \NfA~\cite{KamedaW70} is a computationally hard problem. In fact, finding a minimal \NfA is PSPACE-complete~\cite{Jiang1993}. Consequently, researchers often focus on reducing the state space of \NfAs without necessarily achieving strict minimality (see, e.g., \cite{IlieY03,BIANCHINI2024114621} and therein references). 

Fuzzy automata generalize classical automata by allowing fuzzy values for transitions, initial states and final states. This makes them particularly useful in domains requiring graded or uncertain decision-making, such as pattern recognition~\cite{DBLP:journals/pieee/Rabiner89,DBLP:books/cu/MBCT2015} and machine learning~\cite{mohri2002weighted,de2010grammatical}. As in the classical setting, state reduction of fuzzy finite automata (\FfAs) is an important research topic that has been studied extensively. 
Most works focus either on finding a minimal deterministic \FfA \cite{LI20071423,Belohlvek2009OnAM,Halamish2015,YANG202172,DEMENDIVILGRAU2024109108,ShamsizadehZG24} or on reducing the state space of a nondeterministic \FfA \cite{MALIK1999323,BASAK2002223,CHENG2004439,LEI20071413,PEEVA20084152,WU20101635,CSIP.10,SCI.14,SCB.18,YangL.19,StanimirovicMC.22,EPTCS386.6,FSS-D-25-00029}, with few exceptions focusing on finding a minimal nondeterministic \FfA~\cite{LITFS2015}. 
Notably, in many studies on state reduction of nondeterministic \FfAs, the term ``minimal'' does not imply a minimal number of states. Instead, the employed techniques typically involve merging equivalent or indistinguishable states based on specific criteria. 
Some works~\cite{Belohlvek2009OnAM,YangL.19,EPTCS386.6,StanimirovicMC.22,FSS-D-25-00029} address a relaxed problem of finding a minimal or reduced \FfA that approximates a given one. 
A considerable number of studies~\cite{LEI20071413,PEEVA20084152,WU20101635,CSIP.10,SCI.14,SCB.18,EPTCS386.6,DEMENDIVILGRAU2024109108,FSS-D-25-00029} focus on \FfAs over residuated lattices~\cite{Qiu01}, while others \cite{MALIK1999323,BASAK2002223,CHENG2004439,LI20071423,Belohlvek2009OnAM,LITFS2015,Halamish2015,YangL.19,YANG202172,StanimirovicMC.22,ShamsizadehZG24} consider only \FfAs over the G\"odel structure or Heyting algebras (of fuzzy values). 
The main results on state reduction of the works~\cite{LEI20071413,WU20101635} are formulated only for \FfAs over finite residuated lattices. 
\FfAs considered in the works~\cite{MALIK1999323,CHENG2004439,LEI20071413,WU20101635} are Mealy machines, which associate inputs with outputs but lack initial and final states. Table~\ref{table: JHJSM} summarizes the characteristics of notable works on state reduction of~\FfAs.

\begin{table}
\footnotesize
\centering
\begin{NiceTabular}{@{}p{16em}|p{2.6em}|p{1.2em}|p{2.6em}|p{2.8em}|p{2.4em}|p{1.2em}|p{2.8em}|p{1.2em}|p{1.2em}|p{1.7em}@{}}[]
\toprule
Articles & \!\!\cite{LEI20071413,WU20101635}\!\!\!\!\! & \!\!\cite{DEMENDIVILGRAU2024109108} & \!\!\cite{EPTCS386.6,FSS-D-25-00029,StanimirovicMC.22}\!\!\!\!\! & \!\!\cite{PEEVA20084152,CSIP.10,SCI.14,SCB.18}\!\!\!\!\! & \!\!\cite{MALIK1999323,CHENG2004439}\!\!\!\!\! & \!\!\cite{Belohlvek2009OnAM} & \!\!\cite{LI20071423,Halamish2015,YANG202172,ShamsizadehZG24}\!\!\!\!\! & \!\!\cite{LITFS2015} & \!\!\cite{YangL.19} & \!\!\cite{BASAK2002223} \\
\bottomrule
\FfAs over various residuated lattices & \checkmark & \checkmark & \checkmark$^{(**)}$ & \checkmark & & & & & & \\
\midrule
\FfAs over only finite residuated lattices & \checkmark$^{(*)}$ & & & & & & & & & \\
\midrule
\FfAs over only the G\"odel structure & & & & & \checkmark & \checkmark & \checkmark & \checkmark & \checkmark & \checkmark \\
\midrule
\FfAs of the Mealy type & \checkmark & & & & \checkmark & & & & & \\
\bottomrule
Finding minimal deterministic \FfAs & & \checkmark & & & & \checkmark & \checkmark & & & \\
\midrule
Finding minimal nondeterministic \FfAs & & & & & & & & \checkmark & & \\
\midrule
Reduction of nondeterministic \FfAs & \checkmark & & \checkmark & \checkmark & \checkmark & & & & \checkmark & \checkmark \\
\midrule
With approximation & & & \checkmark & & & \checkmark & & & \checkmark & \\
\bottomrule
\end{NiceTabular}
\caption{Characteristics of notable works on state reduction of \FfAs. (*):~The work~\cite{WU20101635} also studies \FfAs over infinite residuated lattices, but its main results on state reduction are provided only for \FfAs over finite residuated lattices. (**):~The work~\cite{StanimirovicMC.22} deals only with Heyting algebras.}\label{table: JHJSM}
\end{table}

This work focuses on approximate state reduction of nondeterministic \FfAs over linear and complete residuated lattices, including the product and Hamacher structures, which are not locally finite. To motivate this study, consider the state reduction problem for the \FfA $\mA$ over the alphabet $\Sigma = \{\sigma\}$ and the product structure, illustrated in Figure~\ref{fig: HEJSA} (and detailed in Example~\ref{example: HEJSA}). 
Existing methods from \cite{LEI20071413,PEEVA20084152,WU20101635,StanimirovicMC.22} are inapplicable in this case because they do not accommodate the product structure.\footnote{The main results of~\cite{PEEVA20084152} do not apply to ``max–product machines''.} As shown later (in Section~\ref{section: JHRJS}), the methods in~\cite{CSIP.10,SCI.14,SCB.18} fail to produce a reduced \FfA equivalent to~$\mA$. 
Moreover, the approximation approach of~\cite{EPTCS386.6,FSS-D-25-00029}, which relaxes the problem by considering equivalence only for words of a length bounded by a constant~$k$, is also ineffective. Specifically, for any $k \geq 4$, this approach does not reduce the state space of~$\mA$. Given these limitations, the central question arises: What new state reduction approaches can be applied to~$\mA$?

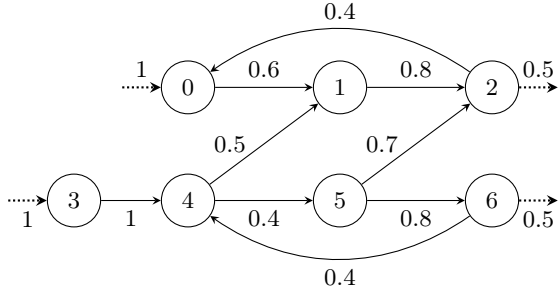
\begin{figure}[!h]
\begin{center}
\begin{tikzpicture}[->,>=stealth,auto,black]
\tikzset{every state/.style={inner sep=0.06cm,minimum size=0.7cm}}
\tikzstyle{every node}=[font=\footnotesize]
\node[state] (u) {$0$};
\node[state] (v) [node distance=2.0cm, right of=u] {$1$};
\node[state] (w) [node distance=2.0cm, right of=v] {$2$};
\node[state] (up) [node distance=1.5cm, below of=u] {$4$};
\node[state] (vp) [node distance=2.0cm, right of=up] {$5$};
\node[state] (wp) [node distance=2.0cm, right of=vp] {$6$};
\node[state] (tp) [node distance=1.5cm, left of=up] {$3$};
\node[left of=u] (in) {};
\node[right of=w] (out) {};
\node[left of=tp] (inp) {};
\node[right of=wp] (outp) {};
\draw[densely dotted, thick] (in) to node[above]{1} (u);
\draw[densely dotted, thick] (inp) to node[below]{1} (tp);
\draw[densely dotted, thick] (w) to node[above]{0.5} (out);
\draw[densely dotted, thick] (wp) to node[below]{0.5} (outp);
\draw (u) to node[above]{0.6} (v);
\draw (v) to node[above]{0.8} (w);
\draw (w) edge[above,out=145,in=35] node{0.4} (u);
\draw (up) to node[left,xshift=-3pt]{0.5} (v);
\draw (up) to node[below]{0.4} (vp);
\draw (vp) to node[left,xshift=-3pt]{0.7} (w);
\draw (vp) to node[below]{0.8} (wp);
\draw (wp) edge[below,out=-145,in=-35] node{0.4} (up);
\draw (tp) to node[below]{1} (up);
\end{tikzpicture}
\caption{An illustration of the fuzzy automaton $\mA$ discussed in the introduction section and used in Example~\ref{example: HEJSA}.\label{fig: HEJSA}}
\end{center}
\end{figure}

In this work, we introduce and study a novel form of approximate state reduction for nondeterministic \FfAs over linear and complete residuated lattices, which we call {\em soft state reduction}. Given an \FfA $\mA$, a small threshold $\varepsilon$ from the underlying residuated lattice and a natural number $k$, our goal is to construct a reduced \FfA $\mAp$ that is {\em $\varepsilon$-equivalent} (respectively, {\em $(\varepsilon,k)$-equivalent}) to $\mA$. This means that, for every word $w$ (respectively, every word $w$ of length at most~$k$) over the considered alphabet, either $\mAp$ and $\mA$ accept $w$ with the same degree, or both accept $w$ with degrees at most~$\varepsilon$. The key idea is that, when $\varepsilon$ is very small, fuzzy values below $\varepsilon$ may result from noise or numerical errors and can be  considered negligible. A significant technical advantage is that restricting the underlying residuated lattice to values greater than or equal to $\varepsilon$ can make it $\otimes$-locally finite, even if the original lattice is not, as in the case of the product and Hamacher structures. This approach is inspired by~\cite{MicicCMSN.24}, which explores approximate weak simulations and bisimulations for fuzzy automata over the product structure. Under this restriction, certain fundamental notions become computable in a finite number of steps.

Technically, we introduce and study the corresponding approximation of linear and complete residuated lattices in Section~\ref{sec:arl}. \LinhModified{The conditions of linearity and completeness are necessary not only to properly define such approximations, but also to establish the concepts introduced in the subsequent sections.} In Section~\ref{section: KJRKS}, we define and analyze {\em approximate invariances} on a fuzzy automaton, namely {\em right/left \mbox{$\varepsilon$-invariances}} and {\em right/left \mbox{$(\varepsilon,k)$-invariances}}. These generalize weakly right/left invariant fuzzy pre-orders (\FPOs)~\cite{SCI.14} and serve as fundamental tools for soft state reduction of \FfAs. Roughly speaking, the greatest right $\varepsilon$-invariance on a fuzzy automaton $\mA$ in the case $\varepsilon = 0$ (i.e., without approximation) corresponds to the greatest fuzzy weak auto-simulation of~$\mA$~\cite{BFFB}, which is greater than or equal to the greatest fuzzy auto-simulation of $\mA$~\cite{ijar/Nguyen.23}, which in turn is greater than or equal to the greatest fuzzy auto-bisimulation of $\mA$~\cite{ijar/Nguyen.23}. 
In Section~\ref{section: JHRJS}, we present how to use approximate invariances for soft state reduction of \FfAs. Given a fuzzy $\varepsilon$-pre-order ($\varepsilon$-\FPO) $Z$ on the set of states of $\mA$, where $\varepsilon$-\FPOs are a modified version of \FPOs, we define the {\em $(Z,\varepsilon)$-afterset fuzzy automaton} $\mA_{Z,\varepsilon}$ and prove that, if $Z$ is a right $\varepsilon$-invariance (respectively, right $(\varepsilon,k)$-invariance) on $\mA$, then $\mA_{Z,\varepsilon}$ is $\varepsilon$-equivalent (respectively, $(\varepsilon,k)$-equivalent) to $\mA$. The notion of a $(Z,\varepsilon)$-afterset fuzzy automaton in the case $\varepsilon = 0$ (i.e., without approximation) coincides with the notion of a $Z$-afterset fuzzy automaton introduced in~\cite{SCI.14}, which can be viewed as a form of ``quotient fuzzy automaton''. In Section~\ref{sec: alg}, we present our \SoftStateReduction algorithm, which iteratively applies soft state reduction based on right and left approximate invariances. The algorithm is guaranteed to terminate when either $k \in \NN$ is specified or the approximate residuated lattice (restricted by $\varepsilon$) is $\otimes$-locally finite. 

\LinhModified{The effectiveness of the proposed approach is demonstrated on several nontrivial examples. In particular, our algorithm achieves substantial reductions in cases where existing methods fail to produce any simplification. For instance,} applying the \SoftStateReduction algorithm to the \FfA shown in Figure~\ref{fig: HEJSA} with $\varepsilon = 0.1$ reduces the number of states from seven to five. As another example, consider the \FfA $\mA$ depicted in Figure~\ref{fig: KJRHW} (and detailed in Example~\ref{example: KJRHW}), which initially has 28 states. Applying our algorithm with the product structure and $\varepsilon = 0.01$ reduces $\mA$ to an \FfA with 19 states. Notably, for this \FfA, the approximation methods from~\cite{EPTCS386.6,FSS-D-25-00029} fail to produce a reduction for any $k \geq 14$, and the methods in~\cite{CSIP.10,SCI.14,SCB.18} do not yield a smaller \FfA (i.e., one with fewer than 28 states). Furthermore, the techniques from~\cite{MALIK1999323,BASAK2002223,CHENG2004439,LI20071423,Belohlvek2009OnAM,LITFS2015,Halamish2015,YangL.19,YANG202172,ShamsizadehZG24}, as well as those from~\cite{LEI20071413,PEEVA20084152,WU20101635,StanimirovicMC.22}, are inapplicable because they do not support \FfAs over the product structure. 
\LinhModified{These results illustrate that soft state reduction not only advances the theoretical framework but also leads to practically significant simplifications of fuzzy automata.}

The structure of this work is as follows: In addition to Sections~\ref{sec:arl}--\ref{sec: alg} discussed earlier, Section~\ref{sec:Preliminaries} introduces the basic definitions and notation, covering residuated lattices, fuzzy sets and relations, and fuzzy automata. Finally, Section~\ref{sec: conc} provides concluding remarks.

\section{Preliminaries} \label{sec:Preliminaries}

\subsection{Residuated lattices}

A {\em residuated lattice} \cite{B.02a,BV.05} is an algebra $\mL = \tuple{L, \leq, \land, \lor, \otimes, \to, 0, 1}$ such that:
\begin{itemize}
\item $\tuple{L, \leq, \land, \lor, 0, 1}$ is a lattice with the smallest element $0$ and the greatest element $1$,
\item $\tuple{L, \otimes, 1}$ is a commutative monoid with the unit $1$, 
\item the following (adjunction) property holds for all $x, y, z \in L$: 
\begin{equation}
x \otimes y \leq z \ \ \textrm{iff}\ \ x \leq (y \to z). 
\end{equation}
\end{itemize}
Such a residuated lattice is said to be {\em complete} (respectively, {\em linear}) if the lattice $\tuple{L, \leq, \land, \lor, 0, 1}$ is complete (respectively, linear). 
It is {\em $\otimes$-locally finite} if the monoid $\tuple{L, \otimes, 1}$ is locally finite (i.e., all of its finitely generated submonoids are finite). 
The operations $\otimes$ and $\to$ are called {\em multiplication} and {\em residuum}, respectively. 
Note that the operations $\land$, $\lor$ and $\otimes$ are associative, commutative and monotonically increasing with respect to both the arguments. On the other hand, the operation $\to$ is monotonically decreasing with respect to its first argument and monotonically increasing with respect to its second argument.
Given $A \subseteq L$, we write $\bigwedge\! A$ and $\bigvee\! A$ to denote the infimum and supremum of $A$, respectively. 
We also use the following notation:
\[ 
    \bigwedge_{i \in I} x_i \defeq \inf\{x_i \mid i \in I\},\qquad\qquad 
    \bigvee_{i \in I} x_i \defeq \sup \{x_i \mid i \in I\}. 
\]

The {\em product structure} and the {\em Hamacher structure} are the residuated lattices $\tuple{[0,1], \leq, \land, \lor, \otimes_{\!P}, \to_{\!P}, 0, 1}$ and $\tuple{[0,1], \leq, \land, \lor, \otimes_{\!H}, \to_{\!H}, 0, 1}$, respectively, which are the unit interval equipped with the usual order and the multiplication and residuum defined as follows:
\[ 
\begin{array}{ll}
x \otimes_{\!P} y \defeq xy, & 
(x \to_{\!P} y) \defeq \left\{\!\!\!
		\begin{array}{ll}
		1 & \!\textrm{if } x \leq y, \\ 
		y/x & \!\textrm{otherwise},
		\end{array}
        \right.
\\
\\
x \otimes_{\!H} y \defeq \left\{\!\!\!
		\begin{array}{ll}
		\frac{xy}{x+y-xy} & \!\textrm{if $x \neq 0$ or $y \neq 0$}, \\ 
		0 & \!\textrm{otherwise},
		\end{array}
        \right.
&
(x \to_{\!H} y) \defeq \left\{\!\!\!
		\begin{array}{ll}
		1 & \!\textrm{if } x \leq y, \\ 
		\frac{xy}{x-y+xy} & \!\textrm{otherwise}.
		\end{array}
        \right.
\end{array}
\]
These residuated lattices are linear and complete, but not $\otimes$-locally finite. 

From now on, if not stated otherwise, let $\mL = \tuple{L, \leq, \land, \lor, \otimes, \to, 0, 1}$ denote an arbitrary linear and complete residuated lattice. 

\subsection{Fuzzy sets and relations}

Let $X$ denote a non-empty set. A {\em fuzzy subset} of $X$ is a function $f: X \to L$ that assigns to each element of $X$ a degree of membership in the fuzzy subset. We refer to such a function as a {\em fuzzy set}. Given fuzzy sets $f, g: X \to L$, we write $f \leq g$ to denote that $f(x) \leq g(x)$ for all $x \in X$. 

A {\em fuzzy relation} on $X$ is a fuzzy subset of $X \times X$. The {\em inverse} of a fuzzy relation $r$ on $X$ is the fuzzy relation $\cnv{r}$ on $X$ defined by: $\cnv{r}(x,y) = r(y,x)$, for $x,y \in X$.
Given fuzzy subsets $f$ and $g$ of $X$ and fuzzy relations $r$ and $s$ on $X$, we define the {\em composition} operator $\circ$ between them as follows:
\[
\begin{array}{lrcl}
f \circ g \in L, & f \circ g & \!\!\defeq\!\! & \sup \{f(x) \otimes g(x) \mid x \in X\}, \\
f \circ r : X \to L, & (f \circ r)(x) & \!\!\defeq\!\!  & \sup \{f(y) \otimes r(y,x) \mid y \in X\}, \\
r \circ f : X \to L, & (r \circ f)(x) & \!\!\defeq\!\! & \sup \{r(x,y) \otimes f(y) \mid y \in X\}, \\
r \circ s : X \times X \to L\qquad & (r \circ s)(x,y) & \!\!\defeq\!\! & \sup \{r(x,z) \otimes s(z,y) \mid z \in X\}.
\end{array}
\]
Note that the operation $\circ$ is associative. 

The {\em identity relation} on $X$ is denoted by $id_X$ and defined by: $id_X(x,x) = 1$ and $id_X(x,y) = 0$ for $x \neq y$, with $x, y \in X$. 
A fuzzy relation $r$ on $X$ is {\em reflexive} if $id_X \leq r$, and {\em transitive} if $r \circ r \leq r$. A {\em fuzzy pre-order} (\FPO) on $X$ is a fuzzy relation on $X$ that is reflexive and transitive. 

\subsection{Fuzzy automata}

A {\em fuzzy automaton} (over $\mL$) is a structure $\mA = \tuple{\QA, \Sigma, \IA, \TA, \FA}$, where $\QA$ is a nonempty set of states, $\Sigma$ is a finite nonempty set, called the {\em alphabet}, $\IA$ (respectively, $\FA$) is a fuzzy subset of $\QA$, called the {\em fuzzy set of initial} (respectively, {\em final}) {\em states}, and $\TA: \QA \times \Sigma \times \QA \to L$ is the {\em fuzzy transition function}. If $Q$ is finite, then $\mA$ is called a {\em fuzzy finite automaton} (\FfA).

We denote the empty word by $\epsilon$. Given a word $w \in \Sigma^*$, we define the fuzzy relation $\TA_w$ on $\QA$ inductively as follows:
\begin{eqnarray*}
\TA_\epsilon & \defeq & id_\QA; \\
\TA_\sigma(p, q) & \defeq & \TA(p,\sigma,q),\quad \textrm{for $\sigma \in \Sigma$ and $p,q \in \QA$}; \\
\TA_{\sigma u} & \defeq & \TA_\sigma \circ \TA_u,\quad \textrm{for $\sigma \in \Sigma$ and $u \in \Sigma^*$}.
\end{eqnarray*}
Note that $\TA_{uv} = \TA_u \circ \TA_v$, for any $u,v \in \Sigma^*$. 
Given $w \in \Sigma^*$, we also define the fuzzy subsets $\IA_w$ and $\FA_w$ of $\QA$ as follows:
\[ 
    \IA_w \defeq \IA \circ \TA_w,\qquad
    \FA_w \defeq \TA_w \circ \FA. 
\]
In words, $\TA_w(p,q)$ means the degree of reaching the state $q$ from the state $p$ by taking the sequence $w$ of actions, $\IA_w(q)$ means the degree of reaching the state $q$ from some initial state by taking the sequence $w$ of actions, and $\FA_w(q)$ means the degree of reaching some final state from $q$ taking the sequence $w$ of actions. 

We say that a state $q$ is {\em reachable} if there exists a word $w \in \Sigma^*$ such that $\IA_w(q) > 0$. Dually, $q$ is {\em productive} if there exists $w \in \Sigma^*$ such that $\FA_w(q) > 0$.

The \emph{fuzzy language recognized by $\mA$}, denoted by $\bL(\mA)$, is the fuzzy subset of $\Sigma^*$ defined as follows:
\[
    \bL(\mA)(w) = \IA \circ \TA_w \circ \FA = \IA_w \circ \FA = \IA \circ \FA_w,
\]
for $w \in \Sigma^*$. 
Given $k \in \NN$, the \emph{fuzzy language of words recognized by $\mA$ of length bounded by $k$}, denoted by $\bLdb(\mA)$, is the restriction of $\bL(\mA)$ to words of length bounded by $k$. That is, 
\begin{equation*}
    \bLdb(\mA)(w) = \begin{cases}
        \bL(\mA)(w) & \textrm{if}\ |w| \leq k, \\
        0 & \textrm{otherwise}.
    \end{cases}
\end{equation*}

The {\em reverse} of $\mA$ is the fuzzy automaton $\mA^{-1} = \tuple{\QA, \Sigma, \IA', \TA', \FA'}$ with $\IA' = \FA$, $\FA' = \IA$, and $\TA'_\sigma = \cnv{\TA}_\sigma$, for every $\sigma \in \Sigma$.

\section{Approximate residuated lattices} \label{sec:arl}

Recall that $\mL = \tuple{L, \leq, \land, \lor, \otimes, \to, 0, 1}$ denotes an arbitrary linear and complete residuated lattice. 
In this section, for a given $\varepsilon \in L$, we define $\lee$, $\lande$, $\lore$, $\otimese$ and $\toe$ so that $\mLe = \tuple{L, \lee, \lande, \lore, \otimese, \toe, 0, 1}$ is an approximation of~$\mL$, in the sense that all values of $L$ between 0 and $\varepsilon$ are treated as $\varepsilon$. The use of $\mLe$ is justifiable when $\varepsilon$ is a ``small'' value. A potential benefit of using $\mLe$ with $\varepsilon > 0$ is that it may be $\otimes$-locally finite when $\mL$ is not, as in the case of the product and Hamacher structures \LinhModified{(cf. Proposition~\ref{prop:productHamTNorm} for more details)}. We also provide related notions and their properties, which are needed for the next sections.  

In what follows, let $\varepsilon, x, y \in L$. 
We define that
\begin{itemize}
\item $x \lee y$ if $x \leq y$ or $x \leq \varepsilon$, 
\item $x \eqe y$ if $x \lee y$ and $y \lee x$.
\end{itemize}
Thus, $x \eqe y$ iff $x = y$ or ($x \leq \varepsilon$ and $y \leq \varepsilon$).
It is easily seen that $\lee$ is a pre-order and $\eqe$ is an equivalence relation.

We define
\[ x \lande y \defeq \left\{\!\!\!
		\begin{array}{ll}
		x \land y & \!\textrm{if } x \land y > \varepsilon, \\ 
		\varepsilon & \!\textrm{otherwise};
		\end{array}
        \right.
   \qquad\qquad
   x \lore y \defeq x \lor y \lor \varepsilon.
\]
Observe that these operations are associative, commutative and monotonically increasing with respect to both $\leq$ and $\lee$ and both the arguments. Furthermore, it is straightforward to check that 
\[
    x \lande y \eqe x \quad \textrm{iff}\quad x \lee y \quad \textrm{iff}\quad x \lore y \eqe y.
\]
Given $A \subseteq L$, we define 
\[ \bigwedgee\! A \defeq \left\{\!\!\!
		\begin{array}{ll}
		\bigwedge\! A & \!\textrm{if } \bigwedge\! A > \varepsilon, \\ 
		\varepsilon & \!\textrm{otherwise};
		\end{array}
        \right.
   \qquad\qquad
   \bigveee\! A \defeq \bigvee (A \cup \{\varepsilon\}).
\]

We also define
\begin{eqnarray}
x \otimese y & \defeq & 
        \left\{\!\!\!
		\begin{array}{ll}
		x \otimes y & \!\textrm{if } x \otimes y > \varepsilon, \\ 
		\varepsilon & \!\textrm{otherwise};
		\end{array}
        \right. \nonumber \\
(x \toe y) & \defeq & (x \lor \varepsilon \to y \lor \varepsilon). \label{eq: JFANM}
\end{eqnarray}
Observe that $\varepsilon \leq (x \toe y)$ and, if $x \lee y$, then $(x \toe y) = 1$.  
Also observe that the $\otimese$ operation is associative, commutative and monotonically increasing with respect to both $\leq$ and $\lee$ and both the arguments. 
On the other hand, the $\toe$ operation is monotonically decreasing with respect to its first argument and monotonically increasing with respect to its second argument, when using either $\leq$ or $\lee$ for comparison. 
The following lemma states that the $\otimese$ and $\toe$ operations satisfy the adjunction property as in residuated lattices.

\begin{lemma}\label{lemma: JRBAK}
For any $\varepsilon, x, y, z \in L$, 
\begin{equation}
x \lee (y \toe z) \quad\textrm{iff}\quad x \otimese y \lee z.
\end{equation}
\end{lemma}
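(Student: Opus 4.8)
The plan is to reduce the $\varepsilon$-adjunction to the ordinary adjunction of $\mL$ by a short case analysis, using linearity at the one essential point. First I would dispose of the degenerate case $x \leq \varepsilon$. Here the left-hand side holds automatically, since $x \lee w$ for \emph{any} $w$ once $x \leq \varepsilon$. For the right-hand side, monotonicity of $\otimes$ gives $x \otimes y \leq \varepsilon \otimes 1 = \varepsilon$, so by definition $x \otimese y = \varepsilon$ and hence $x \otimese y \lee z$ as well. Thus both sides hold and the equivalence is trivially true in this case. By linearity of $\mL$ I may therefore assume from now on that $x > \varepsilon$, which means $x \lor \varepsilon = x$.

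Under this assumption, $x \lee (y \toe z)$ reduces to $x \leq (y \toe z)$, which by the definition $(y \toe z) = (y \lor \varepsilon \to z \lor \varepsilon)$ and the ordinary adjunction of $\mL$ is equivalent to $x \otimes (y \lor \varepsilon) \leq z \lor \varepsilon$. The next step is to simplify the product $x \otimes (y \lor \varepsilon)$. Invoking linearity again, $y \lor \varepsilon$ equals either $y$ (when $y \geq \varepsilon$) or $\varepsilon$ (when $y \leq \varepsilon$); in the latter situation $x \otimes (y \lor \varepsilon) = x \otimes \varepsilon \leq \varepsilon$ by monotonicity. I would then split on the sign of $x \otimes y$ relative to $\varepsilon$, matching the case split in the definition of $\otimese$. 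When $x \otimes y \leq \varepsilon$, the quantity $x \otimes (y \lor \varepsilon)$ collapses to something $\leq \varepsilon \leq z \lor \varepsilon$, so the left side holds; simultaneously $x \otimese y = \varepsilon$, so the right side holds too, and again both are true. When $x \otimes y > \varepsilon$, one checks (using $y \geq \varepsilon$, which must hold here, lest $x \otimes y \leq x \otimes \varepsilon \leq \varepsilon$) that $x \otimes (y \lor \varepsilon) = x \otimes y = x \otimese y$, so the two sides of the claimed equivalence become $x \otimes y \leq z \lor \varepsilon$ and $x \otimes y \leq z$, respectively.

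The only genuine content is the equivalence of these last two inequalities, and this is precisely where linearity is indispensable: I expect the backward implication—deducing $x \otimes y \leq z$ from $x \otimes y \leq z \lor \varepsilon$ under the hypothesis $x \otimes y > \varepsilon$—to be the crux, as it is the single place where one cannot argue purely formally. By linearity, $z \lor \varepsilon$ is either $z$ or $\varepsilon$; since $x \otimes y > \varepsilon$, it cannot be bounded by $\varepsilon$, so $z \lor \varepsilon$ must equal $z$, giving $x \otimes y \leq z$. The forward direction is immediate. Throughout I would keep in mind that $\otimese$ lies above $\varepsilon$ by construction, which is exactly what lets $x \otimese y \lee z$ reduce cleanly to $x \otimese y \leq z$ in the nondegenerate branch. (An alternative to the linearity argument for the product is to use that $\otimes$ distributes over finite joins, writing $x \otimes (y \lor \varepsilon) = (x \otimes y) \lor (x \otimes \varepsilon)$; but since the lemma already assumes a linear lattice, the case split above is the most economical route.)
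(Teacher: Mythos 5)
Your proof is correct and follows essentially the same route as the paper's: both reduce the $\varepsilon$-adjunction to the ordinary adjunction of $\mL$ via a case analysis on whether $x$, $y$, $z$ and $x \otimes y$ exceed $\varepsilon$, invoking linearity at the same crucial points (in particular, to conclude $z \lor \varepsilon = z$ when $x \otimes y > \varepsilon$). The only difference is organizational --- you prove the biconditional in one pass by reducing both sides to a common inequality, while the paper argues the two implications separately --- which does not change the substance of the argument.
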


\begin{proof}
Consider the left-to-right implication and suppose $x \lee (y \toe z)$. We have to prove that $x \otimese y \lee z$. The case $x \otimes y \leq \varepsilon$ is straightforward. So, suppose $x \otimes y > \varepsilon$. We have $x,y \geq x \otimes y$, hence $x,y > \varepsilon$. Since $x \lee (y \toe z)$, it follows that $x \leq (y \to z \lor \varepsilon)$. Consequently, $\varepsilon < x \otimes y \leq z \lor \varepsilon$ and therefore $x \otimese y \lee z$. 

Consider the right-to-left implication and suppose $x \otimese y \lee z$. We have to prove that $x \lee (y \toe z)$. The cases $x \leq \varepsilon$ or $y \leq z$ or $z < y \leq \varepsilon$ are trivial. So, suppose that $x,y > \varepsilon$ and $y > z$. Thus, we have to prove that $x \leq (y \to z \lor \varepsilon)$, or equivalently, $x \otimes y \leq z \lor \varepsilon$. The case $x \otimes y \leq \varepsilon$ is trivial. So, suppose $x \otimes y > \varepsilon$. Since $x \otimese y \lee z$, it follows that $\varepsilon < x \otimes y \lee z$. Hence, $x \otimes y \leq z \leq z \lor \varepsilon$, which completes the proof.
\end{proof}

{\markLinhModified
\begin{proposition}\label{prop:productHamTNorm}
Let $\varepsilon \in (0,1)$. If $\mL$ is the product or Hamacher structure, then $\mLe$ is $\otimes$-locally finite.\footnote{\markLinhModified This can be generalized for all structures from the Hamacher family, where $\otimes$ is a t-norm defined as follows, using any parameter $\lambda \geq 0$: 
\[
x \otimes y =
\begin{cases}
0, & \text{if $x = y = 0$ and $\lambda = 0$,} \\
\frac{x y}{\lambda + (1 - \lambda)(x + y - x y)}, & \text{otherwise.}
\end{cases}
\]
}
\end{proposition}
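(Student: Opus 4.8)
The plan is to prove the statement directly from the definition of local finiteness: I fix generators $g_1,\dots,g_n \in L = [0,1]$, let $M$ be the submonoid of $\tuple{L,\otimese,1}$ they generate, and show $M$ is finite. The first step is to compute iterated $\otimese$-products explicitly. Two facts drive this. First, $\varepsilon$ is absorbing for $\otimese$: since in any residuated lattice $x \otimes y \le x \land y$, we have $\varepsilon \otimes x \le \varepsilon$ for all $x$, hence $\varepsilon \otimese x = \varepsilon$. Second, a genuine $\otimes$-product is non-increasing in its number of factors, so for a sequence of generators every prefix $\otimes$-product dominates the full one. Writing $P = g_{i_1}\otimes\cdots\otimes g_{i_m}$ for the unclamped product, I then check by induction on $m$ that for $m\ge 2$ the $\otimese$-product equals $P$ when $P>\varepsilon$ and equals $\varepsilon$ otherwise: while the running prefix stays above $\varepsilon$ no clamping occurs, and as soon as it drops to $\le\varepsilon$ the value is set to $\varepsilon$ and, by absorption, stays there. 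Consequently every element of $M$ is either $\varepsilon$, or one of the finitely many generators $g_i$ (the length-one products), or a genuine $\otimes$-product of generators (including the empty product $1$) that exceeds $\varepsilon$.

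It therefore suffices to show that only finitely many genuine $\otimes$-products of $g_1,\dots,g_n$ exceed $\varepsilon$. Here I use that both structures admit a continuous, strictly decreasing additive generator $f\colon[0,1]\to[0,\infty]$ with $f(1)=0$ and $f(0)=\infty$: concretely $f(x)=-\ln x$ for the product structure and $f(x)=(1-x)/x$ for the Hamacher structure, with inverses $f^{-1}(u)=e^{-u}$ and $f^{-1}(u)=1/(1+u)$. A direct computation (which is self-contained, so the general Archimedean representation theorem need not be invoked) verifies $g_{i_1}\otimes\cdots\otimes g_{i_m}=f^{-1}\bigl(f(g_{i_1})+\cdots+f(g_{i_m})\bigr)$, and factors equal to $1$ contribute $0$, so they never change a product's value. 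Let $c>0$ be the minimum of $f(g_i)$ over the generators with $g_i\ne 1$ (if there are none then $M=\{1\}$ and we are done). A product using $t$ non-unit factors has generator-sum at least $tc$, hence value at most $f^{-1}(tc)$; since $f^{-1}$ is strictly decreasing, such a product exceeds $\varepsilon$ only if $tc<f(\varepsilon)$. Because $\varepsilon\in(0,1)$ the quantity $f(\varepsilon)$ is finite and positive, so $t$ is bounded by a finite $T=\lfloor f(\varepsilon)/c\rfloor$. As a product's value depends only on the multiset of non-unit generators used, and there are finitely many multisets of size at most $T$ over the finite set $\{g_1,\dots,g_n\}$, only finitely many distinct products exceed $\varepsilon$. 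Combined with the previous paragraph, $M$ is finite, so $\mLe$ is $\otimes$-locally finite.

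I expect the second step to be the main obstacle. For the product structure the decay of long products is the elementary geometric estimate $P\le g_{\max}^{\,t}\to 0$, but for the Hamacher structure (and for the full Hamacher family of the footnote) iterated products have no transparent closed form, and the difficulty is precisely ruling out that arbitrarily long products remain above $\varepsilon$. The additive-generator representation is the tool that tames this: it linearises the $m$-fold product into a sum and reduces the decay to the trivial divergence $tc\to\infty$, treating both structures — and, with the corresponding $\lambda$-generator, the whole Hamacher family — uniformly. Some care with degenerate values is still required, namely generators equal to $1$ (discarded as identities) and generators that are $\le\varepsilon$ or $0$ (which contribute only length-one elements or force the clamped value $\varepsilon$), but these affect only finitely many elements and do not disturb the finiteness count.
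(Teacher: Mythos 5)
Your proof is correct, but it takes a genuinely different route from the paper's. The paper argues via the squaring map $h(x) = x \otimes x$ on $(0,1]$: since $\otimes$ is monotone, local finiteness of $\mLe$ follows once one shows that for every $d \in (0,1)$ the iterates $h^n(x)$ fall below $\varepsilon$ uniformly on $(0,d]$; for the Hamacher structure this is the elementary estimate $h(x) = x/(2-x) \leq cx$ with $c = 1/(2-d) \in (0.5,1)$, whence $h^n(x) \leq c^n d \leq \varepsilon$ for $n \geq \lceil \log_c(\varepsilon/d) \rceil$, and the product case is subsumed because the product t-norm is dominated pointwise by the Hamacher t-norm (as $(1-x)(1-y) \geq 0$ gives $xy \leq \frac{xy}{x+y-xy}$), so its long products decay at least as fast. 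You instead linearise $m$-fold products through additive generators ($-\ln x$ for the product structure, $(1-x)/x$ for Hamacher) and bound the number $t$ of non-unit factors a product can carry while staying above $\varepsilon$ by $t < f(\varepsilon)/c$, then count multisets. Both arguments are sound; the paper's is shorter and avoids representation machinery, but it leaves implicit exactly the bookkeeping you spell out — that $\varepsilon$ is absorbing for $\otimese$, that a clamped product equals the genuine product whenever the latter exceeds $\varepsilon$, and that only boundedly many short products (hence finitely many values) remain unclamped. Your route buys uniformity and generality: the same one-line computation with the $\lambda$-generator $\ln\bigl((\lambda+(1-\lambda)x)/x\bigr)$ (reducing to $(1-x)/x$ at $\lambda = 0$) proves the footnote's extension to the entire Hamacher family in one stroke, whereas the paper's domination trick would require a separate comparison for each $\lambda$. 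One point to make explicit when writing this up: the identity $g_{i_1}\otimes\cdots\otimes g_{i_m} = f^{-1}\bigl(f(g_{i_1})+\cdots+f(g_{i_m})\bigr)$ needs the conventions $f(0)=\infty$ and $f^{-1}(\infty)=0$ to cover zero generators (matching the special case in the definition of $\otimes_{\!H}$), which your degenerate-case discussion does handle correctly.
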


\begin{proof}
Suppose that $\mL$ is the product or Hamacher structure. 
Define $f(x) = x \otimes x$, for $x \in (0,1]$. Since $\otimes$ is monotonically increasing, to prove that $\mLe$ is $\otimes$-locally finite, it suffices to show that, for every $d \in (0,1)$, there exists $n_d$ such that $f^n(x) \leq \varepsilon$ for all $n \geq n_d$ ($n \in \NN$) and $x \in (0,d]$. In addition, it is sufficient to show this assertion only for the case where $\mL$ is the Hamacher structure. Consider this case and any $0 < x \leq d < 1$. We have
\[
    f(x) = \frac{x^2}{2x - x^2} = \frac{x}{2-x} \leq \frac{x}{2 - d} = cx,\ \ \textrm{where}\ c = \frac{1}{2 - d} \in (0.5, 1).
\]
By induction, $f^n(x) \leq c^n x \leq c^n d$. 
Taking $n_d = \left\lceil\log_c(\varepsilon/d)\right\rceil$ and any $n \geq n_d$, we have
\[
    f^n(x) \leq c^n d \leq c^{n_d} d \leq \varepsilon,
\]
which completes the proof.
\end{proof}
}

In what follows, let $f$ and $g$ be fuzzy subsets of $X$ and let $r$ and $s$ be fuzzy relations on $X$. 
We define the fuzzy subsets $f \lande g$, $f \lore g$, $f \circe r$ and $r \circe f$ of $X$, the value $f \circe g \in L$, and the fuzzy relations $r \circe s$, $f \setminuse g$ and $f \slashe g$ on $X$ as follows:
\begin{eqnarray*}
(f \lande g)(a) & \defeq & f(a) \lande g(a),\quad \textrm{for } a \in X; \\
(f \lore g)(a) & \defeq & f(a) \lore g(a),\quad \textrm{for } a \in X; \\
f \circe g & \defeq & \bigveee \{f(a) \otimese g(a) \mid a \in X \}; \\
(f \circe r)(a) & \defeq & \bigveee \{f(b) \otimese r(b,a) \mid b \in X \},\quad \textrm{for } a \in X; \\
(r \circe f)(a) & \defeq & \bigveee \{r(a,b) \otimese f(b) \mid b \in X \},\quad \textrm{for } a \in X; \\
(r \circe s)(a,b) & \defeq & \bigveee \{r(a,c) \otimese s(c,b) \mid c \in X \},\quad \textrm{for } a,b \in X; \\
(f \setminuse g)(a,b) & \defeq & (f(a) \toe g(b)),\quad \textrm{for } a,b \in X; \\
(f \slashe g)(a,b) & \defeq & (g(b) \toe f(a)),\quad \textrm{for } a,b \in X.
\end{eqnarray*}
Given a family $A$ of fuzzy subsets of $X$, we define the fuzzy subsets $\bigwedgee\!A$ and $\bigveee\!A$ of $X$ as follows:
\[
    (\bigwedgee\!A)(a) \defeq \bigwedgee \{f(a) \mid f \in A\}\quad \textrm{and}\quad
    (\bigveee\!A)(a) \defeq \bigveee \{f(a) \mid f \in A\},\quad \textrm{for } a \in X.
\]
We also define that
\begin{itemize}
\item $f \lee g$\ \ if\ \ $f(a) \lee g(a)$ for all $a \in X$,
\item $f \eqe g$\ \ if\ \ $f(a) \eqe g(a)$ for all $a \in X$. 
\end{itemize}

Clearly, $\lee$ is a pre-order and $\eqe$ is an equivalence relation on the family of fuzzy subsets of $X$. 
In addition, the operations $\lande$ and $\lore$ are associative and commutative, while $\circe$ is associative. All of these three operations (on the family of fuzzy sets/relations) are monotonically increasing with respect to both $\leq$ and $\lee$ and both the arguments. 

The {\em $\varepsilon$-truncation} of $f$ is the fuzzy subset $f_\varepsilon$ of $X$ defined as follows (cf.~\cite{MicicCMSN.24}):
\[ f_\varepsilon(a) \defeq \left\{\!\!\!
		\begin{array}{ll}
		f(a) & \!\textrm{if } f(a) > \varepsilon, \\ 
		\varepsilon & \!\textrm{otherwise}.
		\end{array}
        \right.
\]
Note that $f_\varepsilon = f$ for $\varepsilon = 0$. 
Furthermore, it can be checked that (cf.~\cite[Proposition~3.4]{MicicCMSN.24}):
\begin{itemize}
\item $f \leq f_\varepsilon$ and $(f_\varepsilon)_\varepsilon = f_\varepsilon$;
\item if $f \leq g$, then $f_\varepsilon \leq g_\varepsilon$;
\item if $\varepsilon \leq \varepsilon' \leq 1$, then $f_\varepsilon \leq f_{\varepsilon'}$ 
      and $(f_\varepsilon)_{\varepsilon'} = (f_{\varepsilon'})_\varepsilon = f_{\varepsilon'}$.
\end{itemize}

\begin{lemma}\label{lemma: HGDHO}
Let $f$ and $g$ be fuzzy subsets of $X$, and $r$, $r'$ and $s$ be fuzzy relations on $X$. Then, 
\begin{eqnarray}
& & f_{\!\varepsilon} \eqe f;\label{eq:HGDHO 1}\\  
& & 
f \circ\, g \eqe f \circe g,\quad
f \circ\, r \eqe f \circe r,\quad
r \,\circ f \eqe r \circe f,\quad
r \,\circ\, s \eqe r \circe s;\label{eq:HGDHO 2}\\
& & \textrm{if $f \eqe g$, then $f \circe r \eqe g \circe r$ and $r \circe f \eqe r \circe g$;}\label{eq:HGDHO 3}\\
& & \textrm{if $r \eqe s$, then $f \circe r \eqe f \circe s$ and $r \circe f \eqe s \circe f$;}\label{eq:HGDHO 4}\\
& & \textrm{if $r \eqe r'$, then $r \circe s \eqe r' \circe s$ and $s \circe r \eqe s \circe r'$}\label{eq:HGDHO 5}\\
& & f \slashe g \eqe f_{\!\varepsilon} \,\slashe g_\varepsilon;\label{eq:HGDHO 6}\\  
& & r \circ f \lee g \ \textrm{ iff }\ r \circe f \lee g \ \textrm{ iff }\ r \lee g \,\slashe f;\label{eq:HGDHO 7}\\ 
& & f \circ r \lee g \ \textrm{ iff }\ f \circe r \lee g \ \textrm{ iff }\ r \lee f \setminuse g.\label{eq:HGDHO 8}
\end{eqnarray}
\end{lemma}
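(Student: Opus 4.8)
The plan is to treat the eight claims in four groups, relying throughout on two preliminary facts about how $\lee$ and $\eqe$ interact with suprema; this is where the linearity and completeness of $\mL$ are essential. First I would record that, for families $\{a_i\}_{i\in I}$ and $\{b_i\}_{i\in I}$ with $a_i \lee b_i$ for all $i$, one has $\bigvee_i a_i \lee \bigvee_i b_i$: each $a_i \leq b_i \lor \varepsilon$, so $\bigvee_i a_i \leq (\bigvee_i b_i)\lor\varepsilon$, and since linearity makes $\lor$ the maximum, a case split on whether $\bigvee_i b_i \geq \varepsilon$ yields $\bigvee_i a_i \lee \bigvee_i b_i$. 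Consequently $a_i \eqe b_i$ for all $i$ gives $\bigvee_i a_i \eqe \bigvee_i b_i$, and since $x \lor \varepsilon \eqe x$ the $\varepsilon$-supremum satisfies $\bigveee S \eqe \bigvee S$. Second, I would establish the supremum characterization: $\bigveee_i c_i \lee d$ iff $c_i \lee d$ for all $i$. The forward direction follows from $c_i \leq \bigveee_i c_i$ and transitivity of $\lee$; the backward direction from $c_i \leq d\lor\varepsilon$, hence $\bigveee_i c_i \leq d\lor\varepsilon$, together with the same linearity case split on whether $d \geq \varepsilon$.

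The pointwise identities \eqref{eq:HGDHO 1} and \eqref{eq:HGDHO 6} require no suprema. For \eqref{eq:HGDHO 1}, at each $a$ either $f(a)>\varepsilon$, where $f_\varepsilon(a)=f(a)$, or $f(a)\leq\varepsilon$, where both values are $\leq\varepsilon$; either way $f_\varepsilon(a)\eqe f(a)$. For \eqref{eq:HGDHO 6}, the key observation is that $g_\varepsilon(b)\lor\varepsilon = g(b)\lor\varepsilon$ and $f_\varepsilon(a)\lor\varepsilon = f(a)\lor\varepsilon$ for all $a,b$ (checking the two cases $>\varepsilon$ and $\leq\varepsilon$), so by the definition $(f\slashe g)(a,b)=(g(b)\lor\varepsilon\to f(a)\lor\varepsilon)$ the two relations are in fact literally equal, hence \emph{a fortiori} $\eqe$.

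For the four $\eqe$-identities in \eqref{eq:HGDHO 2}, I would argue uniformly: at each point the $\varepsilon$-product satisfies $x\otimese y\eqe x\otimes y$, so the families of summands are pointwise $\eqe$; the first preliminary fact pushes $\eqe$ through the supremum, and the extra $\varepsilon$ built into $\bigveee$ is absorbed by $\bigveee S\eqe\bigvee S$. The substitution properties \eqref{eq:HGDHO 3}, \eqref{eq:HGDHO 4} and \eqref{eq:HGDHO 5} are then immediate from the already-established monotonicity of $\circe$ with respect to $\lee$ in both arguments: if two inputs are $\eqe$, i.e. mutually $\lee$, monotonicity yields the two $\lee$-inequalities between the corresponding outputs, hence $\eqe$.

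Finally, for the adjunction equivalences \eqref{eq:HGDHO 7} and \eqref{eq:HGDHO 8} I would chain three steps. The equivalence of $r\circ f\lee g$ with $r\circe f\lee g$ follows from $r\circ f\eqe r\circe f$ (part \eqref{eq:HGDHO 2}) together with the fact that $\eqe$-equal fuzzy sets bear identical $\lee$-relations to any fixed $g$ (by transitivity of $\lee$). The remaining equivalence with $r\lee g\slashe f$ unfolds pointwise: the supremum characterization reduces $r\circe f\lee g$ to the condition $r(a,b)\otimese f(b)\lee g(a)$ for all $a,b$, and Lemma~\ref{lemma: JRBAK} rewrites each such inequality as $r(a,b)\lee (f(b)\toe g(a)) = (g\slashe f)(a,b)$; \eqref{eq:HGDHO 8} is handled symmetrically via $f\setminuse g$ and the commutativity of $\otimese$. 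I expect the main obstacle to lie entirely in the preliminary supremum lemmas, namely getting the $\lee$/$\eqe$ bookkeeping around $\bigvee$ and $\bigveee$ exactly right, since these depend essentially on linearity (so that $\lor$ is the maximum and the case split ``$\leq d$ or $\leq\varepsilon$'' closes) and on completeness (so that the suprema exist); once they are in place, every individual claim reduces to a short pointwise computation or to Lemma~\ref{lemma: JRBAK}.
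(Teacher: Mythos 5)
Your proof is correct and takes essentially the same route as the paper: the same pointwise reductions for \eqref{eq:HGDHO 1}, \eqref{eq:HGDHO 2} and \eqref{eq:HGDHO 6}, and the same use of Lemma~\ref{lemma: JRBAK} plus the supremum characterization for \eqref{eq:HGDHO 7}--\eqref{eq:HGDHO 8}; your preliminary lemmas merely make explicit the $\bigveee$/$\lee$ bookkeeping that the paper states as unproved observations. The only (cosmetic) divergence is in \eqref{eq:HGDHO 3}--\eqref{eq:HGDHO 5}, where you invoke $\lee$-monotonicity of $\circe$ in both arguments, while the paper argues directly that the corresponding summands are \emph{literally} equal because $\otimese$ collapses all values $\leq \varepsilon$ to $\varepsilon$, so the two suprema coincide.
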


\begin{proof}
The assertion~\eqref{eq:HGDHO 1} clearly holds. 

Consider the assertion~\eqref{eq:HGDHO 2}. We prove only the equation \mbox{$r \,\circ\, s \eqe r \circe s$} and leave the remaining three to the reader. Clearly, \mbox{$r \,\circ\, s \lee r \circe s$}. For the converse, consider arbitrary $a,b \in X$. We need to show that \mbox{$(r \circe s)(a,b) \lee (r \,\circ\, s)(a,b)$}. This holds because \mbox{$r(a,c) \otimese s(c,b) \lee r(a,c) \otimes s(a,b)$} for any $c \in X$. 

Concerning the assertions~\eqref{eq:HGDHO 3}-\eqref{eq:HGDHO 5}, we prove only that, if $f \eqe g$, then \mbox{$f \circe r \eqe g \circe r$}, and leave the remaining to the reader. Suppose $f \eqe g$ and consider an arbitrary $a \in X$. We need to prove that \mbox{$(f \circe r)(a) \eqe (g \circe r)(a)$}. It suffices to show that \mbox{$f(b) \otimese r(b,a) = g(b) \otimese r(b,a)$} for any $b \in X$. This is trivial for the case $f(b) = g(b)$. For the other case, since $f \eqe g$, we have $f(b) \leq \varepsilon$ and $g(b) \leq \varepsilon$, which implies \mbox{$f(b) \otimese r(b,a) = \varepsilon = g(b) \otimese r(b,a)$}.

Consider the assertion~\eqref{eq:HGDHO 6}. Let $a$ and $b$ be arbitrary elements of $X$. We need to prove that 
\[ (f \slashe g)(a,b) \eqe (f_{\!\varepsilon} \,\slashe g_\varepsilon)(a,b), \]
or equivalently, 
\[ (g(b) \toe f(a)) \eqe (g_\varepsilon(b) \toe f_{\!\varepsilon}(a)). \]
This latter holds due to~\eqref{eq: JFANM}.

Consider the assertion~\eqref{eq:HGDHO 7}. 
By~\eqref{eq:HGDHO 3}, \mbox{$r \circ f \lee g$} iff \mbox{$r \circe f \lee g$}. 
For the statement ``\mbox{$r \circe f \lee g$} implies \mbox{$r \lee g \,\slashe f$}'', suppose $r \circe f \lee g$ and consider arbitrary $a,b \in X$. We need to show that \mbox{$r(a,b) \lee (f(b) \toe g(a))$}. By definition, \mbox{$r(a,b) \otimese f(b) \leq (r \circe f)(a)$}. Since $r \circe f \lee g$, it follows that \mbox{$r(a,b) \otimese f(b) \lee g(a)$}. By Lemma~\ref{lemma: JRBAK}, we can further derive \mbox{$r(a,b) \lee (f(b) \toe g(a))$}. Now, consider the statement ``\mbox{$r \lee g \,\slashe f$} implies \mbox{$r \circe f \lee g$}'' and suppose that \mbox{$r \lee g \,\slashe f$}. Let $a$ be an arbitrary element of $X$. We need to prove that $(r \circe f)(a) \lee g(a)$. It suffices to show that \mbox{$r(a,b) \otimese f(b) \lee g(a)$} for any $b \in X$. Since \mbox{$r \lee g \,\slashe f$}, we have \mbox{$r(a,b) \lee (f(b) \toe g(a))$}, and by Lemma~\ref{lemma: JRBAK}, we can further derive \mbox{$r(a,b) \otimese f(b) \lee g(a)$}. 

The assertion~\eqref{eq:HGDHO 8} can be proved analogously as for~\eqref{eq:HGDHO 7}. 
\end{proof}

Given a fuzzy relation $r$ on $X$, we say that $r$ is {\em $\varepsilon$-transitive} if $r \circe r \lee r$. Note that, if $r$ is reflexive and $\varepsilon$-transitive, then \mbox{$r = r \circ id_{\!X} \leq r \circ r \leq r \circe r$} and therefore \mbox{$r \circe r \eqe r$}. We call $r$ a {\em fuzzy $\varepsilon$-pre-order} ($\varepsilon$-\FPO) on $X$ if it is reflexive, \mbox{$\varepsilon$-transitive} and equal to~$r_\varepsilon$. 

\begin{proposition}\label{prop: JHFLW}
Let $r$ be a fuzzy relation on $X$ and let $\varepsilon, \varepsilon' \in L$. 
\begin{enumerate}[(a)]
\item\label{item: JHFLW 1} If $r$ is reflexive and $\varepsilon$-transitive, then $r_\varepsilon$ is an $\varepsilon$-\FPO on~$X$.
\item\label{item: JHFLW 2} If $r$ is an \FPO on $X$, then $r_\varepsilon$ is an $\varepsilon$-\FPO on~$X$. 
\item\label{item: JHFLW 3} If $r$ is an $\varepsilon'\!$-\FPO on~$X$ and $\varepsilon' < \varepsilon$, then $r_\varepsilon$ is an $\varepsilon$-\FPO on~$X$.
\end{enumerate}
\end{proposition}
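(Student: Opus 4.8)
The plan is to prove part~\eqref{item: JHFLW 1} directly from the definition of an $\varepsilon$-\FPO, and then obtain parts~\eqref{item: JHFLW 2} and~\eqref{item: JHFLW 3} as consequences by reducing each of their hypotheses to ``reflexive and $\varepsilon$-transitive'', so that part~\eqref{item: JHFLW 1} applies verbatim. Recall that showing $r_\varepsilon$ is an $\varepsilon$-\FPO means verifying three things: that $r_\varepsilon$ is reflexive ($id_X \leq r_\varepsilon$), that $r_\varepsilon$ is $\varepsilon$-transitive ($r_\varepsilon \circe r_\varepsilon \lee r_\varepsilon$), and that $r_\varepsilon = (r_\varepsilon)_\varepsilon$.

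For part~\eqref{item: JHFLW 1}, reflexivity and truncation-idempotence are immediate: from $r$ reflexive together with $r \leq r_\varepsilon$ one gets $id_X \leq r \leq r_\varepsilon$, while $(r_\varepsilon)_\varepsilon = r_\varepsilon$ is one of the listed properties of $\varepsilon$-truncation. The crux is $\varepsilon$-transitivity. Here I would use $r_\varepsilon \eqe r$ from~\eqref{eq:HGDHO 1} together with the fact that $\circe$ respects $\eqe$ in both arguments (assertion~\eqref{eq:HGDHO 5}) to conclude $r_\varepsilon \circe r_\varepsilon \eqe r \circe r$. Applying the hypothesis $r \circe r \lee r$, then $r \leq r_\varepsilon$ (hence $r \lee r_\varepsilon$), and finally chaining $\eqe$ into $\lee$ (an instance of ``$a \eqe b$ and $b \lee c$ imply $a \lee c$'', valid since $\lee$ is a pre-order) yields $r_\varepsilon \circe r_\varepsilon \lee r_\varepsilon$, as required.

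Parts~\eqref{item: JHFLW 2} and~\eqref{item: JHFLW 3} then follow once the hypotheses are shown to imply that $r$ is reflexive and $\varepsilon$-transitive. For part~\eqref{item: JHFLW 2}, an \FPO is reflexive by definition, and its transitivity $r \circ r \leq r$, combined with $r \circ r \eqe r \circe r$ from~\eqref{eq:HGDHO 2}, gives $r \circe r \eqe r \circ r \leq r$, hence $r \circe r \lee r$ (using that $\leq$ implies $\lee$). For part~\eqref{item: JHFLW 3} the only new ingredient is a cross-level comparison: since $\varepsilon' \leq \varepsilon$, the relation $\leep$ refines to $\lee$ (if $x \leq \varepsilon'$ then $x \leq \varepsilon$). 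Starting from the $\varepsilon'$-transitivity $r \circep r \leep r$ and using~\eqref{eq:HGDHO 2} at level $\varepsilon'$ to pass to $r \circ r \leep r$, I would then deduce $r \circ r \lee r$, and finally apply~\eqref{eq:HGDHO 2} at level $\varepsilon$ to obtain $r \circe r \eqe r \circ r \lee r$, i.e.\ $\varepsilon$-transitivity of $r$.

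The main obstacle I anticipate is purely bookkeeping: correctly chaining the mixed $\eqe$ and $\lee$ relations (and, in part~\eqref{item: JHFLW 3}, the analogous $\eqep$ and $\leep$) so that the concluding inequality is genuinely $\lee r_\varepsilon$ rather than $\lee r$, and citing the right sub-assertion of Lemma~\ref{lemma: HGDHO} at each step. There is no deeper difficulty once one observes that $\leep$ collapses into $\lee$ whenever $\varepsilon' \leq \varepsilon$; in particular, the strict inequality $\varepsilon' < \varepsilon$ assumed in~\eqref{item: JHFLW 3} is stronger than needed, since only $\varepsilon' \leq \varepsilon$ is actually used.
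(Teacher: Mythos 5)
Your proof is correct and follows essentially the same route as the paper's: part~(a) is established via the chain $r_\varepsilon \circe r_\varepsilon \eqe r \circe r \lee r \leq r_\varepsilon$ using Lemma~\ref{lemma: HGDHO}, and parts~(b) and~(c) are reduced to part~(a) by verifying that $r$ itself is reflexive and $\varepsilon$-transitive. Your closing remark is also accurate: the paper's hypothesis $\varepsilon' < \varepsilon$ in part~(c) is used only through $\varepsilon' \leq \varepsilon$, exactly as in your cross-level observation that $\leep$ implies $\lee$.
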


\begin{proof}
Consider the assertion~\eqref{item: JHFLW 1} and suppose that $r$ is reflexive and $\varepsilon$-transitive. 
Since $id_X \leq r \leq r_\varepsilon$, we have that $r_\varepsilon$ is reflexive. By the assertions of Lemma~\ref{lemma: HGDHO}, we have \mbox{$r_\varepsilon \circe r_\varepsilon \eqe r \circe r \lee r \leq r_\varepsilon$}, which implies that $r_\varepsilon$ is $\varepsilon$-transitive. Clearly, $(r_\varepsilon)_\varepsilon = r_\varepsilon$. Therefore, $r_\varepsilon$ is an $\varepsilon$-\FPO on~$X$.

Consider the assertion~\eqref{item: JHFLW 2} and suppose that $r$ is an \FPO on $X$. Similarly as above, we have 
$r_\varepsilon \circe r_\varepsilon \eqe r \circe r \eqe r \circ r \leq r \leq r_\varepsilon$,
which implies that $r_\varepsilon$ is $\varepsilon$-transitive. By the assertion~\eqref{item: JHFLW 1}, it follows that $r_\varepsilon$ is an $\varepsilon$-\FPO on~$X$.

Consider the assertion~\eqref{item: JHFLW 3} and suppose that $r$ is an $\varepsilon'\!$-\FPO on~$X$ with $\varepsilon' < \varepsilon$. 
Since \mbox{$r \circep r \eqep r$} and $\varepsilon' < \varepsilon$, we also have \mbox{$r \circe r \eqe r$}, which implies that $r_\varepsilon$ is $\varepsilon$-transitive. By the assertion~\eqref{item: JHFLW 1}, it follows that $r_\varepsilon$ is an $\varepsilon$-\FPO on~$X$.
\end{proof}

In the rest of this work, the words like ``greater'' and ``greatest'' are expressed with respect to the usual order~$\leq$. 


\section{Approximate invariances}
\label{section: KJRKS}

In this section, we introduce and study approximate invariances on a fuzzy automaton. They are fundamental concepts needed for the soft state reduction of \FfAs presented in the next two sections. 

\begin{definition}
Let $\varepsilon \in L$, $k \in \NN$ and let $\mA = \tuple{\QA, \Sigma, \IA, \TA, \FA}$ be a fuzzy automaton and $Z$ a reflexive fuzzy relation on $\QA$. We say that $Z$ is 
\begin{enumerate}[(a)]
\item a {\em right $\varepsilon$-invariance} on $\mA$ if 
    \begin{equation}
    Z \circ \FA_w \eqe \FA_w \quad \textrm{for all } w \in \Sigma^*; \label{eq:inv1b}
    \end{equation}

\item a {\em right $(\varepsilon,k)$-invariance} on $\mA$ if 
    \begin{equation}
    Z \circ \FA_w \eqe \FA_w \quad \textrm{for all $w \in \Sigma^*$ with $|w| \leq k$}; \label{eq:inv2b}
    \end{equation}

\item a {\em left $\varepsilon$-invariance} on $\mA$ if 
    \begin{equation}
    \IA_w \circ Z \eqe \IA_w \quad \textrm{for all } w \in \Sigma^*; \label{eq:inv3b}
    \end{equation}

\item a {\em left $(\varepsilon,k)$-invariance} on $\mA$ if 
    \begin{equation}
    \IA_w \circ Z \eqe \FA_w \quad \textrm{for all $w \in \Sigma^*$ with $|w| \leq k$}. \label{eq:inv4b}
    \end{equation}
\end{enumerate}
\end{definition}

We refer to the concepts defined above as {\em approximate} ({\em right/left}) {\em invariances} on~$\mA$. These concepts involve two types of approximation. First, by using `$\eqe$' instead of `$=$', we treat all values between $0$ and $\varepsilon$ as equal. For instance, when $L = [0,1]$ and $\varepsilon$ is very small (e.g., $10^{-3}$ or $10^{-6}$), values smaller than $\varepsilon$ may result from noise or numerical errors and can be treated as equivalent. Second, rather than considering all words over $\Sigma$, we may focus only on words of length bounded by a constant~$k$. The advantage of employing approximate invariances on~$\mA$ lies in their potential to reduce the state space of $\mA$ more efficiently. This approach achieves lower time complexity and produces a fuzzy automaton that approximates $\mA$, with a size usually smaller than or equal to that in the case where such approximations are not used (see Corollary~\ref{cor: JHKWB 2}).

\begin{remark}\label{remark: JHFLW}
Clearly, every right (respectively, left) $\varepsilon$-invariance on $\mA$ is also a right (respectively, left) $(\varepsilon,k)$-invariance on~$\mA$. In addition, if $k > k'$, then every right (respectively, left) $(\varepsilon,k)$-invariance on~$\mA$ is also a right (respectively, left) $(\varepsilon,k')$-invariance on~$\mA$. In general, the converses do not hold. 
\myend
\end{remark}

\begin{proposition}\label{prop: JHEJA}
Let $\varepsilon \in L$, $k \in \NN$ and let $\mA$ be a fuzzy automaton.
\begin{enumerate}[(a)]
\item $Z$ is a left $\varepsilon$-invariance on $\mA$ iff $\cnv{Z}$ is a right $\varepsilon$-invariance on~$\cnv{\mA}$.
\item $Z$ is a left $(\varepsilon,k)$-invariance on $\mA$ iff $\cnv{Z}$ is a right $(\varepsilon,k)$-invariance on~$\cnv{\mA}$.
\end{enumerate}
\end{proposition}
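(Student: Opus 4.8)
The plan is to deduce the statement from the duality between a fuzzy automaton and its reverse, after which everything collapses by commutativity of $\otimes$. I would prove part~(a) in detail and then observe that part~(b) is obtained by the very same argument, restricted to words of bounded length. Throughout, for a word $w = \sigma_1\cdots\sigma_n \in \Sigma^*$ write $w^R \defeq \sigma_n\cdots\sigma_1$ for its reversal; the map $w \mapsto w^R$ is a length-preserving bijection of $\Sigma^*$ onto itself, and this length preservation is exactly what makes the bounded case work.

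First I would record two elementary identities that hold for all fuzzy relations $r,s$ and all fuzzy subsets $f$ on $\QA$, both following from commutativity of $\otimes$: the reversal-of-composition law $\cnv{(r \circ s)} = \cnv{s} \circ \cnv{r}$, and the sliding law $\cnv{Z} \circ f = f \circ Z$ (both sides send $x$ to $\sup_{y} Z(y,x) \otimes f(y)$). From the first law, together with the definition $\TA'_\sigma = \cnv{\TA}_\sigma$, an easy induction on $|w|$ gives $\TA'_w = \cnv{(\TA_{w^R})}$ for the reverse automaton $\cnv{\mA}$. Unfolding $\FA'_w = \TA'_w \circ \FA'$ and using $\FA' = \IA$ then yields the key identity $\FA'_w = \IA_{w^R}$, a short computation at the level of membership values. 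I would also note that $Z$ is reflexive iff $\cnv{Z}$ is, so the reflexivity built into both invariance notions transfers freely.

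With these identities the equivalence is immediate. By definition, $\cnv{Z}$ is a right $\varepsilon$-invariance on $\cnv{\mA}$ precisely when $\cnv{Z} \circ \FA'_w \eqe \FA'_w$ for all $w \in \Sigma^*$. Substituting $\FA'_w = \IA_{w^R}$ and then the sliding law $\cnv{Z} \circ \IA_{w^R} = \IA_{w^R} \circ Z$, this condition reads $\IA_{w^R} \circ Z \eqe \IA_{w^R}$ for all $w \in \Sigma^*$. Letting $u = w^R$ range over $\Sigma^*$ (which it does bijectively) turns this into $\IA_u \circ Z \eqe \IA_u$ for all $u \in \Sigma^*$, which is exactly the definition of $Z$ being a left $\varepsilon$-invariance on $\mA$. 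For part~(b) the identical chain applies, and since $|w^R| = |w|$, the restriction $|w| \le k$ on the reverse side matches $|u| \le k$ on the original side.

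I expect the only genuine bookkeeping to lie in establishing $\TA'_w = \cnv{(\TA_{w^R})}$ and hence $\FA'_w = \IA_{w^R}$: one must track the direction of the reversal correctly, since the inductive step rewrites $\TA'_{\sigma u} = \cnv{(\TA_\sigma)} \circ \cnv{(\TA_{u^R})} = \cnv{(\TA_{u^R} \circ \TA_\sigma)} = \cnv{(\TA_{(\sigma u)^R})}$, where the order of $\sigma$ and $u^R$ must come out right. Everything else is mechanical, and because $\FA'_w = \IA_{w^R}$ and the sliding law are genuine equalities rather than mere $\eqe$-equivalences, no care is needed in transporting the relation $\eqe$ through them.
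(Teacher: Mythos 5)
Your proof is correct and follows essentially the route the paper intends: the paper disposes of this proposition with the one-line remark that it ``holds due to the definition of $\cnv{\mA}$'' and the reflexivity of $\cnv{Z}$, and your word-reversal identity $\FA'_w = \IA_{w^R}$ (via $\TA'_w = \cnv{(\TA_{w^R})}$) together with the sliding law $\cnv{Z} \circ f = f \circ Z$ are exactly the details that remark leaves implicit, including the length-preservation point needed for part~(b). One incidental note: your reading of the left $(\varepsilon,k)$-invariance condition as $\IA_w \circ Z \eqe \IA_w$ is the intended one, the paper's displayed condition \eqref{eq:inv4b} containing an evident typo ($\FA_w$ in place of $\IA_w$ on the right-hand side).
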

This proposition holds due to the definition of $\cnv{\mA}$ and the fact that if $Z$ is reflexive, then so is $\cnv{Z}$. Having this proposition on the background, we will deal only with approximate right invariances. Our further results can easily be reformulated for approximate left invariances in the dual way.

\begin{remark}
The notion of right $\varepsilon$-invariances is a generalization of the notion of weakly right invariant \FPOs introduced in~\cite{SCI.14}.\footnote{Dually, left $\varepsilon$-invariances are a generalization of weakly left invariant \FPOs~\cite{SCI.14}. \FPOs are referred to in~\cite{SCI.14} as fuzzy quasi-orders.} 
Given a fuzzy automaton $\mA = \tuple{\QA, \Sigma, \IA, \TA, \FA}$, an \FPO $Z$ on $\QA$ is called an {\em weakly right invariant \FPO} on $\mA$ if it satisfies the following condition~\cite{SCI.14}:
    \begin{equation}
    Z \circ \FA_w = \FA_w \quad \textrm{for all } w \in \Sigma^*. \label{eq: JHFJS}
    \end{equation}
First, \eqref{eq: JHFJS} differs from~\eqref{eq:inv1b} in that `$=$' is used instead of `$\eqe$'. Thus, one can roughly say that weakly right invariant \FPOs are a specific type of right $\varepsilon$-invariances with $\varepsilon = 0$. Second, we only require right $\varepsilon$-invariances to be reflexive, while weakly right invariant \FPOs are in nature not only reflexive but also transitive. 

Moreover, it should be noted that the approximation by considering only words of a bounded length in right $(\varepsilon,k)$-invariances was not taken into account for weakly right invariant \FPOs in~\cite{SCI.14}. 
\myend
\end{remark}

\begin{example}\label{example: HEJSA}
Consider the case where $\mL$ is the product structure and $\mA = \tuple{\QA, \Sigma, \IA, \TA, \FA}$ is the fuzzy automaton over $\Sigma = \{\sigma\}$ that is illustrated in Figure~\ref{fig: HEJSA} (on page~\pageref{fig: HEJSA}) and specified below: 
{\footnotesize\[
\QA = \{0,1,2,3,4,5,6\},\quad
\IA =\begin{bmatrix} 1 & 0 & 0 & 1 & 0 & 0 & 0 \end{bmatrix},\quad
\TA_\sigma =\begin{bmatrix}
0 & 0.6 & 0 & 0 & 0 & 0 & 0 \\
0 & 0 & 0.8 & 0 & 0 & 0 & 0 \\
0.4 & 0 & 0 & 0 & 0 & 0 & 0 \\
0 & 0 & 0 & 0 & 1 & 0 & 0 \\
0 & 0.5 & 0 & 0 & 0 & 0.4 & 0 \\ 
0 & 0 & 0.7 & 0 & 0 & 0 & 0.8 \\ 
0 & 0 & 0 & 0 & 0.4 & 0 & 0 
\end{bmatrix},\quad
\FA =\begin{bmatrix} 0 \\ 0 \\ 0.5 \\ 0 \\ 0 \\ 0 \\ 0.5 \end{bmatrix}.
\]}

We have
{\footnotesize\[
\FA_\epsilon =\begin{bmatrix} 0 \\ 0 \\ 0.5 \\ 0 \\ 0 \\ 0 \\ 0.5 \end{bmatrix},\qquad
\FA_\sigma =\begin{bmatrix} 0 \\ 0.4 \\ 0 \\ 0 \\ 0 \\ 0.4 \\ 0 \end{bmatrix},\qquad
\FA_{\sigma^2} =\begin{bmatrix} 0.24 \\ 0 \\ 0 \\ 0 \\ 0.2 \\0 \\ 0 \end{bmatrix},\qquad
\FA_{\sigma^3} =\begin{bmatrix} 0 \\ 0 \\ 0.096 \\ 0.2 \\ 0 \\ 0 \\ 0.08 \end{bmatrix},\qquad
\FA_{\sigma^4} =\begin{bmatrix} 0 \\ 0.0768 \\ 0 \\ 0 \\ 0 \\ 0.0672 \\ 0 \end{bmatrix},\qquad \ldots
\]}

Let
{\footnotesize\[
Z_{0.1} =\begin{bmatrix}
1 & 0.25 & 0.2 & 0.5 & 1 & 0.25 & 0.2 \\
5/12 & 1 & 0.2 & 0.5 & 0.5 & 1 & 0.2 \\
5/12 & 0.25 & 1 & 0.5 & 0.5 & 0.25 & 1 \\
5/12 & 0.25 & 0.2 & 1 & 0.5 & 0.25 & 0.2 \\
5/6 & 0.25 & 0.2 & 0.5 & 1 & 0.25 & 0.2 \\
5/12 & 1 & 0.2 & 0.5 & 0.5 & 1 & 0.2 \\
5/12 & 0.25 & 1 & 0.5 & 0.5 & 0.25 & 1
\end{bmatrix},\qquad\qquad 
Z_{(0, 3)} =\begin{bmatrix}
1 & 0 & 0 & 0 & 1 & 0 & 0 \\
0 & 1 & 0 & 0 & 0 & 1 & 0 \\
0 & 0 & 1 & 0.48 & 0 & 0 & 1 \\
0 & 0 & 0 & 1 & 0 & 0 & 0 \\
5/6 & 0 & 0 & 0 & 1 & 0 & 0 \\
0 & 1 & 0 & 0 & 0 & 1 & 0 \\
0 & 0 & 5/6 & 0.4 & 0 & 0 & 1
\end{bmatrix}.
\]}

Notice that $\FA_{\sigma^4}(q) < 0.1$ for all $q \in \QA$. Hence, $\FA_{\sigma^i}(q) < 0.1$ for all $i \geq 4$ and $q \in \QA$. 
By using~\eqref{eq:inv1b} and~\eqref{eq:inv2b}, it can be verified that:\footnote{One can also use our implemented code, mentioned in Section~\ref{sec: alg}, for verification.\label{footnote: JDJSA}}
\begin{itemize}
\item $Z_{0.1}$ is a right $0.1$-invariance on $\mA$,
\item $Z_{(0, 3)}$ is a right $(0, 3)$-invariance on $\mA$.
\end{itemize}
In fact, they are the greatest right invariances of their respective types on $\mA$, but the verification is deferred until Example~\ref{example: JHDKS}.
\myend
\end{example}

\begin{lemma}\label{lemma: JRMAB}
Let $\varepsilon \in L$, $k \in \NN$ and let $\mA = \tuple{\QA, \Sigma, \IA, \TA, \FA}$ be a fuzzy automaton and $Z$ a reflexive fuzzy relation on~$\QA$.  
\begin{enumerate}[(a)]
\item\label{item:JRMAB 1a} If $Z$ is a right $\varepsilon$-invariance on $\mA$, then $Z \circ Z$ is also a right $\varepsilon$-invariance on~$\mA$.

\item\label{item:JRMAB 2a} If $Z$ is a right $(\varepsilon,k)$-invariance on $\mA$, then $Z \circ Z$ is also a right $(\varepsilon,k)$-invariance on~$\mA$.

\item\label{item:JRMAB 1} $Z$ is a right $\varepsilon$-invariance on $\mA$ iff 
    \begin{equation}
    Z \circ \FA_w \lee \FA_w \quad \textrm{for all } w \in \Sigma^*; \label{eq:inv1}
    \end{equation}

\item\label{item:JRMAB 2} $Z$ is a right $(\varepsilon,k)$-invariance on $\mA$ iff 
    \begin{equation}
    Z \circ \FA_w \lee \FA_w \quad \textrm{for all $w \in \Sigma^*$ with $|w| \leq k$}. \label{eq:inv2}
    \end{equation}
\end{enumerate}
\end{lemma}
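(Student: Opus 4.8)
The plan is to treat the two characterizations~(c),(d) first, as they isolate the role of reflexivity, and then deduce the closure statements~(a),(b). Three ingredients recur: reflexivity of $Z$ (so $id_{\QA} \leq Z$), associativity of $\circ$, and the compatibility of composition with $\eqe$ recorded in Lemma~\ref{lemma: HGDHO}---in particular the identifications $r \circ f \eqe r \circe f$ and the $\eqe$-preservation of $\circe$ in each argument.

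For~(c), the forward implication is immediate, since $Z \circ \FA_w \eqe \FA_w$ entails $Z \circ \FA_w \lee \FA_w$. For the converse I would use reflexivity: from $id_{\QA} \leq Z$ and monotonicity of $\circ$ one gets $\FA_w = id_{\QA} \circ \FA_w \leq Z \circ \FA_w$, so the inequality $\FA_w \lee Z \circ \FA_w$ holds for free, and together with the hypothesis $Z \circ \FA_w \lee \FA_w$ it gives $Z \circ \FA_w \eqe \FA_w$ for every $w$. Part~(d) is the same argument with $w$ restricted to $|w| \leq k$.

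For~(a) I would first check that $Z \circ Z$ is reflexive: $id_{\QA} \leq Z = id_{\QA} \circ Z \leq Z \circ Z$. The substantive step is then $(Z \circ Z) \circ \FA_w \eqe \FA_w$. Using associativity I rewrite the left side as $Z \circ (Z \circ \FA_w)$. Since $Z$ is a right $\varepsilon$-invariance, $Z \circ \FA_w \eqe \FA_w$; transporting this through $\circe$ via Lemma~\ref{lemma: HGDHO} (namely $Z \circ (Z \circ \FA_w) \eqe Z \circe (Z \circ \FA_w) \eqe Z \circe \FA_w \eqe Z \circ \FA_w$) and then applying the invariance once more yields $Z \circ (Z \circ \FA_w) \eqe \FA_w$ by transitivity of $\eqe$. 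With reflexivity of $Z \circ Z$ this is exactly a right $\varepsilon$-invariance. Part~(b) is identical under the length bound, and one could alternatively derive~(a),(b) from~(c),(d) by checking only the $\lee$-inequality.

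I expect the sole delicate point to be the bookkeeping between the ordinary composition $\circ$ and its $\varepsilon$-variant $\circe$: Lemma~\ref{lemma: HGDHO} phrases monotonicity and $\eqe$-compatibility for $\circe$, so each transfer to $\circ$ passes through the identification $r \circ f \eqe r \circe f$ and relies on the mixed transitivity $a \eqe b \lee c \Rightarrow a \lee c$, valid because $\lee$ is a pre-order and $\eqe$ is its symmetric part. Beyond this routine chaining, no step needs more than the stated structural properties of $\mLe$.
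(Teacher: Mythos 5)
Your proposal is correct and follows essentially the same route as the paper's proof: parts~(c),(d) via the reflexivity inequality $\FA_w = id_\QA \circ \FA_w \leq Z \circ \FA_w$, and parts~(a),(b) via the identical chain $(Z \circ Z) \circ \FA_w = Z \circ (Z \circ \FA_w) \eqe Z \circe (Z \circ \FA_w) \eqe Z \circe \FA_w \eqe Z \circ \FA_w \eqe \FA_w$ justified by Lemma~\ref{lemma: HGDHO}. Presenting (c),(d) before (a),(b) is a harmless reordering, and your attention to the $\circ$/$\circe$ bookkeeping matches exactly the transfers the paper makes.
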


\begin{proof}
Consider the assertion~\eqref{item:JRMAB 1a} and suppose that $Z$ is a right $\varepsilon$-invariance on~$\mA$. Clearly, \mbox{$id_Q \leq Z \circ Z$}. Let $w \in \Sigma^*$ be arbitrary. By the assertions of Lemma~\ref{lemma: HGDHO}, we have 
\[ 
    (Z \circ Z) \circ \FA_w = Z \circ (Z \circ \FA_w) 
    \eqe Z \circe (Z \circ \FA_w) 
    \eqe Z \circe \FA_w
    \eqe Z \circ \FA_w
    \eqe \FA_w.
\]
Therefore, $Z \circ Z$ is a right $\varepsilon$-invariance on~$\mA$. 
The assertion~\eqref{item:JRMAB 2a} can be proved analogously. 
The assertions~\eqref{item:JRMAB 1} and~\eqref{item:JRMAB 2} hold because $\FA_w = id_\QA \circ \FA_w \leq Z \circ \FA_w$. 
\end{proof}

\begin{remark}
Observing~\eqref{eq:inv1}, one can notice that the notion of right $\varepsilon$-invariances is related to the notion of $\varepsilon$-weak forward simulations introduced in~\cite{MicicCMSN.24}.\footnote{Dually, left $\varepsilon$-invariances are related to $\varepsilon$-weak backward simulations~\cite{MicicCMSN.24}.} These notions, however, differ from each other in several essential aspects:
\begin{itemize}
\item Right $\varepsilon$-invariances are defined on a fuzzy automaton $\mA$ over $\mL$, with $\mL$ being any linear and complete residuated lattice, while $\varepsilon$-weak forward simulations are defined between two fuzzy automata $\mA$ and $\mB$ over the (truncated) product structure. One can take $\mB = \mA$ (to deal with so-called $\varepsilon$-weak forward auto-simulations), but the product structure is only a particular residuated lattice. 
\item Restricting to the case where $\mL$ is the product structure, the conditions~(26) and~(27) in~\cite{MicicCMSN.24} essentially state that, given a fuzzy automaton $\mA = \tuple{\QA, \Sigma, \IA, \TA, \FA}$, a fuzzy relation $Z$ on $\QA$ is an $\varepsilon$-weak forward simulation on $\mA$ (i.e., between $\mA$ and itself) iff it satisfies the two following conditions:
    \begin{align}
    & \IA \leq \IA \circe \cnv{Z}, \label{eq: JHFNA 1}\\
    & \cnv{Z} \circe \FA_w \lee \FA_w,\quad \textrm{for all } w \in \Sigma^*. \label{eq: JHFNA 2}
    \end{align}
Clearly, not only that~\eqref{eq:inv1} differs from~\eqref{eq: JHFNA 2} in that $Z$ is used instead of $\cnv{Z}$, but the definition of right $\varepsilon$-invariances does not require~\eqref{eq: JHFNA 1}. On the other hand, right $\varepsilon$-invariances are required to be reflexive, while $\varepsilon$-weak forward simulations are not. 
\end{itemize}

Moreover, it should be noted that the approximation by considering only words with a bounded length in right $(\varepsilon,k)$-invariances was not taken into account for $\varepsilon$-weak forward simulations in~\cite{MicicCMSN.24}. 
\myend
\end{remark}

\begin{proposition}\label{prop:JHRWS}
Let $\varepsilon,\varepsilon' \in L$, $k \in \NN$ and let $\mA = \tuple{\QA, \Sigma, \IA, \TA, \FA}$ be a fuzzy automaton. If $\varepsilon < \varepsilon'$, then 
\begin{enumerate}[(a)]
\item\label{item:JHRWS 1} every right $\varepsilon$-invariance on $\mA$ is also a right $\varepsilon'$-invariance on~$\mA$; 
\item\label{item:JHRWS 2} every right $(\varepsilon,k)$-invariance on $\mA$ is also a right $(\varepsilon',k)$-invariance on~$\mA$. 
\end{enumerate}
\end{proposition}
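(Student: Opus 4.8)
The plan is to reduce the whole statement to a single monotonicity observation about how the pre-order $\lee$ depends on the threshold $\varepsilon$, and then to feed this into the characterization of right invariances already established in Lemma~\ref{lemma: JRMAB}. The essential point is that the fuzzy sets $\FA_w$ and the composite $Z \circ \FA_w$ are built using the ordinary composition $\circ$, so they do \emph{not} depend on $\varepsilon$ at all; the only thing that changes when we pass from $\varepsilon$ to $\varepsilon'$ is the comparison relation between them.

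First I would record the elementary fact that, for $\varepsilon \le \varepsilon'$, the relation $\lee$ is contained in $\leep$ on $L$: if $x \lee y$, then either $x \le y$, which immediately gives $x \leep y$, or $x \le \varepsilon \le \varepsilon'$, which again gives $x \leep y$. Extending this pointwise to fuzzy subsets, $f \lee g$ implies $f \leep g$ whenever $\varepsilon \le \varepsilon'$. Note also that reflexivity of $Z$ is a condition independent of the threshold, so a reflexive $Z$ remains reflexive, as required by the definition of a right $\varepsilon'$-invariance.

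Next I would invoke Lemma~\ref{lemma: JRMAB}\eqref{item:JRMAB 1}, which states that, for any threshold, $Z$ is a right $\varepsilon$-invariance on $\mA$ iff $Z \circ \FA_w \lee \FA_w$ for all $w \in \Sigma^*$. Since Lemma~\ref{lemma: JRMAB} is proved for an arbitrary $\varepsilon \in L$, the very same characterization holds verbatim with $\varepsilon'$ in place of $\varepsilon$. The proof of~\eqref{item:JHRWS 1} then assembles in one line: assuming $Z$ is a right $\varepsilon$-invariance, $Z$ is reflexive and $Z \circ \FA_w \lee \FA_w$ for every $w$; the monotonicity observation upgrades this to $Z \circ \FA_w \leep \FA_w$ for every $w$; and the $\varepsilon'$-version of Lemma~\ref{lemma: JRMAB}\eqref{item:JRMAB 1} concludes that $Z$ is a right $\varepsilon'$-invariance. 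For~\eqref{item:JHRWS 2} the argument is identical, using Lemma~\ref{lemma: JRMAB}\eqref{item:JRMAB 2} and simply restricting the quantifier to words $w$ with $|w| \le k$.

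There is no genuine obstacle here: the statement is a soft monotonicity property with no substantive computation. The only places requiring care are (i) confirming that it is the comparison relation, and not the underlying fuzzy sets, that the threshold governs, and (ii) checking that the $\lee$-characterization of Lemma~\ref{lemma: JRMAB} is applied at the correct threshold on each side of the implication (at $\varepsilon$ for the hypothesis, at $\varepsilon'$ for the conclusion).
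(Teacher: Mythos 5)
Your proof is correct and follows essentially the same route as the paper's: both reduce the claim, via the $\lee$-characterization in Lemma~\ref{lemma: JRMAB}, to showing $Z \circ \FA_w \leep \FA_w$, and both exploit monotonicity of the comparison relation in the threshold. If anything, your version is a mild streamlining: the paper descends into the individual terms $Z(q,p) \otimes \FA_w(p)$ of the supremum and case-splits on whether such a term exceeds $\FA_w(q)$, whereas your one-line observation that $x \lee y$ implies $x \leep y$ whenever $\varepsilon \leq \varepsilon'$, applied pointwise to $(Z \circ \FA_w)(q)$ versus $\FA_w(q)$, makes that decomposition unnecessary; your explicit remark that reflexivity of $Z$ is threshold-independent also covers a point the paper leaves implicit.
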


\begin{proof}
Suppose $\varepsilon < \varepsilon'$. 
Consider the assertion~\eqref{item:JHRWS 1} and let $Z$ be a right $\varepsilon$-invariance on~$\mA$. By Lemma~\ref{lemma: JRMAB}, we only need to prove that \mbox{$(Z \circ \FA_w)(q) \leep \FA_w(q)$} for any $w \in \Sigma^*$ and $q \in \FA$. It suffices to show that, for any $p \in \QA$, \mbox{$Z(q,p) \otimes \FA_w(p) \leep \FA_w(q)$}. This is trivial for the case \mbox{$Z(q,p) \otimes \FA_w(p) \leq \FA_w(q)$}. So, suppose the opposite. Since $Z$ satisfies~\eqref{eq:inv1}, it follows that \mbox{$Z(q,p) \otimes \FA_w(p) \leq \varepsilon < \varepsilon'$}. Therefore, \mbox{$Z(q,p) \otimes \FA_w(p) \leep \FA_w(q)$}, which completes the proof of~\eqref{item:JHRWS 1}. 
The assertion~\eqref{item:JHRWS 2} can be proved analogously. 
\end{proof}

\begin{lemma}\label{lemma: KRBZM}
Let $\varepsilon \in L$, $k \in \NN$ and let $\mA = \tuple{\QA, \Sigma, \IA, \TA, \FA}$ be a fuzzy automaton and $Z$ a reflexive fuzzy relation on~$\QA$.  
\begin{enumerate}[(a)]
\item\label{item: KRBZM 1} $Z$ is a right $\varepsilon$-invariance on $\mA$ iff 
    \begin{equation}
    Z \lee \FA_w\ \slashe \FA_w \quad \textrm{for all } w \in \Sigma^*; \label{eq:inv1c}
    \end{equation}

\item\label{item: KRBZM 2} $Z$ is a right $(\varepsilon,k)$-invariance on $\mA$ iff 
    \begin{equation}
    Z \lee \FA_w\ \slashe \FA_w \quad \textrm{for all $w \in \Sigma^*$ with $|w| \leq k$}. \label{eq:inv2c}
    \end{equation}
\end{enumerate}
\end{lemma}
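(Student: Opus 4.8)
The plan is to obtain both biconditionals by composing two results already established in the excerpt, applied one word at a time. The key observation is that the right-hand condition $Z \lee \FA_w \slashe \FA_w$ is exactly the residual form of the inequality $Z \circ \FA_w \lee \FA_w$, and the latter inequality is precisely the `$\lee$'-reformulation of right $\varepsilon$-invariance furnished by Lemma~\ref{lemma: JRMAB}.

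For part~\eqref{item: KRBZM 1}, I would first recall from Lemma~\ref{lemma: JRMAB}\eqref{item:JRMAB 1} that, because $Z$ is reflexive, $Z$ is a right $\varepsilon$-invariance on $\mA$ if and only if $Z \circ \FA_w \lee \FA_w$ for every $w \in \Sigma^*$. I would then fix an arbitrary $w$ and invoke Lemma~\ref{lemma: HGDHO}\eqref{eq:HGDHO 7} with the substitution $r = Z$, $f = \FA_w$ and $g = \FA_w$; this yields the per-word equivalence $Z \circ \FA_w \lee \FA_w$ iff $Z \lee \FA_w \slashe \FA_w$. Quantifying over all $w \in \Sigma^*$ and chaining the two equivalences gives~\eqref{eq:inv1c}.

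Part~\eqref{item: KRBZM 2} follows by the same two-step argument, with the quantifier restricted to words of length at most $k$: I would use Lemma~\ref{lemma: JRMAB}\eqref{item:JRMAB 2} in place of~\eqref{item:JRMAB 1}, and then apply the identical per-word equivalence from Lemma~\ref{lemma: HGDHO}\eqref{eq:HGDHO 7} for each such $w$, obtaining~\eqref{eq:inv2c}.

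Since the proof is merely a composition of two prior lemmas applied word-by-word, I do not expect a genuine obstacle. The only point demanding care is the correct instantiation of Lemma~\ref{lemma: HGDHO}\eqref{eq:HGDHO 7}, with both the fuzzy set being divided and the divisor equal to the same $\FA_w$, together with the bookkeeping that this equivalence concerns a single fixed $w$; the universal quantifier over $w$ must therefore be managed externally rather than being absorbed into the lemma.
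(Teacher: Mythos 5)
Your proposal is correct and coincides with the paper's own argument: the paper proves this lemma in one line by citing exactly the two ingredients you use, namely Lemma~\ref{lemma: JRMAB} (the $\lee$-reformulation of approximate right invariance) and the residuation equivalence~\eqref{eq:HGDHO 7} of Lemma~\ref{lemma: HGDHO}, instantiated per word with $r = Z$ and $f = g = \FA_w$. Your spelled-out version, including the remark that the quantifier over $w$ is handled outside the per-word equivalence, is a faithful expansion of that proof.
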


This lemma immediately follows from Lemma~\ref{lemma: JRMAB} and the assertion~\eqref{eq:HGDHO 7} of Lemma~\ref{lemma: HGDHO}.

In general, given $\varepsilon \in L$, $k \in \NN$ and a fuzzy automaton $\mA = \tuple{\QA, \Sigma, \IA, \TA, \FA}$, we are interested in the greatest right $\varepsilon$-invariance and the greatest right $(\varepsilon,k)$-invariance on $\mA$. The following theorem and its corollaries are devoted to such fuzzy relations.

\begin{theorem}\label{theorem: JRKKA}
Let $\varepsilon \in L$, $k \in \NN$ and let $\mA = \tuple{\QA, \Sigma, \IA, \TA, \FA}$ be a fuzzy automaton. 
\begin{enumerate}[(a)]
\item\label{item: JRKKA 1} The greatest right $\varepsilon$-invariance on $\mA$ exists and is equal to 
    \begin{equation}
    \bigwedgee \{\FA_w\ \slashe \FA_w \mid w \in \Sigma^*\}. \label{eq: JRKKA 1}
    \end{equation}

\item\label{item: JRKKA 2} The greatest right $(\varepsilon,k)$-invariance on $\mA$ exists and is equal to 
    \begin{equation}
    \bigwedgee \{\FA_w\ \slashe \FA_w \mid w \in \Sigma^* \textrm{ with } |w| \leq k\}. \label{eq: JRKKA 2}
    \end{equation}
\end{enumerate}
\end{theorem}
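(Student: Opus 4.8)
The plan is to verify that the fuzzy relation $R$ appearing on the right-hand side of~\eqref{eq: JRKKA 1} is simultaneously a right $\varepsilon$-invariance on $\mA$ and an upper bound, with respect to the usual order~$\leq$, of every right $\varepsilon$-invariance on $\mA$; together these make $R$ the greatest such relation. The whole argument is driven by Lemma~\ref{lemma: KRBZM}\eqref{item: KRBZM 1}, which reduces ``$Z$ is a right $\varepsilon$-invariance'' to reflexivity together with $Z \lee \FA_w \slashe \FA_w$ for all $w \in \Sigma^*$. Thus I only need to understand the residual relations $\FA_w \slashe \FA_w$ and the behaviour of $\bigwedgee$.

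First I would record two easy facts. Since $\lee$ is reflexive, $\FA_w(a) \lee \FA_w(a)$, so $(\FA_w \slashe \FA_w)(a,a) = (\FA_w(a) \toe \FA_w(a)) = 1$; hence each $\FA_w \slashe \FA_w$ is reflexive, and since $\bigwedgee$ of a family of $1$'s equals $1$, so is $R$. Moreover, every value $(\FA_w \slashe \FA_w)(a,b) = (\FA_w(b) \toe \FA_w(a))$ satisfies $\varepsilon \leq (\FA_w(b) \toe \FA_w(a))$, so all entries involved lie at or above~$\varepsilon$ and the ordinary infimum defining $R(a,b)$ already equals or exceeds~$\varepsilon$.

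To show $R$ is a right $\varepsilon$-invariance I would isolate the pointwise inequality $\bigwedgee\!A \lee x$ for every $x \in A$ (with $A \subseteq L$): if $\bigwedge\!A > \varepsilon$ then $\bigwedgee\!A = \bigwedge\!A \leq x$, and otherwise $\bigwedgee\!A = \varepsilon \lee x$ because $\varepsilon \leq \varepsilon$. Applying this componentwise with $A = \{(\FA_u \slashe \FA_u)(a,b) \mid u \in \Sigma^*\}$ yields $R \lee \FA_w \slashe \FA_w$ for every $w$, so by Lemma~\ref{lemma: KRBZM}\eqref{item: KRBZM 1} the reflexive relation $R$ is a right $\varepsilon$-invariance.

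For maximality, let $Z$ be an arbitrary right $\varepsilon$-invariance; Lemma~\ref{lemma: KRBZM}\eqref{item: KRBZM 1} gives $Z(a,b) \lee (\FA_w \slashe \FA_w)(a,b)$ for all $w$ and all $a,b$. Fixing $a,b$, if $Z(a,b) \leq \varepsilon$ then $Z(a,b) \leq \varepsilon \leq R(a,b)$, whereas if $Z(a,b) > \varepsilon$ the definition of $\lee$ forces $Z(a,b) \leq (\FA_w \slashe \FA_w)(a,b)$ for every $w$, whence $Z(a,b) \leq \bigwedgee\{(\FA_w \slashe \FA_w)(a,b) \mid w \in \Sigma^*\} = R(a,b)$. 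Thus $Z \leq R$, proving~\eqref{item: JRKKA 1}; assertion~\eqref{item: JRKKA 2} then follows verbatim after restricting every quantifier over $w$ to words of length at most~$k$ and invoking Lemma~\ref{lemma: KRBZM}\eqref{item: KRBZM 2}. I expect the maximality step to be the main obstacle, since it is the only point where the case split on whether $Z(a,b)$ exceeds~$\varepsilon$ is genuinely needed to convert the $\varepsilon$-approximate inequalities back into an honest $\leq$-inequality.
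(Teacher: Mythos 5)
Your proof is correct and takes essentially the same route as the paper's: both establish reflexivity of the candidate relation, use Lemma~\ref{lemma: KRBZM} together with the pointwise fact that $\bigwedgee\!A \lee x$ for every $x \in A$ to show it is a right $\varepsilon$-invariance, and prove maximality by exactly the same case split on whether $Z(a,b)$ exceeds~$\varepsilon$ (using $\varepsilon \leq R(a,b)$ and $\bigwedge \leq \bigwedgee$ to pass back to an honest $\leq$-inequality). The only differences are cosmetic: you derive reflexivity directly from $(x \toe x) = 1$ where the paper invokes assertion~\eqref{eq:HGDHO 7} of Lemma~\ref{lemma: HGDHO} applied to $id_\QA$, and you spell out the pointwise inequality $\bigwedgee\!A \lee x$ that the paper asserts without detail.
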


\begin{proof}
Consider the assertion~\eqref{item: JRKKA 1}. Let $Z = \bigwedgee \{\FA_w\ \slashe \FA_w \mid w \in \Sigma^*\}$.
By~\eqref{eq:HGDHO 7}, \mbox{$id_\QA \lee \FA_w\ \slashe \FA_w$} for all $w \in \Sigma^*$, and therefore, $Z$ is reflexive. We first prove that $Z$ satisfies~\eqref{eq:inv1} and is therefore a right $\varepsilon$-invariance on $\mA$. Let $w \in \Sigma^*$ be arbitrary. By the assertion~\eqref{eq:HGDHO 7} of Lemma~\ref{lemma: HGDHO}, it suffices to prove that \mbox{$Z \lee \FA_w\ \slashe \FA_w$}. This latter holds because
\[
    Z = \bigwedgee \{\FA_u\ \slashe \FA_u \mid u \in \Sigma^*\} \lee \FA_w\ \slashe \FA_w.
\]
Now, let $Z'$ be an arbitrary right $\varepsilon$-invariance on $\mA$ and let $p$ and $q$ be arbitrary states of~$\mA$. To prove $Z' \leq Z$, we show that $Z'(p,q) \leq Z(p,q)$. By Lemma~\ref{lemma: KRBZM}, we have
\begin{equation}
Z'(p,q) \lee (\FA_w\ \slashe \FA_w)(p,q) \quad \textrm{for all } w \in \Sigma^*.\label{eq: JHFHW}
\end{equation}
Observe that $\varepsilon \leq Z(p,q)$. If $Z'(p,q) \leq \varepsilon$, then clearly $Z'(p,q) \leq Z(p,q)$. So, suppose $Z'(p,q) > \varepsilon$. By~\eqref{eq: JHFHW}, it follows that 
\[ Z'(p,q) \leq (\FA_w\ \slashe \FA_w)(p,q) \quad \textrm{for all } w \in \Sigma^*. \]
Therefore, 
\[ 
    Z'(p,q) \leq \bigwedge_{w \in \Sigma^*}(\FA_w\ \slashe \FA_w)(p,q) \leq \bigwedgee \{(\FA_w\ \slashe \FA_w)(p,q) \mid w \in \Sigma^*\} = Z(p,q).
\]

The assertion~\eqref{item: JRKKA 2} can be proved analogously.
\end{proof}

\begin{corollary}\label{cor: JRKKA}
Let $\varepsilon \in L$, $k \in \NN$ and let $\mA = \tuple{\QA, \Sigma, \IA, \TA, \FA}$ be a fuzzy automaton. 
\begin{enumerate}[(a)]
\item\label{item: JRKKA 3} The greatest right $\varepsilon$-invariance on $\mA$ is equal to 
    \begin{equation}
    \bigwedgee \{(\FA_w)_\varepsilon\, \slashe (\FA_w)_\varepsilon \mid w \in \Sigma^*\}. \label{eq: JRKKA 3}
    \end{equation}

\item\label{item: JRKKA 4} The greatest right $(\varepsilon,k)$-invariance on $\mA$ is equal to 
    \begin{equation}
    \bigwedgee \{(\FA_w)_\varepsilon\, \slashe (\FA_w)_\varepsilon \mid w \in \Sigma^* \textrm{ with } |w| \leq k\}. \label{eq: JRKKA 4}
    \end{equation}
\end{enumerate}
\end{corollary}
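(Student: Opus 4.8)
The plan is to deduce the corollary directly from Theorem~\ref{theorem: JRKKA}. That theorem already expresses the greatest right $\varepsilon$-invariance on $\mA$ as $\bigwedgee \{\FA_w\ \slashe \FA_w \mid w \in \Sigma^*\}$, and the greatest right $(\varepsilon,k)$-invariance as the same $\bigwedgee$ restricted to $|w| \leq k$. Hence it suffices to prove, for every $w \in \Sigma^*$, the term-by-term identity
\[ \FA_w\ \slashe \FA_w = (\FA_w)_\varepsilon\, \slashe (\FA_w)_\varepsilon, \]
as fuzzy relations on $\QA$; the two $\bigwedgee$-expressions in~\eqref{eq: JRKKA 3} and~\eqref{eq: JRKKA 1} are then built from literally the same family of terms, and similarly for the bounded versions.

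First I would establish this identity by a direct computation. From the definition of the $\varepsilon$-truncation one has $(\FA_w)_\varepsilon(q) = \FA_w(q) \lor \varepsilon$ for every $q \in \QA$ (the value $\FA_w(q)$ is kept when it exceeds $\varepsilon$, and replaced by $\varepsilon = \FA_w(q) \lor \varepsilon$ otherwise). Using the definition $(x \toe y) = (x \lor \varepsilon \to y \lor \varepsilon)$ from~\eqref{eq: JFANM} together with the absorption $(\FA_w(q) \lor \varepsilon) \lor \varepsilon = \FA_w(q) \lor \varepsilon$, I would then obtain, for all $p, q \in \QA$,
\[ ((\FA_w)_\varepsilon\, \slashe (\FA_w)_\varepsilon)(p,q) = ((\FA_w)_\varepsilon(q) \toe (\FA_w)_\varepsilon(p)) = (\FA_w(q) \toe \FA_w(p)) = (\FA_w\ \slashe \FA_w)(p,q), \]
which is exactly the desired identity, with no approximation incurred.

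A slightly more conceptual route would instead invoke assertion~\eqref{eq:HGDHO 6} of Lemma~\ref{lemma: HGDHO}, which already gives $\FA_w\ \slashe \FA_w \eqe (\FA_w)_\varepsilon\, \slashe (\FA_w)_\varepsilon$. The one point that needs care---and, in effect, the only real content of the argument---is upgrading this $\eqe$ to genuine equality. This is possible because every value of a slash relation is bounded below by $\varepsilon$: indeed $\varepsilon \leq (x \toe y)$ always holds, so both sides take values $\geq \varepsilon$ pointwise. For two elements $x, y \geq \varepsilon$, the relation $x \eqe y$ forces $x = y$ (if they are not already equal, then $x, y \leq \varepsilon$, and together with $x, y \geq \varepsilon$ this gives $x = \varepsilon = y$). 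Thus the pointwise $\eqe$ becomes pointwise $=$, recovering the same identity.

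With the term-by-term equality in hand, the families $\{\FA_w\ \slashe \FA_w\}_w$ and $\{(\FA_w)_\varepsilon\, \slashe (\FA_w)_\varepsilon\}_w$ coincide, so their $\bigwedgee$'s are equal; by Theorem~\ref{theorem: JRKKA}\eqref{item: JRKKA 1} the relation~\eqref{eq: JRKKA 3} is therefore the greatest right $\varepsilon$-invariance on $\mA$, proving part~\eqref{item: JRKKA 3}. Restricting throughout to words $w$ with $|w| \leq k$ and appealing to Theorem~\ref{theorem: JRKKA}\eqref{item: JRKKA 2} yields part~\eqref{item: JRKKA 4} in the same way. I do not anticipate any genuine difficulty beyond the $\eqe$-to-$=$ observation; the remainder is routine bookkeeping.
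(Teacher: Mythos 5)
Your proposal is correct and follows essentially the same route as the paper, whose proof is the one-line observation that the corollary follows from Theorem~\ref{theorem: JRKKA} together with assertion~\eqref{eq:HGDHO 6} of Lemma~\ref{lemma: HGDHO}. Your explicit upgrade of the lemma's $\eqe$ to genuine equality---via $(\FA_w)_\varepsilon(q) = \FA_w(q) \lor \varepsilon$ (valid by linearity of $\mL$) and $\varepsilon \leq (x \toe y)$---is exactly the detail the paper's terse citation leaves implicit, and it is sound.
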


This corollary immediately follows from Theorem~\ref{theorem: JRKKA} and the assertion~\eqref{eq:HGDHO 6} of Lemma~\ref{lemma: HGDHO}. 

\begin{example}\label{example: JHDKS}
Let us continue Example~\ref{example: HEJSA}. Let $\mL$, $\mA = \tuple{\QA, \Sigma, \IA, \TA, \FA}$, $Z_{0.1}$ and $Z_{(0, 3)}$ be as in that example. 
We exploit the fuzzy sets $\FA_{\sigma^i}$ computed in that example, for $0 \leq i \leq 4$, to show that
\begin{itemize}
\item $Z_{0.1}$ is the greatest right $0.1$-invariance on $\mA$,
\item $Z_{(0, 3)}$ is the greatest right $(0, 3)$-invariance on $\mA$.
\end{itemize}
Recall that $\FA_{\sigma^i}(q) < 0.1$ and $(\FA_{\sigma^i})_{0.1}(q) = 0.1$ for all $i \geq 4$ and $q \in \QA$. By Corollary~\ref{cor: JRKKA}, it follows that:
\begin{itemize}
\item the greatest right $0.1$-invariance on $\mA$ is equal to 
    \begin{equation*}
    \textstyle\bigwedge_{0.1} \{(\FA_{\sigma^i})_{0.1}\, \slash_{\!0.1} (\FA_{\sigma^i})_{0.1} \mid 0 \leq i \leq 4\}, \textrm{ which is equal to } Z_{0.1};
    \end{equation*}
\item the greatest right $(0, 3)$-invariance on $\mA$ is equal to 
    \begin{equation*}
    \bigwedge_{0 \leq i \leq 3} \FA_{\sigma^i}\, \slash \FA_{\sigma^i}, \textrm{ which is equal to } Z_{(0,3)}.
    \end{equation*}
\end{itemize}
Similarly, 
\begin{itemize}
\item since $\FA_{\sigma^i}(q) \leq 0.2$ and $(\FA_{\sigma^i})_{0.2}(q) = 0.2$ for all $i \geq 3$ and $q \in \QA$, the greatest right $0.2$-invariance on $\mA$ is 
\[ Z_{0.2} = \textstyle\bigwedge_{0.2} \{(\FA_{\sigma^i})_{0.2}\, \slash_{\!0.2} (\FA_{\sigma^i})_{0.2} \mid 0 \leq i \leq 3\}, \]
\item the greatest right $(0, 4)$-invariance on $\mA$ is
\[ 
    Z_{(0,4)} = \bigwedge_{0 \leq i \leq 4} \big(\FA_{\sigma^i}\, \slash \FA_{\sigma^i}\big).
\]
\end{itemize}
It can be verified that\footref{footnote: JDJSA} 
{\footnotesize\[
Z_{0.2} =\begin{bmatrix}
1 & 0.5 & 0.4 & 1 & 1 & 0.5 & 0.4 \\
5/6 & 1 & 0.4 & 1 & 1 & 1 & 0.4 \\
5/6 & 0.5 & 1 & 1 & 1 & 0.5 & 1 \\
5/6 & 0.5 & 0.4 & 1 & 1 & 0.5 & 0.4 \\
5/6 & 0.5 & 0.4 & 1 & 1 & 0.5 & 0.4 \\
5/6 & 1 & 0.4 & 1 & 1 & 1 & 0.4 \\
5/6 & 0.5 & 1 & 1 & 1 & 0.5 & 1
\end{bmatrix},\qquad\qquad
Z_{(0,4)} =\begin{bmatrix}
1 & 0 & 0 & 0 & 1 & 0 & 0 \\
0 & 1 & 0 & 0 & 0 & 1 & 0 \\
0 & 0 & 1 & 0.48 & 0 & 0 & 1 \\
0 & 0 & 0 & 1 & 0 & 0 & 0 \\
5/6 & 0 & 0 & 0 & 1 & 0 & 0 \\
0 & 0.875 & 0 & 0 & 0 & 1 & 0 \\
0 & 0 & 5/6 & 0.4 & 0 & 0 & 1
\end{bmatrix}.
\]}
Notice that $Z_{(0,4)}$ differs from $Z_{(0,3)}$ only in that $Z_{(0,4)}(6,1) = 0.875$ (instead of 1).
\myend
\end{example}

\begin{corollary}\label{cor: KGBXH}
Let $\varepsilon \in L$, $k \in \NN$ and let $\mA = \tuple{\QA, \Sigma, \IA, \TA, \FA}$ be a fuzzy automaton. 
Then, both the greatest right $\varepsilon$-invariance and the greatest right $(\varepsilon,k)$-invariance on $\mA$ are $\varepsilon$-\FPOs.
\end{corollary}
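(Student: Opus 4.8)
The plan is to verify the three defining properties of an $\varepsilon$-\FPO---reflexivity, $\varepsilon$-transitivity, and invariance under $\varepsilon$-truncation---for the fuzzy relation $Z$ furnished by Theorem~\ref{theorem: JRKKA}, treating the right $\varepsilon$-invariance and the right $(\varepsilon,k)$-invariance cases in parallel since only the indexing set of words differs. Reflexivity is already recorded inside the proof of Theorem~\ref{theorem: JRKKA}, so the real work concerns the remaining two conditions.

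For the equality $Z = Z_\varepsilon$, I would argue that every entry of $Z$ is at least $\varepsilon$. Indeed, $Z$ is an infimum of the form $\bigwedgee \{\FA_w\ \slashe \FA_w \mid \cdots\}$, and each factor satisfies $(\FA_w\ \slashe \FA_w)(p,q) = (\FA_w(q) \toe \FA_w(p)) \ge \varepsilon$ by the observation following~\eqref{eq: JFANM}; moreover the operation $\bigwedgee$ never returns a value below $\varepsilon$. Hence $Z(p,q) \ge \varepsilon$ for all $p,q$, and by the definition of the $\varepsilon$-truncation this yields $Z_\varepsilon(p,q) = Z(p,q)$, i.e.\ $Z = Z_\varepsilon$.

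For $\varepsilon$-transitivity I would exploit maximality. Let $Z$ be the greatest right $\varepsilon$-invariance on $\mA$. By Lemma~\ref{lemma: JRMAB}\eqref{item:JRMAB 1a} the relation $Z \circ Z$ is again a right $\varepsilon$-invariance on $\mA$, so maximality of $Z$ forces $Z \circ Z \leq Z$. Combining this with the identity $Z \circ Z \eqe Z \circe Z$ from~\eqref{eq:HGDHO 2} and the fact that $\lee$ is a pre-order, I obtain $Z \circe Z \eqe Z \circ Z \lee Z$, hence $Z \circe Z \lee Z$, which is precisely $\varepsilon$-transitivity. Together with reflexivity and $Z = Z_\varepsilon$, this shows that $Z$ is an $\varepsilon$-\FPO. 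The same three steps, using Lemma~\ref{lemma: JRMAB}\eqref{item:JRMAB 2a} in place of~\eqref{item:JRMAB 1a}, handle the greatest right $(\varepsilon,k)$-invariance.

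There is no serious obstacle here: the only point requiring care is the bookkeeping between $\leq$, $\lee$ and $\eqe$ in the transitivity step---specifically, that $\eqe$ refines into $\lee$, so the chain $Z \circe Z \eqe Z \circ Z \leq Z$ collapses to $Z \circe Z \lee Z$. The conceptual heart is simply the recognition that maximality together with closure under ordinary composition (Lemma~\ref{lemma: JRMAB}) already encodes $\varepsilon$-transitivity, while the truncation identity follows from the fact that residua $\toe$ and $\varepsilon$-infima always stay at or above $\varepsilon$.
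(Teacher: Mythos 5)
Your proposal is correct and follows essentially the same route as the paper's proof: reflexivity plus existence via Theorem~\ref{theorem: JRKKA}, $\varepsilon$-transitivity via Lemma~\ref{lemma: JRMAB}\eqref{item:JRMAB 1a}--\eqref{item:JRMAB 2a} combined with maximality to get $Z \circ Z \leq Z$, and then the chain $Z \circe Z \eqe Z \circ Z \leq Z$ from~\eqref{eq:HGDHO 2} to conclude $Z \circe Z \lee Z$. The only (harmless) difference is that you spell out why $Z = Z_\varepsilon$ from the formula~\eqref{eq: JRKKA 1}---each factor $\FA_w \slashe \FA_w$ and the operation $\bigwedgee$ stay at or above $\varepsilon$---whereas the paper simply cites Theorem~\ref{theorem: JRKKA} for this fact.
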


\begin{proof}
The existence of the mentioned approximate invariances follows from Theorem~\ref{theorem: JRKKA}. 
Let $Z$ be either the greatest right $\varepsilon$-invariance or the greatest right $(\varepsilon,k)$-invariance on~$\mA$. 
By Theorem~\ref{theorem: JRKKA}, we have $Z = Z_\varepsilon$. 
By the assertions~\eqref{item:JRMAB 1a} and~\eqref{item:JRMAB 2a} of Lemma~\ref{lemma: JRMAB}, we have \mbox{$Z \circ Z \leq Z$}. 
By~\eqref{eq:HGDHO 2}, it follows that \mbox{$Z \circe Z \eqe Z \circ Z \leq Z$}, and hence \mbox{$Z \circe Z \lee Z$}, which means that $Z$ is $\varepsilon$-transitive. Therefore, $Z$ is an $\varepsilon$-\FPO.
\end{proof}

\begin{corollary}\label{cor: JHFRK}
Let $\varepsilon, \varepsilon' \in L$, $k,k' \in \NN$ and let $\mA = \tuple{\QA, \Sigma, \IA, \TA, \FA}$ be a fuzzy automaton. 
\begin{enumerate}[(a)]
\item\label{item: JHFRK 1} The greatest right $(\varepsilon,k)$-invariance on $\mA$ is greater than or equal to the greatest right $\varepsilon$-invariance on~$\mA$.

\item\label{item: JHFRK 2} If $k > k'$, then the greatest right $(\varepsilon,k')$-invariance on~$\mA$ is greater than or equal to the greatest right $(\varepsilon,k)$-invariance on~$\mA$.

\item\label{item: JHFRK 3} If $\varepsilon > \varepsilon'$, then the greatest right $\varepsilon$-invariance on~$\mA$ is greater than or equal to the greatest right $\varepsilon'$-invariance on~$\mA$.

\item\label{item: JHFRK 4} If $\varepsilon > \varepsilon'$, then the greatest right $(\varepsilon,k)$-invariance on~$\mA$ is greater than or equal to the greatest right $(\varepsilon',k)$-invariance on~$\mA$.
\end{enumerate}
\end{corollary}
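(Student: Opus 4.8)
The plan is to obtain all four parts from a single structural observation, feeding on the extremality guaranteed by Theorem~\ref{theorem: JRKKA} together with the containment relations already established between the various classes of invariances. The observation is this: if every invariance of a given type is automatically an invariance of a second type, then the greatest invariance of the first type is in particular an invariance of the second type, and hence---since the greatest invariance of the second type dominates, with respect to $\leq$, every invariance of that type, as the proof of Theorem~\ref{theorem: JRKKA} shows---it is bounded above by the greatest invariance of the second type. Each of the four assertions is then an instance of this observation applied to a suitable inclusion of classes, with the existence of all the greatest invariances involved supplied by Theorem~\ref{theorem: JRKKA}.

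First I would dispatch parts~\eqref{item: JHFRK 1} and~\eqref{item: JHFRK 2} using Remark~\ref{remark: JHFLW}. For~\eqref{item: JHFRK 1}, the greatest right $\varepsilon$-invariance on $\mA$ is, by Remark~\ref{remark: JHFLW}, also a right $(\varepsilon,k)$-invariance on $\mA$, and so it is $\leq$ the greatest right $(\varepsilon,k)$-invariance. For~\eqref{item: JHFRK 2}, assuming $k > k'$, the greatest right $(\varepsilon,k)$-invariance is likewise a right $(\varepsilon,k')$-invariance by Remark~\ref{remark: JHFLW}, and is therefore $\leq$ the greatest right $(\varepsilon,k')$-invariance.

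Next I would treat parts~\eqref{item: JHFRK 3} and~\eqref{item: JHFRK 4} in exactly the same fashion, invoking Proposition~\ref{prop:JHRWS} in place of Remark~\ref{remark: JHFLW}. For~\eqref{item: JHFRK 3}, under the hypothesis $\varepsilon > \varepsilon'$, the assertion~\eqref{item:JHRWS 1} of Proposition~\ref{prop:JHRWS} shows that the greatest right $\varepsilon'$-invariance is also a right $\varepsilon$-invariance, whence it is $\leq$ the greatest right $\varepsilon$-invariance. Part~\eqref{item: JHFRK 4} is identical, using the assertion~\eqref{item:JHRWS 2} of Proposition~\ref{prop:JHRWS} for the bounded-length case.

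I do not expect a genuinely hard step: all the mathematical content is already contained in Theorem~\ref{theorem: JRKKA}, Remark~\ref{remark: JHFLW} and Proposition~\ref{prop:JHRWS}, and the corollary is a bookkeeping consequence of them. The only points needing care are keeping the inequalities pointing the right way---intuitively, shrinking the length bound $k$ or enlarging the threshold $\varepsilon$ enlarges the class of admissible invariances and can therefore only raise its greatest element---and remembering that \emph{greatest} here is taken with respect to the usual order $\leq$, so that the extremality in Theorem~\ref{theorem: JRKKA} is exactly the domination used at each step. As a cross-check for parts~\eqref{item: JHFRK 1} and~\eqref{item: JHFRK 2}, one may instead argue directly from the formulas~\eqref{eq: JRKKA 1} and~\eqref{eq: JRKKA 2}: a smaller index set of words yields a larger value of $\bigwedgee$, since in a linear lattice $\bigwedgee\!A = \bigwedge\!A \lor \varepsilon$ is antitone in $A$.
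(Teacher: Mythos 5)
Your proposal is correct and follows essentially the same route as the paper, whose proof consists precisely of invoking Theorem~\ref{theorem: JRKKA} for the existence of the greatest invariances and then deducing the four inequalities from the class inclusions in Remark~\ref{remark: JHFLW} (parts~(a), (b)) and Proposition~\ref{prop:JHRWS} (parts~(c), (d)). You merely spell out the extremality argument the paper leaves implicit, and your cross-check via the antitonicity of $\bigwedgee$ in formulas~\eqref{eq: JRKKA 1} and~\eqref{eq: JRKKA 2} is also sound.
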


\begin{proof}
The existence of the mentioned approximate invariances follows from Theorem~\ref{theorem: JRKKA}. The corollary itself follows immediately from Remark~\ref{remark: JHFLW} and Proposition~\ref{prop:JHRWS}.
\end{proof}

The following theorem is a preparation for using approximate invariances in the soft state reduction algorithm presented in Section~\ref{sec: alg}. 

\begin{theorem}\label{theorem: HGDJL}
Suppose that $\mLe$ is $\otimes$-locally finite and let $\mA = \tuple{\QA, \Sigma, \IA, \TA, \FA}$ be an \FfA. 
There exists the smallest number $k \in \NN$ such that 
    \begin{equation}
    \{(\FA_w)_\varepsilon \mid w \in \Sigma^* \textrm{ with } |w| \leq k\} = \{(\FA_w)_\varepsilon \mid w \in \Sigma^* \textrm{ with } |w| \leq k+1\}. \label{eq: HGDJL 1}
    \end{equation}
With that $k$, the greatest right $(\varepsilon,k)$-invariance on $\mA$ is also the greatest right $\varepsilon$-invariance on~$\mA$. 
Furthermore, they are equal to
    \begin{equation}
    \bigwedgee \{(\FA_w)_\varepsilon\ \slashe (\FA_w)_\varepsilon \mid w \in \Sigma^* \textrm{ with } |w| \leq k\}. \label{eq: HGDJL 2}
    \end{equation}
\end{theorem}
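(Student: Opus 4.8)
The plan is to prove the statement in three stages. First I would show that, thanks to $\otimes$-local finiteness of $\mLe$ and finiteness of $\mA$, only finitely many truncated fuzzy sets $(\FA_w)_\varepsilon$ can occur, so the nested family in~\eqref{eq: HGDJL 1} must stabilize and a smallest such $k$ exists. Second, and most importantly, I would propagate this stabilization to words of all lengths. Finally I would invoke Corollary~\ref{cor: JRKKA} to identify the two greatest invariances with the expression in~\eqref{eq: HGDJL 2}.

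For the finiteness stage, let $V = \{\TA_\sigma(p,q) \mid \sigma \in \Sigma,\ p,q \in \QA\} \cup \{\FA(q) \mid q \in \QA\} \cup \{\varepsilon, 1\}$ be the finite set of all transition and final values together with $\varepsilon$ and $1$, and let $M$ be the submonoid of $\tuple{L, \otimese, 1}$ generated by $V$; since $\mLe$ is $\otimes$-locally finite, $M$ is finite. Combining the assertions~\eqref{eq:HGDHO 1},~\eqref{eq:HGDHO 2} and~\eqref{eq:HGDHO 3} of Lemma~\ref{lemma: HGDHO} with the fact that $f \eqe g$ implies $f_\varepsilon = g_\varepsilon$, one obtains the recursion
\[
    (\FA_{\sigma u})_\varepsilon = \bigl(\TA_\sigma \circe (\FA_u)_\varepsilon\bigr)_\varepsilon.
\]
By induction on $|w|$ I would verify that every $(\FA_w)_\varepsilon$ takes values in $M$: the right-hand side is formed from $\otimese$-products of elements of $V$ with values of $(\FA_u)_\varepsilon$, followed by a $\bigveee$ over the finite set $\QA$ and an $\varepsilon$-truncation; as $L$ is linear the supremum is merely a maximum and therefore stays in $M$, and truncation keeps values in $M$ because $\varepsilon \in M$. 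Since $\QA$ and $M$ are finite, the family $\{(\FA_w)_\varepsilon \mid w \in \Sigma^*\}$ is finite, so the increasing chain $S_k \defeq \{(\FA_w)_\varepsilon \mid |w| \leq k\}$ stabilizes and admits a least index $k$ with $S_k = S_{k+1}$.

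The hard part will be the propagation step, where I would show by induction that $S_k = S_{k+1}$ forces $S_m = S_k$ for every $m \geq k$. This cannot follow from set-stabilization alone, since an increasing chain may well satisfy $S_k = S_{k+1} \subsetneq S_{k+2}$; the transition recursion is essential. For the inductive step it suffices to handle a word $w$ of length $k+2$. Writing $w = \sigma u$ with $|u| = k+1$, the inductive hypothesis $S_{k+1} = S_k$ yields a word $u'$ with $|u'| \leq k$ and $(\FA_{u'})_\varepsilon = (\FA_u)_\varepsilon$; substituting into the recursion gives $(\FA_{\sigma u})_\varepsilon = (\FA_{\sigma u'})_\varepsilon$, and since $|\sigma u'| \leq k+1$ this element already lies in $S_{k+1}$. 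Hence $S_{k+2} = S_{k+1}$, and the induction closes.

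Finally, $S_m = S_k$ for all $m \geq k$ means that $\{(\FA_w)_\varepsilon \mid w \in \Sigma^*\}$ and $\{(\FA_w)_\varepsilon \mid |w| \leq k\}$ coincide as sets of fuzzy subsets of $\QA$. Consequently the two families $\{(\FA_w)_\varepsilon \slashe (\FA_w)_\varepsilon \mid w \in \Sigma^*\}$ and $\{(\FA_w)_\varepsilon \slashe (\FA_w)_\varepsilon \mid |w| \leq k\}$ are equal, so their $\bigwedgee$'s agree. By Corollary~\ref{cor: JRKKA} the former is the greatest right $\varepsilon$-invariance and the latter is the greatest right $(\varepsilon,k)$-invariance on $\mA$; thus the two invariances coincide and both equal the expression~\eqref{eq: HGDJL 2}, as required.
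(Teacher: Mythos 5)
Your proposal is correct and follows essentially the same route as the paper's proof: finiteness of $\{(\FA_w)_\varepsilon \mid w \in \Sigma^*\}$ from $\otimes$-local finiteness gives the least $k$, an induction on word length propagates the stabilization \eqref{eq: HGDJL 1} to all lengths via the one-step transition identity, and Corollary~\ref{cor: JRKKA} then identifies both greatest invariances with \eqref{eq: HGDJL 2}. Your only genuine additions are cosmetic refinements of the same argument: you make the finiteness step explicit by exhibiting the finite generated submonoid $M$ (the paper merely asserts finiteness), and you package the paper's chain of $\eqe$-manipulations from Lemma~\ref{lemma: HGDHO} into the single exact recursion $(\FA_{\sigma u})_\varepsilon = \bigl(\TA_\sigma \circe (\FA_u)_\varepsilon\bigr)_\varepsilon$, which cleanly replaces the paper's implicit use of the fact that $\eqe$-equivalent truncated sets coincide.
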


\begin{proof}
Since $\mLe$ is $\otimes$-locally finite and $\mA = \tuple{\QA, \Sigma, \IA, \TA, \FA}$ is finite, the set $\{(\FA_w)_\varepsilon \mid w \in \Sigma^*\}$ is finite. Therefore, there exists the smallest number $k \in \NN$ such that \eqref{eq: HGDJL 1} holds. Fix that $k$ and denote
\begin{equation}\label{eq: JHKWL}
\mF = \{(\FA_w)_\varepsilon \mid w \in \Sigma^* \textrm{ with } |w| \leq k\}.
\end{equation}

Given $u \in \Sigma^*$ with $|u| \geq k+1$, we first prove by induction on $|u|$ that \mbox{$(\FA_u)_\varepsilon \in \mF$}.
The base case occurs when $|u| = k+1$ and holds due to~\eqref{eq: HGDJL 1}. For the induction step, assume that \mbox{$(\FA_u)_\varepsilon \in \mF$} for all $u \in \Sigma^*$ with $|u| \leq h$, for some fixed $h \geq k+1$. Consider any $v = \sigma u \in \Sigma^*$ with $|u| = h$ and $\sigma \in \Sigma$. We need to prove that \mbox{$(\FA_v)_\varepsilon \in \mF$}.
By the inductive assumption, there exists $w' \in \Sigma^*$ such that $|w'| \leq k$ and $(\FA_u)_\varepsilon = (\FA_{w'})_\varepsilon$. 
Also by the inductive assumption, \mbox{$(\FA_{\sigma w'})_\varepsilon \in \mF$}.
By using the assertions of Lemma~\ref{lemma: HGDHO}, we have 
\begin{align*}
(\FA_v)_\varepsilon & = (\FA_{\sigma u})_\varepsilon = (\TA_\sigma \circ \FA_u)_\varepsilon 
    \eqe \TA_\sigma \circ \FA_u 
    \eqe \TA_\sigma \circe \FA_u 
    \eqe \TA_\sigma \circe (\FA_u)_\varepsilon \\
& 
    = \TA_\sigma \circe (\FA_{w'})_\varepsilon 
    \eqe \TA_\sigma \circe \FA_{w'} 
    \eqe \TA_\sigma \circ \FA_{w'} 
    = \FA_{\sigma w'} 
    \eqe (\FA_{\sigma w'})_\varepsilon.
\end{align*}
Since \mbox{$(\FA_{\sigma w'})_\varepsilon \in \mF$}, it follows that \mbox{$(\FA_v)_\varepsilon \in \mF$}, which completes the induction proof, and as a consequence, we obtain
\begin{equation}\label{eq: JHKWL 2}
\mF = \{(\FA_w)_\varepsilon \mid w \in \Sigma^*\}.
\end{equation}

By~\eqref{eq: JHKWL}, \eqref{eq: JHKWL 2} and Corollary~\ref{cor: JRKKA}, we conclude that the greatest right $(\varepsilon,k)$-invariance on $\mA$ is equal to the greatest right $\varepsilon$-invariance on $\mA$, and they are equal to the expression specified by~\eqref{eq: HGDJL 2}.
\end{proof}

\section{Soft state reduction of fuzzy automata}
\label{section: JHRJS}

It was shown in~\cite{SCI.14} how weakly right invariant \FPOs can be used for state reduction of fuzzy automata. In this section, we generalize that method of~\cite{SCI.14} for using $\varepsilon$-\FPOs which are right $\varepsilon$-invariances or right $(\varepsilon,k)$-invariances for soft state reduction of fuzzy automata (i.e., to produce a smaller fuzzy automaton that approximates the original one when possible).

Given a fuzzy relation $Z$ on $Q$, for $q \in Q$, the {\em $Z$-afterset} $Z_q$ and the {\em $Z$-foreset} $Z^q$ of $q$ are the fuzzy subsets of $Q$ defined as follows: $Z_q(p) = Z(q, p)$ and $Z^q(p) = Z(p,q)$, for $p \in Q$. The set of all $Z$-aftersets of states from $Q$ is denoted as $Q/Z$. 
It was proved in~\cite[Theorem~3.1]{SCI.14} that, if $Z$ is an \FPO on $Q$, then for any $p,q \in Q$, $Z_p = Z_q$ iff $Z^p = Z^q$. 
We generalize this property for $\varepsilon$-\FPOs as follows. 

\begin{lemma}\label{lemma: HGKWK}
If $Z$ is an $\varepsilon$-\FPO on $Q$, then for any $p,q \in Q$, $Z_p = Z_q$ iff $Z^p = Z^q$. 
\end{lemma}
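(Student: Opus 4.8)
The plan is to prove the two implications of the biconditional symmetrically, exploiting the reflexivity and $\varepsilon$-transitivity of $Z$ together with the defining property $Z = Z_\varepsilon$. First I would recall what the objects mean: $Z_p = Z_q$ as fuzzy subsets of $Q$ says that $Z(p,r) = Z(q,r)$ for all $r \in Q$, while $Z^p = Z^q$ says $Z(r,p) = Z(r,q)$ for all $r \in Q$. By the obvious symmetry between aftersets and foresets (replacing $Z$ by $\cnv{Z}$, which is again reflexive and $\varepsilon$-transitive with $\cnv{Z} = (\cnv{Z})_\varepsilon$), it suffices to prove one direction, say that $Z_p = Z_q$ implies $Z^p = Z^q$; the converse then follows by applying the same argument to $\cnv{Z}$.

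So I would assume $Z_p = Z_q$ and aim to show $Z(r,p) = Z(r,q)$ for an arbitrary $r \in Q$; by symmetry in $p,q$ it is enough to establish $Z(r,p) \le Z(r,q)$ (and the reverse inequality is identical with $p,q$ swapped). The key idea follows the original \FPO proof from~\cite{SCI.14}: compose along the chain $r \to p \to q$. Concretely, I expect to combine $Z(r,p)$ with the entry $Z(p,q)$. Here reflexivity enters: since $id_Q \le Z$, we have $Z(q,q) = 1$, and the hypothesis $Z_p = Z_q$ evaluated at the argument $q$ gives $Z(p,q) = Z(q,q) = 1$. Thus $Z(r,p) \otimes Z(p,q) = Z(r,p) \otimes 1 = Z(r,p)$, and by the definition of composition, $Z(r,p) \le (Z \circ Z)(r,q)$.

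The next step is to push this through $\varepsilon$-transitivity. Since $Z$ is $\varepsilon$-transitive we have $Z \circe Z \lee Z$, and by~\eqref{eq:HGDHO 2} together with the remark preceding Proposition~\ref{prop: JHFLW} we know $Z \circ Z \eqe Z \circe Z$, so that $(Z \circ Z)(r,q) \lee Z(r,q)$. Combined with the previous inequality this yields $Z(r,p) \lee Z(r,q)$. The remaining obstacle, and the step I expect to require the most care, is converting this $\lee$ into an actual $\le$. Here is where $Z = Z_\varepsilon$ is used: unfolding $x \lee y$, either $Z(r,p) \le Z(r,q)$ (done), or $Z(r,p) \le \varepsilon$. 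In the latter case, because $Z = Z_\varepsilon$ and $Z$ is reflexive with entries at least $\varepsilon$ wherever they are nonzero, $Z(r,p) \le \varepsilon$ forces $Z(r,p) = \varepsilon \le Z(r,q)$ as well, since every entry of $Z_\varepsilon$ is either strictly above $\varepsilon$ or exactly $\varepsilon$. Either way $Z(r,p) \le Z(r,q)$. Swapping $p$ and $q$ gives equality, completing the direction; the converse follows by the $\cnv{Z}$ symmetry noted above.
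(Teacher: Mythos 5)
Your proof is correct and follows essentially the same route as the paper's: the same chain $Z(r,p) = Z(r,p)\otimes Z(p,q) \leq (Z \circ Z)(r,q) \eqe (Z \circe Z)(r,q) \lee Z(r,q)$ via reflexivity and $\varepsilon$-transitivity, followed by converting $\lee$ into $\leq$ using $Z = Z_\varepsilon$ (which you spell out more explicitly than the paper, correctly observing that every entry of $Z_\varepsilon$ is at least $\varepsilon$). The only cosmetic difference is that you justify the converse direction formally via the inverse relation $\cnv{Z}$, where the paper simply states that it can be proved analogously.
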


\begin{proof}
Let $Z$ be an $\varepsilon$-\FPO on $Q$. For the left-to-right implication, suppose $Z_p = Z_q$. 
Thus, $Z(p,q) = Z(q,q) = 1$. Let $s \in Q$ be arbitrary. By using~\eqref{eq:HGDHO 2}, we have 
\[
    Z(s,p) = Z(s,p) \otimes Z(p,q) \leq (Z \circ Z)(s,q) \eqe (Z \circe Z)(s,q) \lee Z(s,q).
\]
That is, $Z(s,p) \lee Z(s,q)$. Since $Z = Z_\varepsilon$, it follows that $Z(s,p) \leq Z(s,q)$. Similarly, $Z(s,q) \leq Z(s,p)$, and consequently, $Z^p = Z^q$. 
The right-to-left implication can be proved analogously.
\end{proof}

In what follows, let $\mA = \tuple{\QA, \Sigma, \IA, \TA, \FA}$ be a fuzzy automaton, $\varepsilon$ a value from $L$ and $Z$ an $\varepsilon$-\FPO on $\QA$. 
The {\em $(Z,\varepsilon)$-afterset fuzzy automaton} of $\mA$, denoted by $\mA_{Z,\varepsilon}$, is defined to be the fuzzy automaton $\tuple{\QAp, \Sigma, \IAp, \TAp, \FAp}$ such that:
\[
    \QAp = \QA/Z,\quad
    \IAp(Z_q) = \IA \circe Z^q,\quad
    \TAp(Z_q,\sigma,Z_p) = Z_q \circe \TA_\sigma \circe Z^p,\quad
    \FAp(Z_p) = Z_p \circe \FA,
\]
for $p, q \in \QA$ and $\sigma \in \Sigma$.
By Lemma~\ref{lemma: HGKWK}, the fuzzy automaton $\mA_{Z,\varepsilon}$ is well-defined. 

Given an \FfA $\mA = \tuple{\QA, \Sigma, \IA, \TA, \FA}$ with $n$ states and $\QA = \{0,\ldots,n-1\}$, we can identify any $Z_q \in \QA/Z$ with the smallest number $p \in \QA$ such that $Z_p = Z_q$. We use this convention in the following example. 

\begin{example}\label{example: HEJSA 2}
Let us continue Examples~\ref{example: HEJSA} and~\ref{example: JHDKS}. Let $\mL$, $\mA = \tuple{\QA, \Sigma, \IA, \TA, \FA}$, $Z_{0.1}$, $Z_{0.2}$, $Z_{(0, 3)}$ and $Z_{(0, 4)}$ be as in those examples. 

Observe that $(Z_{0.1})_p = (Z_{0.1})_q$ for $(p,q) \in \{(1,5), (2,6)\} \subset \QA \times \QA$. Thus, the fuzzy automaton $\mA_{Z_{0.1},0.1} = \tuple{\QAp, \Sigma, \IAp, \TAp, \FAp}$ has $\QAp = \QA/Z_{0.1} = \{0,1,2,3,4\}$ and it can be verified that\footref{footnote: JDJSA}  
{\footnotesize\[
\IAp =\begin{bmatrix} 1 & 0.25 & 0.2 & 1 & 1 \end{bmatrix},\qquad
\TAp_\sigma =\begin{bmatrix}
5/12 & 0.6 & 0.2 & 0.3 & 0.5 \\
5/12 & 0.25 & 0.8 & 0.4 & 0.5 \\
5/12 & 0.25 & 0.2 & 0.25 & 0.5 \\
5/6 & 0.25 & 0.2 & 0.5 & 1 \\ 
5/12 & 0.5 & 0.2 & 0.25 & 0.5
\end{bmatrix},\qquad
\FAp =\begin{bmatrix} 0.1 \\ 0.1 \\ 0.5 \\ 0.1 \\ 0.1 \end{bmatrix}.
\]}

Also observe that:
\begin{itemize}
\item The set of states of  $\mA_{Z_{0.2},0.2}$ is $\QA/Z_{0.2} = \{0,1,2,3\}$.
\item The set of states of  $\mA_{Z_{(0,3)},0}$ is $\QA/Z_{(0,3)} = \{0,1,2,3,4,6\}$.
\item The set of states of  $\mA_{Z_{(0,4)},0}$ is $\QA/Z_{(0,4)} = \{0,1,2,3,4,5,6\}$. That is, $|\QA/Z_{(0,4)}| = |\QA|$.
\myend
\end{itemize}
\end{example}

\begin{proposition}\label{prop: HRKWG}
Let $\mA = \tuple{\QA, \Sigma, \IA, \TA, \FA}$ be a fuzzy automaton, $\varepsilon$ a value from $L$ and $Z, Z'$ $\varepsilon$-\FPOs on $\QA$.
If $Z' \leq Z$, then $\dbloverline{\QA/Z'} \geq \dbloverline{\QA/Z}$, which means that $\mA_{Z,\varepsilon}$ has at most as many states as $\mA_{Z',\varepsilon}$.\footnote{We write $\dbloverline{X}$ to denote the cardinality of a set $X$. When $X$ is finite, we also write $|X|$ to denote the number of elements of~$X$.}
\end{proposition}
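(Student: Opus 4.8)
The plan is to exhibit a surjection from $\QA/Z'$ onto $\QA/Z$, from which the cardinality inequality $\dbloverline{\QA/Z'} \ge \dbloverline{\QA/Z}$ follows. The natural candidate is the map $\phi : \QA/Z' \to \QA/Z$ defined by $\phi(Z'_q) = Z_q$ for $q \in \QA$. Surjectivity is immediate, since every $Z$-afterset $Z_q$ is the image of the $Z'$-afterset $Z'_q$. The whole difficulty therefore reduces to showing that $\phi$ is well-defined, that is, that $Z'_p = Z'_q$ implies $Z_p = Z_q$ for all $p, q \in \QA$.

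To prove this implication, I would first record a convenient characterization of afterset equality for $\varepsilon$-\FPOs: for any $\varepsilon$-\FPO $r$ on $\QA$ and $p,q \in \QA$, one has $r_p = r_q$ iff $r(p,q) = r(q,p) = 1$. The forward direction is immediate from reflexivity, since $r_p = r_q$ forces $r(p,q) = r_p(q) = r_q(q) = r(q,q) = 1$ and symmetrically $r(q,p) = r_q(p) = r_p(p) = 1$. The backward direction is the substantive step and mirrors the proof of Lemma~\ref{lemma: HGKWK}: assuming $r(p,q) = r(q,p) = 1$, for an arbitrary $s \in \QA$ we have $r(q,s) = r(p,q) \otimes r(q,s) \le (r \circ r)(p,s)$, and by~\eqref{eq:HGDHO 2} together with the $\varepsilon$-transitivity $r \circe r \lee r$ this yields $r(q,s) \lee r(p,s)$ (chaining $\le$, $\eqe$ and $\lee$, all of which are weaker than or equal to the pre-order $\lee$). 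Since $r = r_\varepsilon$ guarantees $r(p,s) \ge \varepsilon$, the relation $r(q,s) \lee r(p,s)$ collapses to $r(q,s) \le r(p,s)$; the symmetric argument, using $r(q,p) = 1$, gives the reverse inequality, hence $r(p,s) = r(q,s)$ for every $s$, i.e. $r_p = r_q$.

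With this characterization in hand, the well-definedness of $\phi$ is short. Suppose $Z'_p = Z'_q$. Applying the characterization to the $\varepsilon$-\FPO $Z'$ gives $Z'(p,q) = Z'(q,p) = 1$. Since $Z' \le Z$ pointwise and all membership degrees are bounded above by $1$, we obtain $Z(p,q) = Z(q,p) = 1$. Applying the characterization to the $\varepsilon$-\FPO $Z$ now yields $Z_p = Z_q$, so $\phi$ is well-defined. A surjection $\phi : \QA/Z' \to \QA/Z$ then implies $\dbloverline{\QA/Z'} \ge \dbloverline{\QA/Z}$; when $\QA$ is finite this is precisely the statement that $\mA_{Z,\varepsilon}$ has at most as many states as $\mA_{Z',\varepsilon}$.

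The main obstacle is the backward direction of the afterset-equality characterization, since it is exactly there that all three defining properties of an $\varepsilon$-\FPO are used together: reflexivity to turn $1$-valued entries into a usable factor, $\varepsilon$-transitivity (via~\eqref{eq:HGDHO 2}) to push $r(q,s)$ up to $r(p,s)$ modulo the tolerance $\varepsilon$, and the idempotence $Z = Z_\varepsilon$ to upgrade the resulting $\lee$ into a genuine $\le$. Once that step is established, transferring the $1$-valued entries across the inclusion $Z' \le Z$ and concluding the cardinality comparison are both routine.
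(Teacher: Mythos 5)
Your proof is correct and follows essentially the same route as the paper's: both reduce the claim to showing that $Z'_p = Z'_q$ implies $Z_p = Z_q$ (the paper leaves the surjection $Z'_q \mapsto Z_q$ implicit), transfer the $1$-valued entries across $Z' \leq Z$, and then run the identical chain $Z(p,s) = Z(q,p) \otimes Z(p,s) \lee (Z \circe Z)(q,s) \eqe Z(q,s)$ using reflexivity, $\varepsilon$-transitivity via~\eqref{eq:HGDHO 2}, and $Z = Z_\varepsilon$ to upgrade $\lee$ to $\leq$. Your only departure is cosmetic: you isolate the characterization $r_p = r_q$ iff $r(p,q) = r(q,p) = 1$ as a standalone lemma, which is a clean repackaging of the same computation.
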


\begin{proof}
Suppose $Z' \leq Z$. It suffices to prove that, for any $p,q \in \QA$, if $Z'_p = Z'_q$, then $Z_p = Z_q$. Suppose $Z'_p = Z'_q$. Thus, $Z'(q,p) = Z'(p,p) = 1$. Since $Z' \leq Z$, it follows that $Z(q,p) = 1$. Let $s$ be an arbitrary state from~$Q$. Since $Z \circe Z \eqe Z$ and $Z_\varepsilon = Z$, we have 
\[ Z(p,s) = Z(q,p) \otimes Z(p,s) = Z(q,p) \otimese Z(p,s) \lee (Z \circe Z)(q,s) \eqe Z(q,s). \]
That is, $Z(p,s) \lee Z(q,s)$, and consequently, $Z(p,s) \leq Z(q,s)$ (since $Z_\varepsilon = Z$). Similarly, we can also derive $Z(q,s) \leq Z(p,s)$. 
Hence, $Z(q,s) = Z(p,s)$, for any $s \in \QA$. Therefore, $Z_p = Z_q$, which completes the proof.   
\end{proof}

\begin{corollary}\label{cor: JHFRK2}
Let $\varepsilon, \varepsilon' \in L$, $k,k' \in \NN$ and let $\mA = \tuple{\QA, \Sigma, \IA, \TA, \FA}$ be a fuzzy automaton. 
\begin{enumerate}[(a)]
\item\label{item: JHFRK2 1} If $Z$ and $Z'$ are the greatest right $(\varepsilon,k)$-invariance and the greatest right $\varepsilon$-invariance on $\mA$, respectively, then $\mA_{Z,\varepsilon}$ has at most as many states as $\mA_{Z',\varepsilon}$.

\item\label{item: JHFRK2 2} If $k > k'$, and $Z$ and $Z'$ are the greatest right $(\varepsilon,k)$-invariance and the greatest right $(\varepsilon,k')$-invariance on $\mA$, respectively, then $\mA_{Z',\varepsilon}$ has at most as many states as $\mA_{Z,\varepsilon}$.

\item\label{item: JHFRK2 3} If $\varepsilon > \varepsilon'$, and $Z$ and $Z'$ are the greatest right $\varepsilon$-invariance and the greatest right $\varepsilon'$-invariance on $\mA$, respectively, then $\mA_{Z,\varepsilon}$ has at most as many states as $\mA_{Z',\varepsilon}$.

\item\label{item: JHFRK2 4} If $\varepsilon > \varepsilon'$, and $Z$ and $Z'$ are the greatest right $(\varepsilon,k)$-invariance and the greatest right $(\varepsilon',k)$-invariance on $\mA$, respectively, then $\mA_{Z,\varepsilon}$ has at most as many states as $\mA_{Z',\varepsilon}$.
\end{enumerate}
\end{corollary}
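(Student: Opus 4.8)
The plan is to reduce all four parts to one mechanism: first pin down the inclusion order between the two relations $Z$ and $Z'$ with the help of Corollary~\ref{cor: JHFRK}, and then convert that inclusion into a comparison of state counts via Proposition~\ref{prop: HRKWG}. The enabling background fact is Corollary~\ref{cor: KGBXH}, which tells us that every greatest right $\varepsilon$-invariance and every greatest right $(\varepsilon,k)$-invariance is an $\varepsilon$-\FPO; moreover, recall that the number of states of an afterset automaton $\mA_{Z',\varepsilon}$ is exactly $\dbloverline{\QA/Z'}$, which depends only on $Z'$ as a relation and not on the threshold decorating the construction.

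First I would settle parts~\eqref{item: JHFRK2 1} and~\eqref{item: JHFRK2 2}, in which both invariances are taken with respect to the same $\varepsilon$, so that $Z$ and $Z'$ are both $\varepsilon$-\FPOs by Corollary~\ref{cor: KGBXH}. For~\eqref{item: JHFRK2 1}, part~\eqref{item: JHFRK 1} of Corollary~\ref{cor: JHFRK} gives $Z' \leq Z$, whence Proposition~\ref{prop: HRKWG} yields $\dbloverline{\QA/Z'} \geq \dbloverline{\QA/Z}$, i.e.\ $\mA_{Z,\varepsilon}$ has at most as many states as $\mA_{Z',\varepsilon}$. For~\eqref{item: JHFRK2 2}, part~\eqref{item: JHFRK 2} of Corollary~\ref{cor: JHFRK} gives the reverse inclusion $Z \leq Z'$ (because $k > k'$), and Proposition~\ref{prop: HRKWG} applied with the roles of $Z$ and $Z'$ swapped shows that $\mA_{Z',\varepsilon}$ has at most as many states as $\mA_{Z,\varepsilon}$. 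Both are immediate compositions of the two cited results.

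The delicate case is parts~\eqref{item: JHFRK2 3} and~\eqref{item: JHFRK2 4}, where the two invariances carry different thresholds: $Z$ is an $\varepsilon$-\FPO while $Z'$ is only guaranteed to be an $\varepsilon'$-\FPO, so Proposition~\ref{prop: HRKWG}, whose statement presupposes that \emph{both} relations are $\varepsilon$-\FPOs, cannot be quoted verbatim. The way around this is to revisit the proof of Proposition~\ref{prop: HRKWG} and observe that it uses the full $\varepsilon$-\FPO strength only of the \emph{larger} relation (through $Z \circe Z \eqe Z$ and $Z_\varepsilon = Z$), together with mere reflexivity of the smaller relation and the inclusion between them; it never appeals to $\varepsilon$-transitivity or $\varepsilon$-truncation of the smaller one. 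Since an $\varepsilon'$-\FPO is in particular reflexive, this weaker reading of the proposition applies with $Z$ as the larger relation and $Z'$ as the smaller. Concretely, for~\eqref{item: JHFRK2 3} part~\eqref{item: JHFRK 3} of Corollary~\ref{cor: JHFRK} gives $Z' \leq Z$ (as $\varepsilon > \varepsilon'$), and for~\eqref{item: JHFRK2 4} part~\eqref{item: JHFRK 4} of Corollary~\ref{cor: JHFRK} likewise gives $Z' \leq Z$; in each case the reasoning of Proposition~\ref{prop: HRKWG} shows that the $Z'$-afterset partition refines the $Z$-afterset partition, so $\dbloverline{\QA/Z'} \geq \dbloverline{\QA/Z}$ and $\mA_{Z,\varepsilon}$ has at most as many states as $\mA_{Z',\varepsilon}$. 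I expect this mismatch of thresholds in~\eqref{item: JHFRK2 3} and~\eqref{item: JHFRK2 4} to be the only genuine obstacle; everything else is bookkeeping of inclusions.
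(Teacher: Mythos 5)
Your proof is correct and, for parts~(a) and~(b), it is exactly the paper's argument: the paper's entire proof is the remark that the corollary ``immediately follows from Corollaries~\ref{cor: KGBXH} and~\ref{cor: JHFRK}, and Proposition~\ref{prop: HRKWG}''. The genuinely valuable part of your write-up is the point you raise about parts~(c) and~(d), and you are right about it: there Corollary~\ref{cor: KGBXH} only guarantees that $Z'$ is an $\varepsilon'\!$-\FPO, and an $\varepsilon'\!$-\FPO need not equal its $\varepsilon$-truncation when $\varepsilon' < \varepsilon$ (its values need only be $\geq \varepsilon'$, and may lie strictly between $\varepsilon'$ and $\varepsilon$), so Proposition~\ref{prop: HRKWG} --- whose hypothesis requires \emph{both} relations to be $\varepsilon$-\FPOs for the same $\varepsilon$ --- does not apply verbatim; the paper's ``immediately follows'' glosses over this. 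Your repair is sound: the proof of Proposition~\ref{prop: HRKWG} invokes the $\varepsilon$-\FPO properties ($Z \circe Z \eqe Z$ and $Z_\varepsilon = Z$) only for the larger relation, while of the smaller relation it uses only reflexivity (to get $Z'(p,p) = 1$) together with the inclusion $Z' \leq Z$, and every $\varepsilon'\!$-\FPO is reflexive. For completeness, note there is also a patch that stays at the level of quoted statements rather than reopening a proof: replace $Z'$ by its truncation $Z'_\varepsilon$, which is an $\varepsilon$-\FPO by Proposition~\ref{prop: JHFLW}\eqref{item: JHFLW 3} (using $\varepsilon' < \varepsilon$), satisfies $Z'_\varepsilon \leq Z_\varepsilon = Z$ by monotonicity of truncation, and has $\dbloverline{\QA/Z'_\varepsilon} \leq \dbloverline{\QA/Z'}$ because truncation is pointwise, so $Z'_p = Z'_q$ implies $(Z'_\varepsilon)_p = (Z'_\varepsilon)_q$; then Proposition~\ref{prop: HRKWG} as stated yields $\dbloverline{\QA/Z} \leq \dbloverline{\QA/Z'_\varepsilon} \leq \dbloverline{\QA/Z'}$, which is the desired state-count comparison since, as you correctly observe, the number of states of an afterset automaton depends only on the relation and not on the threshold decorating the construction.
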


This corollary immediately follows from Corollaries~\ref{cor: KGBXH} and~\ref{cor: JHFRK}, and Proposition~\ref{prop: HRKWG}.

\begin{lemma}\label{lemma:KJKZR}
Let $\mA = \tuple{\QA, \Sigma, \IA, \TA, \FA}$ be a fuzzy automaton, $\varepsilon$ a value from $L$ and $Z$ an $\varepsilon$-\FPO on $\QA$. 
For any $w = \sigma_1\ldots\sigma_n \in \Sigma^*$ with $n \geq 0$ and $\sigma_1,\ldots,\sigma_n \in \Sigma$, we have
\begin{equation}\label{eq:fuzzyLangAR}
	\bL({\mA_{Z,\varepsilon}})(w) \eqe \IA \circe Z \circe \TA_{\sigma_1} \circe Z \circe \cdots \circe \TA_{\sigma_n} \circe Z \circe \FA.
\end{equation}
When $w = \epsilon$ (the empty word), the above equation has the form:\footnote{We use different symbols, $\varepsilon$ and $\epsilon$, for different purposes. The symbol $\varepsilon$ denotes a value from $L$, while $\epsilon$ denotes the empty word.}
\begin{equation}\label{eq:fuzzyLangARe}
	\bL({\mA_{Z,\varepsilon}})(\epsilon) \eqe \IA \circe Z \circe \FA.
\end{equation}
\end{lemma}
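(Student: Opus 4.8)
The plan is to unfold $\bL(\mA_{Z,\varepsilon})(w)$ as an iterated \emph{standard} composition over the quotient state set $\QAp=\QA/Z$, to rewrite that composition over the original states $\QA$, and then to collapse the resulting chain using the identity $Z\circe Z\eqe Z$. As a preliminary step I would record the components of $\mA_{Z,\varepsilon}$ in terms of the original data: unfolding the defining $\circe$-compositions, together with $Z_q(p)=Z(q,p)$ and $Z^q(p)=Z(p,q)$, yields $\IAp(Z_q)=(\IA\circe Z)(q)$, $\FAp(Z_q)=(Z\circe\FA)(q)$ and $\TAp(Z_q,\sigma,Z_p)=(Z\circe\TA_\sigma\circe Z)(q,p)$ for all $p,q\in\QA$ and $\sigma\in\Sigma$. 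By Lemma~\ref{lemma: HGKWK} these values do not depend on the chosen representatives, so $\mA_{Z,\varepsilon}$ is well-defined and the substitutions below are legitimate.

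Next I would expand the language. By definition $\bL(\mA_{Z,\varepsilon})(w)=\IAp\circ\TAp_{\sigma_1}\circ\cdots\circ\TAp_{\sigma_n}\circ\FAp$, that is, the supremum over all tuples of aftersets $(Z_{q_0},\ldots,Z_{q_n})$ of the product $\IAp(Z_{q_0})\otimes\TAp(Z_{q_0},\sigma_1,Z_{q_1})\otimes\cdots\otimes\TAp(Z_{q_{n-1}},\sigma_n,Z_{q_n})\otimes\FAp(Z_{q_n})$. Since each factor depends on its arguments only through the corresponding afterset and $q\mapsto Z_q$ maps $\QA$ onto $\QAp$, idempotence of suprema lets me replace the supremum over tuples of aftersets by the supremum over tuples $(q_0,\ldots,q_n)\in\QA^{n+1}$ of states. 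Substituting the three component formulas then rewrites $\bL(\mA_{Z,\varepsilon})(w)$ as the standard composition $(\IA\circe Z)\circ(Z\circe\TA_{\sigma_1}\circe Z)\circ\cdots\circ(Z\circe\TA_{\sigma_n}\circe Z)\circ(Z\circe\FA)$ over $\QA$.

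Finally I would convert and collapse. Applying the relevant parts of Lemma~\ref{lemma: HGDHO} (the identities $f\circ r\eqe f\circe r$ and its variants, and the $\eqe$-compatibility of $\circe$) turns each outer $\circ$ into $\circe$ up to $\eqe$; associativity of $\circe$ then removes all parentheses, leaving a factor $Z\circe Z$ between $\IA$ and $\TA_{\sigma_1}$, between every pair of consecutive $\TA_{\sigma_i}$, and before $\FA$. Because $Z$ is an $\varepsilon$-\FPO it is reflexive and $\varepsilon$-transitive, whence $Z\circe Z\eqe Z$; collapsing each such factor (again via $\eqe$-compatibility of $\circe$) produces exactly the right-hand side of \eqref{eq:fuzzyLangAR}. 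Equation \eqref{eq:fuzzyLangARe} is the case $n=0$, where the chain reduces to $(\IA\circe Z)\circ(Z\circe\FA)\eqe\IA\circe Z\circe Z\circe\FA\eqe\IA\circe Z\circe\FA$.

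The hard part is purely one of careful bookkeeping between the two composition operators: the components of $\mA_{Z,\varepsilon}$ are built with $\circe$, whereas its recognized language is computed with the ordinary $\circ$, so the argument must pass between them at the right moments and may only ever claim equivalences up to $\eqe$ rather than equalities. The single genuinely nonroutine point is the passage from the supremum over aftersets to the supremum over states, which hinges on the well-definedness supplied by Lemma~\ref{lemma: HGKWK} (ensuring each summand factors through $\QA/Z$) together with the idempotence of $\bigvee$.
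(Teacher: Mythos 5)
Your proposal is correct and takes essentially the same route as the paper's proof: both expand $\bL(\mA_{Z,\varepsilon})(w)$ as a supremum over tuples of states of products of the component values of $\mA_{Z,\varepsilon}$, pass from $\otimes$/$\circ$ to $\otimese$/$\circe$ up to $\eqe$ via Lemma~\ref{lemma: HGDHO}, and then use associativity of $\circe$ together with $Z \circe Z \eqe Z$ (from $Z$ being an $\varepsilon$-\FPO) to collapse the chain. The only difference is cosmetic: you spell out the passage from the supremum over afterset tuples in $\QA/Z$ to the supremum over state tuples in $\QA$ (justified by Lemma~\ref{lemma: HGKWK} and surjectivity of $q \mapsto Z_q$), a step the paper's proof performs silently.
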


\begin{proof}
Denote $\mA_{Z,\varepsilon} = \tuple{\QAp, \Sigma, \IAp, \TAp, \FAp}$. We have 
\[
\begin{array}{l}
\bL({\mA_{Z,\varepsilon}})(w) = \IAp \circ \TAp_{\sigma_1} \circ \cdots \circ \TAp_{\sigma_n} \circ \FAp = \\[1ex]
= \bigvee_{q_0,\ldots,q_n \in Q} (\IA \circe Z^{q_0}) \otimes (Z_{q_0} \circe \TA_{\sigma_1} \circe Z^{q_1}) \otimes \cdots \otimes (Z_{q_{n-1}} \circe \TA_{\sigma_n} \circe Z^{q_n}) \otimes (Z_{q_n} \circe \FA) \\[1ex]
\eqe \bigvee_{q_0,\ldots,q_n \in Q} (\IA \circe Z^{q_0}) \otimese (Z_{q_0} \circe \TA_{\sigma_1} \circe Z^{q_1}) \otimese \cdots \otimese (Z_{q_{n-1}} \circe \TA_{\sigma_n} \circe Z^{q_n}) \otimese (Z_{q_n} \circe \FA)
\end{array}
\]
By \eqref{eq:HGDHO 2}--\eqref{eq:HGDHO 5} and due to the associativity of $\circe$, we can continue the above transformation to derive 
\begin{eqnarray*}
\bL({\mA_{Z,\varepsilon}})(w) & \eqe & \IA \circe (Z \circe Z) \circe \TA_{\sigma_1} \circe (Z \circe Z) \circe \cdots \circe \TA_{\sigma_n} \circe (Z \circe Z) \circe \FA.
\end{eqnarray*}
Since $Z$ is an $\varepsilon$-\FPO, we have $Z \circe Z \eqe Z$. Therefore, \eqref{eq:fuzzyLangAR} holds.
\end{proof}

The following theorem states that if an $\varepsilon$-\FPO $Z$ is a right $\varepsilon$-invariance or $(\varepsilon,k)$-invariance on a fuzzy automaton~$\mA$, then the fuzzy automaton $\mA_{Z,\varepsilon}$ approximates~$\mA$. 

\begin{theorem}\label{theorem:JHKWB}
Let $\varepsilon \in L$, $k \in \NN$ and let $\mA = \tuple{\QA, \Sigma, \IA, \TA, \FA}$ be a fuzzy automaton and $Z$ an $\varepsilon$-\FPO on $\QA$. 
\begin{enumerate}[(a)]
\item\label{item:JHKWB 1} If $Z$ is a right $\varepsilon$-invariance on $\mA$, then $\mA_{Z,\varepsilon}$ is $\varepsilon$-equivalent to $\mA$, which means
    \begin{equation}
    \bL(\mA_{Z,\varepsilon}) \eqe \bL(\mA). 
    \end{equation}
\item\label{item:JHKWB 2} If $Z$ is a right $(\varepsilon,k)$-invariance on $\mA$, then $\mA_{Z,\varepsilon}$ is $(\varepsilon,k)$-equivalent to $\mA$, which means
    \begin{equation}
    \bLdb(\mA_{Z,\varepsilon}) \eqe \bLdb(\mA). 
    \end{equation}
\end{enumerate}
\end{theorem}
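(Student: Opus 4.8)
The plan is to prove both parts pointwise: for a fixed word $w = \sigma_1\cdots\sigma_n$ I would establish $\bL(\mA_{Z,\varepsilon})(w) \eqe \bL(\mA)(w)$, from which (a) is immediate and (b) follows after handling word length. The starting point is Lemma~\ref{lemma:KJKZR}, which already gives $\bL(\mA_{Z,\varepsilon})(w) \eqe \IA \circe Z \circe \TA_{\sigma_1} \circe Z \circe \cdots \circe \TA_{\sigma_n} \circe Z \circe \FA$. First I would rewrite this chain with the ordinary composition $\circ$ in place of $\circe$: by \eqref{eq:HGDHO 2} each factor-composition agrees with its $\circe$-version up to $\eqe$, and to move such replacements through the chain I would isolate the auxiliary principle that $\eqe$ is a congruence for $\circ$ in each of its typings (relation with set, relation with relation, and the scalar case $\IA \circ (\cdot)$). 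Each instance follows by sandwiching $\circ$ between $\circe$ via \eqref{eq:HGDHO 2} and invoking the matching congruence among \eqref{eq:HGDHO 3}--\eqref{eq:HGDHO 5}; the scalar instance is checked directly, since a supremum of pointwise $\eqe$-equal values is again $\eqe$-equal.

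The heart of the argument is to collapse the interior copies of $Z$ using right $\varepsilon$-invariance. Recalling $\FA_u = \TA_u \circ \FA$ and the identity $\TA_{\sigma_i} \circ \FA_{\sigma_{i+1}\cdots\sigma_n} = \FA_{\sigma_i\cdots\sigma_n}$ (from $\TA_{uv} = \TA_u \circ \TA_v$), I would run a downward induction on $i$ proving $Z \circ \TA_{\sigma_i} \circ Z \circ \cdots \circ \TA_{\sigma_n} \circ Z \circ \FA \eqe \FA_{\sigma_i\cdots\sigma_n}$. The base case ($i = n+1$, no transition factors) is $Z \circ \FA = Z \circ \FA_\epsilon \eqe \FA_\epsilon = \FA$, which is exactly \eqref{eq:inv1b} at $w = \epsilon$. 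For the step, the induction hypothesis together with the congruence principle and the identity above rewrites the suffix as $Z \circ \TA_{\sigma_i} \circ \FA_{\sigma_{i+1}\cdots\sigma_n} = Z \circ \FA_{\sigma_i\cdots\sigma_n}$, and one further application of \eqref{eq:inv1b} at $w = \sigma_i\cdots\sigma_n$ strips the leading $Z$. Taking $i = 1$ and composing with $\IA$ on the left collapses the whole chain to $\IA \circ \FA_w$, which equals $\bL(\mA)(w)$ by definition; this proves (a).

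For (b) the sole change is that \eqref{eq:inv2b} guarantees $Z \circ \FA_u \eqe \FA_u$ only for $|u| \le k$. When $|w| \le k$, every suffix $\sigma_i\cdots\sigma_n$ met in the induction has length at most $k$, so the argument goes through verbatim and yields $\bLdb(\mA_{Z,\varepsilon})(w) \eqe \bLdb(\mA)(w)$; when $|w| > k$ both truncated languages are $0$, so the equality is trivial. I expect the main obstacle to be purely organizational rather than conceptual: one must ensure that every substitution of an $\eqe$-equal subterm inside the long composition is licensed, which is precisely why I would factor out the congruence principle for $\circ$ at the start instead of appealing to it tacitly at each rewriting step.
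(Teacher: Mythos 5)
Your proposal is correct and takes essentially the same route as the paper's proof: both start from Lemma~\ref{lemma:KJKZR} and collapse the chain from right to left by alternately absorbing $\TA_{\sigma_i}$ into the $\FA$-suffixes (via $\TA_{\sigma_i} \circ \FA_{\sigma_{i+1}\cdots\sigma_n} = \FA_{\sigma_i\cdots\sigma_n}$) and stripping each interior $Z$ with~\eqref{eq:inv1b}, handling part~(b) identically by noting that every suffix of a word of length at most $k$ again has length at most $k$ and that longer words make both truncated languages zero. Your explicit downward induction on $i$ and the isolated congruence principle for $\eqe$ under $\circ$ simply make rigorous the ellipsis in the paper's displayed chain of $\eqe$-transformations built from \eqref{eq:HGDHO 2}--\eqref{eq:HGDHO 5} and associativity.
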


\begin{proof}
Consider the assertion~\eqref{item:JHKWB 1} and suppose that $Z$ is a right $\varepsilon$-invariance on $\mA$. Consider an arbitrary word $w = \sigma_1\ldots\sigma_n \in \Sigma^*$ with $n \geq 0$ and $\sigma_1,\ldots,\sigma_n \in \Sigma$. By Lemma~\ref{lemma:KJKZR} and~\eqref{eq:inv1b}, as well as \eqref{eq:HGDHO 2}--\eqref{eq:HGDHO 5} and due to the associativity of $\circe$, we have:

\begin{eqnarray*}
\bL(\mA_{Z,\varepsilon})(w) 
& \eqe & \IA \circe Z \circe \TA_{\sigma_1} \circe Z \circe \cdots \circe \TA_{\sigma_{n-2}} \circe Z \circe \TA_{\sigma_{n-1}} \circe Z \circe \TA_{\sigma_n} \circe Z \circe \FA \\
& \eqe & \IA \circe Z \circe \TA_{\sigma_1} \circe Z \circe \cdots \circe \TA_{\sigma_{n-2}} \circe Z \circe \TA_{\sigma_{n-1}} \circe Z \circe \TA_{\sigma_n} \circe \FA \\
& \eqe & \IA \circe Z \circe \TA_{\sigma_1} \circe Z \circe \cdots \circe \TA_{\sigma_{n-2}} \circe Z \circe \TA_{\sigma_{n-1}} \circe Z \circe (\TA_{\sigma_n} \circ \FA) \\
& \eqe & \IA \circe Z \circe \TA_{\sigma_1} \circe Z \circe \cdots \circe \TA_{\sigma_{n-2}} \circe Z \circe \TA_{\sigma_{n-1}} \circe Z \circe \FA_{\sigma_n} \\
& \eqe & \IA \circe Z \circe \TA_{\sigma_1} \circe Z \circe \cdots \circe \TA_{\sigma_{n-2}} \circe Z \circe \TA_{\sigma_{n-1}} \circe \FA_{\sigma_n} \\
& \eqe & \IA \circe Z \circe \TA_{\sigma_1} \circe Z \circe \cdots \circe \TA_{\sigma_{n-2}} \circe Z \circe (\TA_{\sigma_{n-1}} \circ \FA_{\sigma_n}) \\
& \eqe & \IA \circe Z \circe \TA_{\sigma_1} \circe Z \circe \cdots \circe \TA_{\sigma_{n-2}} \circe Z \circe \FA_{\sigma_{n-1}\sigma_n} \\
& \eqe & \ldots \\
& \eqe & \IA \circe Z \circe \FA_{\sigma_1\ldots\sigma_{n-2}\sigma_{n-1}\sigma_n} \\
& \eqe & \IA \circe \FA_w \\
& \eqe & \IA \circ \FA_w \\
& \eqe & \bL(\mA)(w).
\end{eqnarray*}

The assertion~\eqref{item:JHKWB 2} can be proved analogously. In particular, we modify the proof by using~\eqref{eq:inv2b} instead of~\eqref{eq:inv1b}, requiring that $|w| \leq k$, which means $n \leq k$, and adding at the end the following: 
\[ \bLdb(\mA_{Z,\varepsilon})(w) = \bL(\mA_{Z,\varepsilon})(w) \eqe \bL(\mA)(w) = \bLdb(\mA)(w). \]

\vspace*{-1em}
\end{proof}

The following corollary states that if $Z$ is the greatest right $\varepsilon$-invariance or the greatest right $\varepsilon$-invariance on a fuzzy automaton~$\mA$, then the fuzzy automaton $\mA_{Z,\varepsilon}$ approximates~$\mA$, and moreover, $\mA_{Z,\varepsilon}$ has a size less than or equal to that obtained when $Z$ is not the greatest one.

\begin{corollary}\label{cor: JHKWB 2}
Let $\varepsilon \in L$, $k \in \NN$ and let $\mA = \tuple{\QA, \Sigma, \IA, \TA, \FA}$ be a fuzzy automaton. 
\begin{enumerate}[(a)]
\item\label{item:JHKWB 3} If $Z$ is the greatest right $\varepsilon$-invariance on $\mA$, then $\mA_{Z,\varepsilon}$ is $\varepsilon$-equivalent to $\mA$ (i.e., \( \bL(\mA_{Z,\varepsilon}) \eqe \bL(\mA) \)) and $\mA_{Z,\varepsilon}$ has at most as many states as $\mA_{Z',\varepsilon}$, for any $\varepsilon$-\FPO $Z'$ that is a right $\varepsilon$-invariance on $\mA$. 

\item\label{item:JHKWB 4} If $Z$ is the greatest right $(\varepsilon,k)$-invariance on $\mA$, then $\mA_{Z,\varepsilon}$ is $(\varepsilon,k)$-equivalent to $\mA$ (i.e., \( \bLdb(\mA_{Z,\varepsilon}) \eqe \bLdb(\mA) \)) and $\mA_{Z,\varepsilon}$ has at most as many states as $\mA_{Z',\varepsilon}$, for any $\varepsilon$-\FPO $Z'$ that is a right $(\varepsilon,k)$-invariance on $\mA$. 
\end{enumerate}
\end{corollary}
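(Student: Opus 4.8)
The plan is to derive Corollary~\ref{cor: JHKWB 2} directly from the machinery already assembled, treating it as a straightforward synthesis of three prior results rather than a fresh argument. The two assertions are parallel, so I would prove~\eqref{item:JHKWB 3} in full and remark that~\eqref{item:JHKWB 4} follows by the identical reasoning with the word-length bound~$k$ imposed throughout.

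First I would observe that the existence of the greatest right $\varepsilon$-invariance on~$\mA$ is guaranteed by Theorem~\ref{theorem: JRKKA}\eqref{item: JRKKA 1}, and that this greatest invariance is an $\varepsilon$-\FPO by Corollary~\ref{cor: KGBXH}. This is the crucial point: the $(Z,\varepsilon)$-afterset construction $\mA_{Z,\varepsilon}$ and Theorem~\ref{theorem:JHKWB} both require $Z$ to be an $\varepsilon$-\FPO, so I must confirm that the greatest right $\varepsilon$-invariance qualifies before I may invoke those results. With $Z$ established as an $\varepsilon$-\FPO that is a right $\varepsilon$-invariance, the $\varepsilon$-equivalence $\bL(\mA_{Z,\varepsilon}) \eqe \bL(\mA)$ is then an immediate application of Theorem~\ref{theorem:JHKWB}\eqref{item:JHKWB 1}.

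For the size claim, let $Z'$ be any $\varepsilon$-\FPO that is a right $\varepsilon$-invariance on~$\mA$. Since $Z$ is the \emph{greatest} right $\varepsilon$-invariance, we have $Z' \leq Z$. Both $Z$ and $Z'$ are $\varepsilon$-\FPOs, so Proposition~\ref{prop: HRKWG} applies and yields $\dbloverline{\QA/Z'} \geq \dbloverline{\QA/Z}$, i.e.\ $\mA_{Z,\varepsilon}$ has at most as many states as $\mA_{Z',\varepsilon}$. This completes~\eqref{item:JHKWB 3}.

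I do not anticipate a genuine obstacle here, since every ingredient is already in place; the only point demanding care is the bookkeeping of hypotheses for Proposition~\ref{prop: HRKWG}, which insists that \emph{both} relations be $\varepsilon$-\FPOs. This is exactly why the statement restricts the comparison class of~$Z'$ to $\varepsilon$-\FPOs that are right $\varepsilon$-invariances, rather than to arbitrary right $\varepsilon$-invariances: a non-idempotent reflexive right $\varepsilon$-invariance would not satisfy $Z' = Z'_\varepsilon$ with $Z' \circe Z' \eqe Z'$, so Proposition~\ref{prop: HRKWG} could not be invoked for it. Having flagged this subtlety, the proof reduces to: cite Corollary~\ref{cor: KGBXH} to see $Z$ is an $\varepsilon$-\FPO, cite Theorem~\ref{theorem:JHKWB} for the equivalence, and cite Proposition~\ref{prop: HRKWG} for the size bound, with the completely analogous chain using Theorem~\ref{theorem:JHKWB}\eqref{item:JHKWB 2} for the bounded-length case~\eqref{item:JHKWB 4}.
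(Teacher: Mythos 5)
Your proposal is correct and follows essentially the same route as the paper, whose entire proof is the one-line remark that the corollary ``immediately follows from Corollary~\ref{cor: KGBXH}, Theorem~\ref{theorem:JHKWB} and Proposition~\ref{prop: HRKWG}'' --- exactly the three ingredients you assemble. Your additional bookkeeping (invoking Theorem~\ref{theorem: JRKKA} for existence, checking the $\varepsilon$-\FPO hypothesis before applying Proposition~\ref{prop: HRKWG}, and noting why $Z'$ is restricted to $\varepsilon$-\FPOs) is a sound and faithful expansion of that citation chain.
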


This corollary immediately follows from Corollary~\ref{cor: KGBXH}, Theorem~\ref{theorem:JHKWB} and Proposition~\ref{prop: HRKWG}. 

\begin{example}\label{example: HEJSA 3}
Let us continue Examples~\ref{example: HEJSA}, \ref{example: JHDKS} and~\ref{example: HEJSA 2}. \LinhModified{This example demonstrates how the construction of the $(Z,\varepsilon)$-afterset fuzzy automaton reflects the trade-off between approximation and state reduction for different values of $\varepsilon$.} Let $\mL$, $\mA = \tuple{\QA, \Sigma, \IA, \TA, \FA}$, $Z_{0.1}$, $Z_{0.2}$, $Z_{(0, 3)}$ and $Z_{(0, 4)}$ be as in those examples. Recall that $|\QA| = 7$. By Corollary~\ref{cor: JHKWB 2}, we have that:
\begin{itemize}
\item $\mA_{Z_{0.1},0.1}$ has five states and $\bL(\mA_{Z_{0.1},0.1}) =_{0.1} \bL(\mA)$, 
\item $\mA_{Z_{0.2},0.2}$ has four states and $\bL(\mA_{Z_{0.2},0.2}) =_{0.2} \bL(\mA)$, 
\item $\mA_{Z_{(0,3)},0}$ has six states and $\bL^{\leq 3}(\mA_{Z_{(0,3)},0}) = \bL^{\leq 3}(\mA)$, 
\item $\mA_{Z_{(0,4)},0}$ has seven states like $\mA$.
\end{itemize}
\LinhModified{These cases clearly illustrate the effect of state reduction under approximation: as $\varepsilon$ increases, the automaton undergoes stronger reductions (fewer states), while the language is preserved only up to the tolerance level~$\varepsilon$. In addition, bounding the word length using $k=3$ allows a reduction with exact agreement on all words of length at most three. Moreover, by} Corollary~\ref{cor: JHFRK} and Proposition~\ref{prop: HRKWG}, as a consequence of the last item above, we have: 
\begin{itemize}
\item If $k > 4$ and $Z_{(0,k)}$ is the greatest right $(0,k)$-invariance on $\mA$, then $\mA_{Z_{(0,k)},0}$ has the same number of states as $\mA$.
\item If $Z_0$ is the greatest right $0$-invariance on $\mA$, then $\mA_{Z_0,0}$ has the same number of states as $\mA$.
That is, if $Z_\varepsilon$ is the greatest right $\varepsilon$-invariance on $\mA$, then $\mA_{Z_\varepsilon,\varepsilon}$ may have fewer states than $\mA$ only when $\varepsilon > 0$. In fact, $\mA_{Z_\varepsilon,\varepsilon}$ has fewer states than $\mA$ when $\varepsilon > 0.0768$, and the same number of states as $\mA$ when $\varepsilon < 0.0768$.\footref{footnote: JDJSA} 
\myend
\end{itemize}
\end{example}


\section{A soft state reduction algorithm}
\label{sec: alg}

\begin{figure*}
\begin{function}[H]
\caption{ReductionByRightInvariance($\mA, \varepsilon, k$)\label{funcRBRI}}
\Input{an \FfA $\mA = \tuple{\QA, \Sigma, \IA, \TA, \FA}$, $\varepsilon \in L$, $k \in \NN \cup \{\infty\}$.}
\Output{an \FfA isomorphic to $\mA_{Z,\varepsilon}$, where $Z$ is the greatest right $\varepsilon$-invariance on $\mA$ if $k = \infty$, or the greatest right $(\varepsilon,k)$-invariance on $\mA$ otherwise.}

\BlankLine
initialize $\mF$ and $\dF$ to empty sets\label{step: funcRBRI 1}\;
insert $\FA_\varepsilon$ to both $\mF$ and $\dF$\label{step: funcRBRI 2}\tcp*{$d$ in $\dF$ stands for `delta' or `unprocessed'}
\RepTimes{$k$\label{step: funcRBRI 3}}{
    set $\dF_2$ to an empty set\;
    \ForEach{$f \in \dF$ and $\sigma \in \Sigma$}{
        $g := \TA_\sigma \circe f$\label{step: funcRBRI 6}\;
        \lIf{$g \notin \mF$}{insert $g$ to both $\mF$ and $\dF_2$\label{step: funcRBRI 7}}
    }
    $\dF := \dF_2$\;
    \lIf{$\dF$ is empty}{\label{step: funcRBRI 9}\Break}
}
$Z := \bigwedgee \{f\, \slashe f \mid f \in \mF\}$\label{step: funcRBRI 10}\;

\BlankLine
let $d = |\QA/Z|$ and choose states $q_1,\ldots,q_d$ from $\QA$ such that $\QA/Z = \{Z_{q_1},\ldots,Z_{q_d}\}$\label{step: funcRBRI 11}\;
$\QAp := \{q_1,\ldots,q_d\}$\;
declare $\IAp$, $\FAp$ and $\TAp$ as fuzzy subsets of $\QAp$, $\QAp$ and $\QAp \times \Sigma \times \QAp$, respectively\;
\lForEach{$q \in \QAp$}{$\IAp(q) := \IA \circe Z^q$ and $\FAp(q) := Z_q \circe \FA$}
\lForEach{$\sigma \in \Sigma$ and $p,q \in \QAp$}{$\TAp_\sigma(p,q) := Z_p \circe \TA_\sigma \circe Z^q$\label{step: funcRBRI 15}}

\BlankLine
\Return the \FfA $\tuple{\QAp, \Sigma, \IAp, \TAp, \FAp}$\label{step: funcRBRI 16}\;
\end{function}

\begin{function}[H]
\caption{SoftStateReduction$_0$($\mA, \varepsilon, k$)\label{funSSRZ}}
\Input{as for the above function.} 
\Output{an \FfA $\mAp$ such that $\bL(\mAp) \eqe \bL(\mA)$ if $k = \infty$, or $\bLdb(\mAp) \eqe \bLdb(\mA)$ otherwise, with the property that either $\mAp$ has fewer states than $\mA$ or $\mAp = \mA$ (as a failure).}

\BlankLine
$\mAp := \mA$\;
\While{$\True$}{
    $\mA_2 := \ReductionByRightInvariance(\mAp, \varepsilon, k)$\label{step: funSSRZ 3}\;
    $\mA_3 := \cnv{\big(\ReductionByRightInvariance(\cnv{(\mA_2)}, \varepsilon, k)\big)}$\label{step: funSSRZ 4}\;
    \uIf{$\mA_3$ has fewer states than $\mAp$}{
        $\mAp := \mA_3$\;
    }
    \Else{
        \Return $\mAp$\;
    }
}
\end{function}

\begin{algorithm}[H]
\caption{SoftStateReduction\label{algSSR}}
\Input{as for the two above functions.} 
\Output{as for the above function \SoftStateReductionZ.}

\BlankLine
eliminate from $\mA$ all unreachable or unproductive states\label{step: algSSR 0}\;
$\mA_2 := \SoftStateReductionZ(\mA, \varepsilon, k)$\label{step: algSSR 1}\;
$\mA_3 := \cnv{\big(\SoftStateReductionZ(\cnv{(\mA)}, \varepsilon, k)\big)}$\label{step: algSSR 2}\;
\Return $\mA_2$ if $\mA_2$ has fewer states than $\mA_3$, else $\mA_3$\;
\end{algorithm}
\end{figure*}

In this section, we present an algorithm, named \SoftStateReduction, for softly reducing states of a given \FfA~$\mA$. In particular, given also $\varepsilon \in L$ and $k \in \NN \cup \{\infty\}$, it terminates when $\mLe$ is $\otimes$-locally finite or $k \neq \infty$ and returns an \FfA $\mAp$ such that $\bL(\mAp) \eqe \bL(\mA)$ if $k = \infty$, or $\bLdb(\mAp) \eqe \bLdb(\mA)$ otherwise, with the property that either $\mAp$ has fewer states than $\mA$ or $\mAp = \mA$ (as a failure).\footnote{The algorithm may terminate also when $\mLe$ is not $\otimes$-locally finite and $k = \infty$, e.g., when $\mA$ uses only crisp values.} It is based on Theorem~\ref{theorem: HGDJL} and Corollary~\ref{cor: JHKWB 2}. 
We also describe our implementation of this algorithm and present illustrative examples.

\LinhModified{We first provide an intuitive background for the \SoftStateReduction algorithm. It relies on the auxiliary functions \ReductionByRightInvariance and \SoftStateReductionZ that have the same input as the algorithm. The first function firstly computes the greatest right $\varepsilon$-invariance (or the greatest right $(\varepsilon,k)$-invariance) on $\mA$ (denoted by $Z$). Afterwards, it computes an \FfA isomorphic to $\mA_{Z,\varepsilon}$ (which is the $(Z,\varepsilon)$-afterset fuzzy automaton of $\mA$). 
According to Theorem~\ref{theorem:JHKWB}, $\mA_{Z,\varepsilon}$ is $\varepsilon$-equivalent (or $(\varepsilon,k)$-equivalent) to $\mA$.
The second function (\SoftStateReductionZ) first applies the right reduction (using \ReductionByRightInvariance), and afterwards the left reduction in a dual way (reverse the \FfA~$\mA$, apply right reduction to the reversed \FfA, then reverse back). Next, the resulting \FfA is compared with the starting one: if the number of states has decreased, repeat the process, if not, stop. 
Intuitively, \SoftStateReductionZ iteratively applies reductions based on approximate invariances until no further reduction is possible. This ensures that the result is ``as reduced as possible'' under the given approximation settings. 
The \SoftStateReduction algorithm first cleans up the input \FfA (removes unreachable/unproductive states), then runs \SoftStateReductionZ twice: directly on  $\mA$, then on the reversed \FfA (and then reverses back). Finally, it returns whichever result is smaller. 
More details and discussion are given below.} 

The \SoftStateReduction algorithm uses the function $\ReductionByRightInvariance(\mA, \varepsilon, k)$ (presented on page~\pageref{funcRBRI}) with the same parameters. This function returns an \FfA isomorphic to $\mA_{Z,\varepsilon}$, where $Z$ is the greatest right $\varepsilon$-invariance on $\mA$ if $k = \infty$, or the greatest right $(\varepsilon,k)$-invariance on $\mA$ otherwise. 

The statements \ref{step: funcRBRI 1}-\ref{step: funcRBRI 9} of the \ReductionByRightInvariance function computes the set 
\[ \mF = \{(\FA_w)_\varepsilon \mid w \in \Sigma^* \textrm{ with } |w| \leq k\}. \]
In particular, the statements \ref{step: funcRBRI 1} and~\ref{step: funcRBRI 2} initializes $\mF$ to 
\mbox{$\{F_\varepsilon\} = \{(\FA_w)_\varepsilon \mid w \in \Sigma^* \textrm{ with } |w| \leq 0\}$}, 
and as an invariant of the ``repeat'' loop, the $i$-th iteration of this loop, for $1 \leq i \leq k$, changes $\mF$
from \mbox{$\{(\FA_w)_\varepsilon \mid w \in \Sigma^* \textrm{ with } |w| \leq i-1\}$}
to \mbox{$\{(\FA_w)_\varepsilon \mid w \in \Sigma^* \textrm{ with } |w| \leq i\}$}. 
Roughly speaking, at the beginning of each iteration, $\dF$ is the difference between the contents of $\mF$ at the end of the two previous iterations. 
By Theorem~\ref{theorem: HGDJL}, the ``repeat'' loop always terminates when $\mLe$ is $\otimes$-locally finite or $k \in \NN$. 
If $k \in \NN$ and the ``repeat'' loop terminates not by the ``break'' statement, then by the mentioned invariant and Corollary~\ref{cor: JRKKA}, the fuzzy relation $Z$ computed by the statement~\ref{step: funcRBRI 10} is in fact the greatest right $(\varepsilon,k)$-invariance on $\mA$. 
If the ``repeat'' loop terminates by the ``break'' statement at the $i$-iteration, for some $1 \leq i \leq k$, then by the mentioned invariant and Theorem~\ref{theorem: HGDJL}, the fuzzy relation $Z$ computed by the statement~\ref{step: funcRBRI 10} is equal to both the greatest right $(\varepsilon,i-1)$-invariance on $\mA$ and the greatest right $\varepsilon$-invariance on~$\mA$, and by Corollary~\ref{cor: JHFRK}, it follows that this fuzzy relation $Z$ is also equal to the greatest right $(\varepsilon,k)$-invariance on $\mA$.  

It is easily seen that the statements \ref{step: funcRBRI 11}-\ref{step: funcRBRI 15} of the \ReductionByRightInvariance function computes an \FfA \mbox{$\mAp = \tuple{\QAp, \Sigma, \IAp, \TAp, \FAp}$} that is isomorphic to $\mA_{Z,\varepsilon}$. The isomorphism is the bijection from $\QA/Z$ to $Q'$ that maps each $Z_{q_i}$ to~$q_i$, for $1 \leq i \leq d$. 

By the above analysis of the \ReductionByRightInvariance function, we obtain the following result.

\begin{lemma}\label{lemma: HEKAN}
The function $\ReductionByRightInvariance(\mA, \varepsilon, k)$ terminates at least when $\mLe$ is $\otimes$-locally finite or $k \in \NN$ and always returns a correct result (when terminates).
\end{lemma}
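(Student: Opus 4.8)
The plan is to establish the two assertions of the lemma—termination and correctness—separately, relying on the loop-invariant analysis given above together with Theorem~\ref{theorem: HGDJL} and Corollary~\ref{cor: JRKKA}.

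For termination I would argue by cases. If $k \in \NN$, the ``repeat'' loop (statements~\ref{step: funcRBRI 3}--\ref{step: funcRBRI 9}) runs at most $k$ times by construction, and each pass is a finite computation: $\dF \subseteq \mF$ is finite, $\Sigma$ is finite, and each $g = \TA_\sigma \circe f$ is a finite $\bigveee$ over the finite state set $\QA$. Hence the loop halts and only finitely many elements are ever inserted into $\mF$. If instead $\mLe$ is $\otimes$-locally finite (allowing $k = \infty$), then by Theorem~\ref{theorem: HGDJL} the set $\{(\FA_w)_\varepsilon \mid w \in \Sigma^*\}$ is finite; since $\mF$ is only ever enlarged (statement~\ref{step: funcRBRI 7}) and stays inside this finite set, the frontier $\dF$ must become empty after finitely many iterations, triggering the ``break'' in statement~\ref{step: funcRBRI 9}. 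In either case the loop terminates with $\mF$ finite, so the tail statements~\ref{step: funcRBRI 10}--\ref{step: funcRBRI 16}—a finite $\bigwedgee$, the finite quotient $\QA/Z$, and the assembly of $\mAp$—are plainly finite.

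For correctness I would first record the loop invariant already sketched above: at the end of the loop $\mF = \{(\FA_w)_\varepsilon \mid w \in \Sigma^* \textrm{ with } |w| \leq m\}$, where $m = k$ if the loop runs to completion and $m = i-1$ if it exits via ``break'' at iteration $i$. The key computational fact behind this is that statement~\ref{step: funcRBRI 6} returns exactly $g = (\FA_{\sigma u})_\varepsilon$ whenever $f = (\FA_u)_\varepsilon$: indeed $g = g_\varepsilon$ because $\bigveee$ always includes $\varepsilon$, and $g \eqe \FA_{\sigma u}$ by the assertions of Lemma~\ref{lemma: HGDHO}, so the two $\varepsilon$-truncated fuzzy sets coincide. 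I would then split on the exit mode. If the loop completes all $k$ iterations (necessarily $k \in \NN$), then $\mF = \{(\FA_w)_\varepsilon \mid |w| \leq k\}$ and Corollary~\ref{cor: JRKKA} (bounded-length form) identifies the $Z$ of statement~\ref{step: funcRBRI 10} as the greatest right $(\varepsilon,k)$-invariance. If the loop exits via ``break'' at iteration $i$, the emptiness of $\dF$ says that no new $\varepsilon$-truncated image arises when passing from length $i-1$ to length $i$—exactly the stabilization condition~\eqref{eq: HGDJL 1} at level $i-1$—so Theorem~\ref{theorem: HGDJL} gives $\mF = \{(\FA_w)_\varepsilon \mid w \in \Sigma^*\}$; then Corollary~\ref{cor: JRKKA} (unbounded form) makes $Z$ the greatest right $\varepsilon$-invariance, and Theorem~\ref{theorem: HGDJL} together with Corollary~\ref{cor: JHFRK} shows it simultaneously equals the greatest right $(\varepsilon,k)$-invariance. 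In both modes this matches the declared output for the cases $k = \infty$ and $k \in \NN$. Finally I would note that statements~\ref{step: funcRBRI 11}--\ref{step: funcRBRI 16} build an \FfA isomorphic to $\mA_{Z,\varepsilon}$ via the bijection $Z_{q_i} \mapsto q_i$, since the assignments of $\IAp$, $\FAp$ and $\TAp$ transcribe the defining formulas of $\mA_{Z,\varepsilon}$ through this bijection.

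I expect the main obstacle to be the ``break'' case. The algorithm is a frontier-based (breadth-first) search that processes only the newly discovered images in $\dF$, not the full set of length-$(i-1)$ images; one must justify that an empty frontier really forces global stabilization over all word lengths. The point is that any length-$(i-1)$ truncated image suppressed as a duplicate has a strictly shorter representative that was processed in an earlier iteration, so its successors were already generated—precisely the inductive step underpinning Theorem~\ref{theorem: HGDJL}. Making this pruning argument explicit, and then squeezing the three greatest-invariance quantities into equality via Corollary~\ref{cor: JHFRK}, is where the care is needed.
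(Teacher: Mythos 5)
Your proposal is correct and follows essentially the same route as the paper, whose ``proof'' of this lemma is just the preceding textual analysis: the loop invariant on $\mF$, Theorem~\ref{theorem: HGDJL} for termination and for identifying $Z$ in the ``break'' case, Corollary~\ref{cor: JRKKA} for the completed-loop case, and the squeeze via Corollary~\ref{cor: JHFRK}. You in fact make explicit two points the paper leaves implicit --- that statement~\ref{step: funcRBRI 6} computes $(\FA_{\sigma u})_\varepsilon$ exactly from $(\FA_u)_\varepsilon$, so the frontier-based pruning is sound --- which only strengthens the argument.
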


The \SoftStateReduction algorithm utilizes the \ReductionByRightInvariance function through the auxiliary function $\SoftStateReductionZ(\mA, \varepsilon, k)$, which takes the same parameters. This auxiliary function operates as follows: it iteratively applies \ReductionByRightInvariance and the dual function until no further state reduction is possible. The dual reduction is performed by first reversing the current fuzzy automaton, then applying \ReductionByRightInvariance to reduce it, and finally reversing the resulting fuzzy automaton. By Proposition~\ref{prop: JHEJA}, it can be said that the dual reduction uses left invariances (instead of right invariances). In the case where the number of states is not reduced at all, the \SoftStateReductionZ function returns the input fuzzy automaton. 

Given a fuzzy language $L : \Sigma^* \to L$ over an alphabet $\Sigma$, the {\em reverse} of $L$ is the fuzzy language \mbox{$\cnv{L} : \Sigma^* \to L$} such that, for any $k \in \NN$ and any $\sigma_1,\ldots,\sigma_k \in \Sigma$, $\cnv{L}(\sigma_1\sigma_2\ldots\sigma_k) = L(\sigma_k\sigma_{k-1}\ldots\sigma_1)$. 

\begin{lemma}\label{lemma: IUDHB}
The function $\SoftStateReductionZ(\mA, \varepsilon, k)$ terminates at least when $\mLe$ is $\otimes$-locally finite or $k \in \NN$ and always returns a correct result (when terminates).
\end{lemma}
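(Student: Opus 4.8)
The plan is to establish termination and correctness separately, in both cases reducing everything to the already-proved behaviour of the inner function \ReductionByRightInvariance (Lemma~\ref{lemma: HEKAN}).

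First I would treat termination. A single iteration of the \textbf{while} loop makes two calls to \ReductionByRightInvariance, one on $\mAp$ and one on the reverse $\cnv{(\mA_2)}$, together with two (finite) reversal operations. Since $\cnv{(\mA_2)}$ is an \FfA over the same lattice $\mL$, so that $\mLe$ is unchanged, Lemma~\ref{lemma: HEKAN} guarantees that each call terminates whenever $\mLe$ is $\otimes$-locally finite or $k \in \NN$; hence every iteration terminates. The loop is re-entered only along the branch in which $\mA_3$ has strictly fewer states than $\mAp$ and $\mAp$ is reset to $\mA_3$, so the number of states of $\mAp$ strictly decreases on each continuing pass. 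As this is a natural number, it cannot strictly decrease infinitely often, and therefore the loop terminates.

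For correctness I would verify the two output requirements. The bound on the number of states follows by tracing the loop: the returned automaton equals the initial $\mAp = \mA$ exactly when the first comparison fails, and otherwise at least one assignment $\mAp := \mA_3$ has occurred, so the returned $\mAp$ has strictly fewer states than $\mA$; in either case it is $\mA$ or smaller, as claimed. For language preservation I would carry the loop invariant $\bL(\mAp) \eqe \bL(\mA)$ (respectively $\bLdb(\mAp) \eqe \bLdb(\mA)$ when $k \in \NN$), which holds at entry because $\mAp = \mA$. The direct reduction preserves it: by Lemma~\ref{lemma: HEKAN}, $\mA_2$ is isomorphic to the $(Z,\varepsilon)$-afterset automaton of $\mAp$ for $Z$ the greatest right $\varepsilon$-invariance (or $(\varepsilon,k)$-invariance) on $\mAp$, whence $\bL(\mA_2) \eqe \bL(\mAp)$ by Corollary~\ref{cor: JHKWB 2}, using that isomorphic automata recognise the same fuzzy language.

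The hard part will be the dual step producing $\mA_3$, which I would handle via language reversal. The two auxiliary facts needed are: (i) $\bL(\cnv{\mB}) = \cnv{(\bL(\mB))}$ for every \FfA $\mB$, a routine consequence of the definition of $\cnv{\mB}$ (with $\TA'_\sigma = \cnv{\TA}_\sigma$) and of the reverse of a fuzzy language; and (ii) reversal of languages preserves $\eqe$, since it merely permutes the word argument while preserving word length, so the bounded-length case is unaffected. Writing $\mB' = \ReductionByRightInvariance(\cnv{(\mA_2)}, \varepsilon, k)$ and $\mA_3 = \cnv{(\mB')}$, Corollary~\ref{cor: JHKWB 2} gives $\bL(\mB') \eqe \bL(\cnv{(\mA_2)}) = \cnv{(\bL(\mA_2))}$, and applying (i) and (ii) yields $\bL(\mA_3) = \cnv{(\bL(\mB'))} \eqe \bL(\mA_2)$. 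By transitivity of $\eqe$ we then obtain $\bL(\mA_3) \eqe \bL(\mA_2) \eqe \bL(\mAp) \eqe \bL(\mA)$, which restores the invariant after each assignment; on termination it delivers exactly the asserted $\varepsilon$- (respectively $(\varepsilon,k)$-) equivalence. The only genuinely nontrivial point is this compatibility of the right-invariance reduction with reversal; everything else is bookkeeping.
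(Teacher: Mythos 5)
Your proposal is correct and takes essentially the same route as the paper's proof: the same while-loop invariant \(\bL(\mAp) \eqe \bL(\mA)\) (respectively \(\bLdb(\mAp) \eqe \bLdb(\mA)\)), maintained through Lemma~\ref{lemma: HEKAN} and Corollary~\ref{cor: JHKWB 2} for the direct step and through the identity \(\bL(\cnv{\mB}) = \cnv{(\bL(\mB))}\) together with preservation of \(\eqe\) (and of word length) under reversal for the dual step. Your termination argument via the strictly decreasing state count is likewise the one the paper relies on, merely spelled out a bit more explicitly.
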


\begin{proof}
We prove that the following is an invariant of the ``while'' loop of the \SoftStateReductionZ function: either $\mAp = \mA$ or $\mAp$ has fewer states than $\mA$ and, if $k = \infty$, then $\bL(\mAp) \eqe \bL(\mA)$, else $\bLdb(\mAp) \eqe \bLdb(\mA)$. For this, it suffices to prove that, after executing the statements~\ref{step: funSSRZ 3} and~\ref{step: funSSRZ 4}, we have that: 
\begin{equation}\label{eq: JHJAO}
\textrm{if $k = \infty$, then $\bL(\mA_3) \eqe \bL(\mAp)$, else $\bLdb(\mA_3) \eqe \bLdb(\mAp)$.}
\end{equation}
Consider the statements~\ref{step: funSSRZ 3} and~\ref{step: funSSRZ 4} (of the \SoftStateReductionZ function) and denote 
\[ \mA'_2 = \ReductionByRightInvariance(\cnv{(\mA_2)}, \varepsilon, k). \] 
Thus, $\mA_3 = \cnv{(\mA'_2)}$. There are the two following cases.
\begin{itemize}
\item Case $k = \infty$: By Lemma~\ref{lemma: HEKAN} and Corollary~\ref{cor: JHKWB 2}, we have $\bL(\mA_2) \eqe \bL(\mAp)$ and $\bL(\mA'_2) \eqe \bL\big(\cnv{(\mA_2)}\big)$. Therefore,
\[
    \bL(\mA_3) = \bL(\cnv{(\mA'_2)}) = \cnv{(\bL(\mA'_2))} 
    \eqe \cnv{\big(\bL\big(\cnv{(\mA_2)}\big)\big)}
    = \cnv{\big(\cnv{\big(\bL(\mA_2)\big)}\big)}
    = \bL(\mA_2) \eqe \bL(\mAp).
\]
That is, $\bL(\mA_3) \eqe \bL(\mAp)$. 
\item Case $k \neq \infty$ (i.e., $k \in \NN$): By Lemma~\ref{lemma: HEKAN} and Corollary~\ref{cor: JHKWB 2}, we have $\bLdb(\mA_2) \eqe \bLdb(\mAp)$ and $\bLdb(\mA'_2) \eqe \bLdb\big(\cnv{(\mA_2)}\big)$. Analogously as for the above case, we can derive $\bLdb(\mA_3) \eqe \bLdb(\mAp)$. 
\end{itemize}
We have proved~\eqref{eq: JHJAO} and therefore also the mentioned invariant of the ``while'' loop of the \SoftStateReductionZ function.

Since the function $\ReductionByRightInvariance(\mA, \varepsilon, k)$ terminates at least when $\mLe$ is $\otimes$-locally finite or $k \in \NN$, by the mentioned invariant of the ``while'' loop of the \SoftStateReductionZ function, it follows that this latter function also terminates at least when $\mLe$ is $\otimes$-locally finite or $k \in \NN$. This invariant also guarantees that the \SoftStateReductionZ function always returns a correct result (when terminates).
\end{proof}

The \SoftStateReduction algorithm first eliminates unreachable or unproductive states. Clearly, the modified \FfA is equivalent to the original one. The algorithm then applies the \SoftStateReductionZ function in two ways: either directly to the input fuzzy automaton or to its reverse, followed by reversing the obtained result. It then returns the better of the two results. We obtain the following theorem, whose proof immediately follows from Lemma~\ref{lemma: IUDHB}. 

\begin{theorem}\label{theorem: KFHSA}
The \SoftStateReduction algorithm terminates at least when $\mLe$ is $\otimes$-locally finite or $k \in \NN$ and always returns a correct result (when terminates).
\end{theorem}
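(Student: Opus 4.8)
The plan is to derive the theorem directly from Lemma~\ref{lemma: IUDHB}, which already settles both termination and correctness of the auxiliary function \SoftStateReductionZ; what remains is to propagate these properties through the body of the \SoftStateReduction algorithm (the cleanup of unreachable/unproductive states in statement~\ref{step: algSSR 0}, the two symmetric calls to \SoftStateReductionZ, and the final comparison). First I would treat termination. The elimination step acts on a finite \FfA and halts trivially. The two calls $\SoftStateReductionZ(\mA,\varepsilon,k)$ and $\SoftStateReductionZ(\cnv{\mA},\varepsilon,k)$ are invocations of the same function on \FfAs over the \emph{same} residuated lattice $\mL$; hence the hypotheses of Lemma~\ref{lemma: IUDHB} ($\mLe$ is $\otimes$-locally finite, or $k\in\NN$) apply verbatim to both, and each terminates. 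Consequently the whole algorithm terminates under the stated conditions.

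For correctness I would track the two output requirements separately. The cleanup step produces an \FfA whose recognized language is \emph{exactly} that of the input, so in particular it agrees with $\mA$ both in the $\eqe$ sense and, under a length bound, in the $\bLdb$ sense; I would fold this equivalence into the chain at the very start. The branch yielding $\mA_2$ is then correct immediately by Lemma~\ref{lemma: IUDHB}. For the branch yielding $\mA_3=\cnv{\bigl(\SoftStateReductionZ(\cnv{\mA},\varepsilon,k)\bigr)}$, write $\mB=\SoftStateReductionZ(\cnv{\mA},\varepsilon,k)$; Lemma~\ref{lemma: IUDHB} gives $\bL(\mB)\eqe\bL(\cnv{\mA})$ (or the $\bLdb$ variant). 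I would then invoke the language-reversal identity $\bL(\cnv{\mB})=\cnv{(\bL(\mB))}$ together with the facts that $\eqe$ is a pointwise relation preserved under reversal and that reversal maps words of length at most $k$ to words of length at most $k$, to conclude $\bL(\mA_3)\eqe\bL(\mA)$ (respectively $\bLdb(\mA_3)\eqe\bLdb(\mA)$). This is exactly the computation already performed inside the proof of Lemma~\ref{lemma: IUDHB}, so it transfers with no new ideas. Since the returned automaton is one of $\mA_2,\mA_3$, the required language equivalence to $\mA$ holds in all cases.

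The state-count requirement is handled next. Reversal preserves the number of states, and \SoftStateReductionZ never increases it (by its specification), so both $\mA_2$ and $\mA_3$ have at most as many states as the cleaned input, and hence at most as many as $\mA$. Returning the smaller of the two keeps this bound. To obtain the precise disjunction ``$\mAp$ has fewer states than $\mA$ or $\mAp=\mA$'', I would use the stronger guarantee that, in the failure case, \SoftStateReductionZ returns its input \emph{unchanged}: if cleanup removes at least one state then the output is already strictly smaller than $\mA$; otherwise the cleaned automaton equals $\mA$, each call either strictly reduces or returns $\mA$ verbatim (for $\mA_3$ via $\cnv{\cnv{\mA}}=\mA$), and the comparison therefore outputs either a strictly smaller \FfA or $\mA$ itself.

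I do not expect a genuine obstacle here: the theorem is essentially a corollary of Lemma~\ref{lemma: IUDHB}. The only point requiring care is the reversal bookkeeping for $\mA_3$, namely checking that $\bL(\cnv{\mB})=\cnv{(\bL(\mB))}$, that $\eqe$ survives reversal, and that the length bound defining $\bLdb$ is compatible with reversal; but all three are already established in the proof of Lemma~\ref{lemma: IUDHB}, so this amounts to routine reuse rather than new work.
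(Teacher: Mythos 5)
Your proposal is correct and takes essentially the same route as the paper, which states that Theorem~\ref{theorem: KFHSA} ``immediately follows from Lemma~\ref{lemma: IUDHB}'' and leaves all further details implicit. Your extra bookkeeping --- termination of the cleanup step, exact language preservation under elimination of unreachable/unproductive states, the reversal identities for $\bL$ and $\bLdb$ already used inside the proof of Lemma~\ref{lemma: IUDHB}, and the case analysis yielding the disjunction ``fewer states or $\mAp = \mA$'' --- simply spells out the routine propagation the paper takes for granted.
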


We have implemented the \SoftStateReduction algorithm in Python and made the script publicly available~\cite{SRFA-prog}. To reduce possible numerical errors, the statement~\ref{step: funcRBRI 11} in the \ReductionByRightInvariance function is executed with precision up to~$10^{-12}$ in this implementation (i.e., two fuzzy values that differ by no more than~$10^{-12}$ are treated as equal). \LinhModified{The precision can easily be set by the user, for example to $0$, but analyzing numerical stability is beyond the scope of this work.} The script can be executed as follows:
\begin{verbatim}
      python3 SRFA-prog.py epsilon [options] < input_file > output_file
\end{verbatim}
where {\em epsilon} is a real number in $[0,1]$ and stands for the parameter $\varepsilon$ of the algorithm, {\em input\_$\,$file} specifies the input fuzzy automaton, and options are among:
\begin{description}
\item[k=value] :~where {\em value} is a natural number or {\em infinity} (default),
\item[structure=S] :~where $S \in \{P,H,G,L,N\}$ specifies which among the product (default), Hamacher, G\"odel, \L{}ukasiewicz, and nilpotent structures is used as the underlying residuated lattice, 
\item[sparse] :~using the input format for sparse automata,
\item[verbose] :~enabling the verbose mode.
\end{description}
If the user wants to use a linear and complete residuated lattice that is different from the product, Hamacher, G\"odel, \L{}ukasiewicz, and nilpotent structures, he or she can modify the definition of the {\em tnorm} and {\em residuum} functions in the script. As the G\"odel, \L{}ukasiewicz, and nilpotent structures are $\otimes$-locally finite, it is worth using them together with $\varepsilon = 0$. The file {\em README.txt} in~\cite{SRFA-prog} provides examples of script execution along with details on input formats. When the verbose option is used, execution details are displayed, including two counters: the total number of iterations of the ``while'' loop in the \SoftStateReductionZ function and the total number of executions of the statement~\ref{step: funcRBRI 6} in the \ReductionByRightInvariance function  (the most computationally intensive step). 

\LinhModified{We now present examples illustrating the execution of the \SoftStateReduction algorithm. Note that all these examples and experiments reported in this work can be easily reproduced using our script and the provided input files.}

\begin{figure}[thb]
\begin{center}
\begin{tikzpicture}[->,>=stealth,auto,black]
\tikzset{every state/.style={inner sep=0.06cm,minimum size=0.7cm}}
\tikzstyle{every node}=[font=\footnotesize]
\node[state] (u) {$0$};
\node[state] (v) [node distance=2.0cm, right of=u] {$1$};
\node[state] (w) [node distance=2.0cm, right of=v] {$2$};
\node[state] (up) [node distance=1.5cm, below of=u] {$4$};
\node[state] (vp) [node distance=2.0cm, right of=up] {$5$};
\node[state] (wp) [node distance=2.0cm, right of=vp] {$6$};
\node[state] (tp) [node distance=1.5cm, left of=up] {$3$};
\node[left of=u] (in) {};
\node[right of=w] (out) {};
\node[left of=tp] (inp) {};
\node[right of=wp] (outp) {};
\draw[densely dotted, thick] (in) to node[above]{1} (u);
\draw[densely dotted, thick] (inp) to node[below]{1} (tp);
\draw[densely dotted, thick] (w) to node[above]{0.5} (out);
\draw[densely dotted, thick] (wp) to node[below]{0.5} (outp);
\draw (u) to node[above]{0.6} (v);
\draw (v) to node[above]{0.8} (w);
\draw (w) edge[above,out=145,in=35] node{0.4} (u);
\draw (up) to node[left,xshift=-3pt]{0.5} (v);
\draw (up) to node[below]{0.4} (vp);
\draw (vp) to node[left,xshift=-3pt]{0.7} (w);
\draw (vp) to node[below]{0.8} (wp);
\draw (wp) edge[below,out=-145,in=-35] node{0.4} (up);
\draw (tp) to node[below]{1} (up);
\draw (u) edge[loop above,out=60,in=120,looseness=5] node{$0.9$} (u);
\draw (up) edge[loop below,out=-60,in=-120,looseness=5] node{$0.9$} (up);
\end{tikzpicture}
\caption{An illustration of the fuzzy automaton $\mA$ used in Example~\ref{example: HDIAJ}.\label{fig: HDIAJ}}
\end{center}
\end{figure}

\begin{example}\label{example: HDIAJ}
Consider the fuzzy automaton $\mA = \tuple{\QA, \Sigma, \IA, \TA, \FA}$ over $\Sigma = \{\sigma\}$ that is illustrated in Figure~\ref{fig: HDIAJ} and specified below:

{\footnotesize\[
\QA = \{0,1,2,3,4,5,6\},\quad
\IA =\begin{bmatrix} 1 & 0 & 0 & 1 & 0 & 0 & 0 \end{bmatrix},\quad
\TA_\sigma =\begin{bmatrix}
0.9 & 0.6 & 0 & 0 & 0 & 0 & 0 \\
0 & 0 & 0.8 & 0 & 0 & 0 & 0 \\
0.4 & 0 & 0 & 0 & 0 & 0 & 0 \\
0 & 0 & 0 & 0 & 1 & 0 & 0 \\
0 & 0.5 & 0 & 0 & 0.9 & 0.4 & 0 \\ 
0 & 0 & 0.7 & 0 & 0 & 0 & 0.8 \\ 
0 & 0 & 0 & 0 & 0.4 & 0 & 0 
\end{bmatrix},\quad
\FA =\begin{bmatrix} 0 \\ 0 \\ 0.5 \\ 0 \\ 0 \\ 0 \\ 0.5 \end{bmatrix}.
\]}

\noindent
This fuzzy automaton differs from the one specified in Example~\ref{example: HEJSA} and illustrated in Figure~\ref{fig: HEJSA} in that $\delta_\sigma(0,0) = \delta_\sigma(4,4) = 0.9$. Its specification is stored in the file {\em in2.txt} in~\cite{SRFA-prog}. 

Consider the execution of the \SoftStateReduction algorithm for $\mA$ using the product structure and $\varepsilon = 0.1$ by running
\begin{verbatim}
   python3 SRFA-prog.py 0.1 verbose sparse < in2.txt > out2-0.1-verbose-sparse.txt
\end{verbatim}
The execution details in {\em out2-0.1-verbose-sparse.txt} include, among others, the following information:
\begin{itemize}
\item The call to the \SoftStateReductionZ function in the statement~\ref{step: algSSR 1} of the \SoftStateReduction algorithm produces a fuzzy automaton with four states. The reduction is achieved within the first iteration of the ``while'' loop in that function. Specifically, the statement~\ref{step: funSSRZ 3} of the function produces a fuzzy automaton with five states, denoted here as $\mA_1 = \tuple{\QA_1,\Sigma,\IA_1, \TA_1, \FA_1}$. Subsequently, the statement~\ref{step: funSSRZ 4} further reduces this automaton, producing another fuzzy automaton with four states, denoted here as $\mA_2 = \tuple{\QA_2,\Sigma,\IA_2, \TA_2, \FA_2}$. These automata are specified as follows:

{\footnotesize\[
\begin{array}{lll}
\IA_1 =\begin{bmatrix} 1 & 0.25 & 0.2 & 1 & 1 \end{bmatrix}, & 
(\TA_1)_\sigma =\begin{bmatrix}
0.9 & 0.6 & 0.2 & 0.9 & 1 \\
5/12 & 0.25 & 0.8 & 0.45 & 0.5 \\
5/12 & 0.25 & 0.2 & 0.45 & 0.5 \\
5/6 & 0.25 & 0.2 & 0.9 & 1 \\
0.75 & 0.5 & 0.2 & 0.81 & 0.9 
\end{bmatrix}, & 
\FA_1 =\begin{bmatrix} 0.1 \\ 0.1 \\ 0.5 \\ 0.1 \\ 0.1 \end{bmatrix}, \\
\\
\IA_2 =\begin{bmatrix} 1 & 0.25 & 0.2 & 1 \end{bmatrix}, &
(\TA_2)_\sigma =\begin{bmatrix}
0.9 & 0.6 & 0.2 & 1 \\ 
0.9 & 0.8 & 0.8 & 1 \\
0.9 & 0.8 & 0.8 & 1 \\
0.81 & 0.54 & 0.2 & 0.9
\end{bmatrix}, &
\FA_2 =\begin{bmatrix} 0.1 \\ 1/6 \\ 0.5 \\ 0.1 \end{bmatrix}.
\end{array}
\]}

\item Executing the statement~\ref{step: algSSR 2} of the \SoftStateReduction algorithm results in the same $\mA_2$ as specified above. In particular, the first call to the \ReductionByRightInvariance function (for the reverse of $\mA$) does not perform any reduction. Consequently, the next two calls to this function yield exactly the fuzzy automata $\mA_1$ and $\mA_2$ as previously described. Therefore, the reduction of the reverse of $\mA$ is completed within the first two iterations of the ``while'' loop in the \SoftStateReductionZ function, which then terminates after the third iteration.

\item The execution of the \SoftStateReduction algorithm returns $\mA_2$ as the result. 
As the most computationally intensive step, the statement~\ref{step: funcRBRI 6} in the function \ReductionByRightInvariance is executed totally 180 times in the considered run. 
\end{itemize}

By Theorem~\ref{theorem: KFHSA}, we have $\bL(\mA_2) =_{0.1} \bL(\mA)$ under the product semantics.

An execution of the \SoftStateReduction algorithm for $\mA$ using the product structure and $\varepsilon = 0$ terminates solely due to possible numerical errors and the precision used in the statement~\ref{step: funcRBRI 11} of the \ReductionByRightInvariance function, without performing any state reduction. The most computationally intensive step, the statement~\ref{step: funcRBRI 6} in the \ReductionByRightInvariance function, is executed 28,184 times.

Similarly, executing the \SoftStateReduction algorithm for $\mA$ using the product structure, $\varepsilon = 0$ and \LinhModified{$k \geq 4$} also results in no state reduction. However, when $k = 3$ is used, the algorithm produces a fuzzy automaton with four states, denoted here as $\mA_3 = \tuple{\QA_3,\Sigma,\IA_3, \TA_3, \FA_3}$, which is specified below.

{\footnotesize\[
\IA_3 =\begin{bmatrix} 1 & 0 & 0 & 1 \end{bmatrix},\qquad
(\TA_3)_\sigma =\begin{bmatrix}
0.9 & 0.6 & 0 & 1 \\
0.9 & 0.8 & 0.8 & 1 \\
0.9 & 0.8 & 0.8 & 1 \\ 
0.81 & 0.54 & 0 & 0.9 
\end{bmatrix},\qquad
\FA_3 =\begin{bmatrix} 0 \\ 0 \\ 0.5 \\ 0 \end{bmatrix}.
\]}

\noindent
By Theorem~\ref{theorem: KFHSA}, we have $\bL^{\leq 3}(\mA_3) = \bL^{\leq 3}(\mA)$ under the product semantics.
\myend
\end{example}

\begin{figure}[thb]
\begin{center}
\begin{tikzpicture}[->,>=stealth,auto,black]
\tikzset{every state/.style={inner sep=0.06cm,minimum size=0.7cm}}
\tikzstyle{every node}=[font=\footnotesize]
\node[state] (u) {$0$};
\node[state] (v) [node distance=2.0cm, right of=u] {$1$};
\node[state] (w) [node distance=2.0cm, right of=v] {$2$};
\node[state] (up) [node distance=1.5cm, below of=u] {$3$};
\node[left of=u] (in) {};
\node[right of=w] (out) {};
\node[left of=up] (inp) {};
\draw[densely dotted, thick] (in) to node[above]{1} (u);
\draw[densely dotted, thick] (inp) to node[below]{1} (up);
\draw[densely dotted, thick] (w) to node[above]{0.5} (out);
\draw (u) to node[below]{0.6} (v);
\draw (v) to node[below]{0.8} (w);
\draw (v) to node[right,yshift=-3pt]{0.4} (up);
\draw (w) edge[above,out=150,in=30] node{0.4} (u);
\draw (up) to node[left]{1} (u);
\end{tikzpicture}
\caption{An illustration of the fuzzy automaton $\mA_G$ mentioned in Example~\ref{example: JHJKS}.\label{fig: JHJKS}}
\end{center}
\end{figure}

\begin{example}\label{example: JHJKS}
{\markLinhModified
The \FfA $\mA$ illustrated in Figure~\ref{fig: HEJSA} and discussed in Examples~\ref{example: HEJSA}, \ref{example: JHDKS}, \ref{example: HEJSA 2}, and~\ref{example: HEJSA 3} is specified in the file {\em in1.txt} in~\cite{SRFA-prog}. These examples can be verified using this input file with the verbose option enabled by running our script as follows:
\begin{verbatim}
      python3 SRFA-prog.py 0.1 verbose sparse < in1.txt > out1-0.1.txt
      python3 SRFA-prog.py 0.2 verbose sparse < in1.txt > out1-0.2.txt
      python3 SRFA-prog.py 0 k=3 verbose sparse < in1.txt > out1-0-k3.txt
      python3 SRFA-prog.py 0 k=4 verbose sparse < in1.txt > out1-0-k4.txt
\end{verbatim}
The details are stored in the output files ({\em out1*}).
In particular, these commands produce the \FfAs $\mA_{Z_{0.1},0.1}$, $\mA_{Z_{0.2},0.2}$, $\mA_{Z_{(0,3)},0}$, and $\mA_{Z_{(0,4)},0}$, respectively, which were defined or referred to in Examples~\ref{example: HEJSA 2} and~\ref{example: HEJSA 3}. Recall that $\mA_{Z_{0.1},0.1}$ has five states.
} 

Executing the \SoftStateReduction algorithm for $\mA$ using the product structure, $\varepsilon = 0$ and any $k \geq 4$ does not reduce the number of states. However, by using $k = 3$ (respectively, $k = 2$) we get a fuzzy automaton with five (respectively, four) states.

Executing the \SoftStateReduction algorithm for $\mA$ using the Hamacher structure and $\varepsilon = 0.22$ by running
\begin{verbatim}
  python3 SRFA-prog.py 0.22 structure=H verbose sparse < in1.txt > out1-0.22-H.txt
\end{verbatim}
produces a fuzzy automaton with five states. However, reducing $\varepsilon$ to $0.21$ prevents any state reduction.

Executing the \SoftStateReduction algorithm for $\mA$ using the G\"odel, nilpotent, and \L{}ukasiewicz structures together with $\varepsilon = 0$, we get fuzzy automata $\mA_G$, $\mA_N$ and $\mA_L$, respectively, where $\mA_G$ and $\mA_N$ have four states, while $\mA_L$ has three states. The fuzzy automaton $\mA_G$ is illustrated in Figure~\ref{fig: JHJKS}. By Theorem~\ref{theorem: KFHSA}, we have $\bL(\mA_G) = \bL(\mA)$ under the G\"odel semantics.
\myend
\end{example}

\begin{figure}[thb]
\begin{center}
\begin{tikzpicture}[->,>=stealth,auto,black]
\tikzset{every state/.style={inner sep=0.05cm,minimum size=0.6cm}}
\tikzstyle{every node}=[font=\scriptsize]
\node[state] (ua) {$0_a$};
\node[state] (va) [node distance=1.6cm, right of=ua] {$1_a$};
\node[state] (wa) [node distance=1.6cm, right of=va] {$2_a$};
\node[state] (upa) [node distance=1.2cm, below of=ua] {$4_a$};
\node[state] (vpa) [node distance=1.6cm, right of=upa] {$5_a$};
\node[state] (wpa) [node distance=1.6cm, right of=vpa] {$6_a$};
\node[state] (tpa) [node distance=1.2cm, left of=upa] {$3_a$};
\node[left of=ua] (ina) {};
\node[left of=tpa] (inpa) {};
\draw[densely dotted, thick] (ina) to node[above]{1} (ua);
\draw[densely dotted, thick] (inpa) to node[below]{1} (tpa);
\draw (ua) to node[above]{0.6} (va);
\draw (va) to node[above]{0.8} (wa);
\draw (wa) edge[above,out=145,in=35] node{0.4} (ua);
\draw (upa) to node[left,xshift=-3pt]{0.5} (va);
\draw (upa) to node[below]{0.4} (vpa);
\draw (vpa) to node[left,xshift=-3pt]{0.7} (wa);
\draw (vpa) to node[below]{0.8} (wpa);
\draw (wpa) edge[below,out=-145,in=-35] node{0.4} (upa);
\draw (tpa) to node[below]{1} (upa);
\node[state] (tpb) [node distance=1.2cm, right of=wpa] {$3_b$};
\node[state] (upb) [node distance=1.2cm, right of=tpb] {$4_b$};
\node[state] (vpb) [node distance=1.6cm, right of=upb] {$5_b$};
\node[state] (wpb) [node distance=1.6cm, right of=vpb] {$6_b$};
\node[state] (ub) [node distance=1.2cm, above of=upb] {$0_b$};
\node[state] (vb) [node distance=1.6cm, right of=ub] {$1_b$};
\node[state] (wb) [node distance=1.6cm, right of=vb] {$2_b$};
\draw[dashed] (ub) to node[above]{0.6} (vb);
\draw[dashed] (vb) to node[above]{0.8} (wb);
\draw[dashed] (wb) edge[above,out=145,in=35] node{0.4} (ub);
\draw[dashed] (upb) to node[left,xshift=-3pt]{0.5} (vb);
\draw[dashed] (upb) to node[below]{0.4} (vpb);
\draw[dashed] (vpb) to node[left,xshift=-3pt]{0.7} (wb);
\draw[dashed] (vpb) to node[below]{0.8} (wpb);
\draw[dashed] (wpb) edge[below,out=-145,in=-35] node{0.4} (upb);
\draw[dashed] (tpb) to node[below]{1} (upb);
\draw (wa) to node[above]{0.5} (ub);
\draw (wpa) to node[below]{0.5} (tpb);
\node[state] (ucx) [node distance=2.4cm, right of=wb] {$0_c$};
\node[state] (tpcx) [node distance=1.2cm, right of=wpb] {$3_c$};
\draw[dashed] (wb) to node[above]{0.5} (ucx);
\draw[dashed] (wpb) to node[below]{0.5} (tpcx);
\node[state] (uc) [node distance=2.2cm, below of=upa] {$0_c$};
\node[state] (vc) [node distance=1.6cm, right of=uc] {$1_c$};
\node[state] (wc) [node distance=1.6cm, right of=vc] {$2_c$};
\node[state] (upc) [node distance=1.2cm, below of=uc] {$4_c$};
\node[state] (vpc) [node distance=1.6cm, right of=upc] {$5_c$};
\node[state] (wpc) [node distance=1.6cm, right of=vpc] {$6_c$};
\node[state] (tpc) [node distance=1.2cm, left of=upc] {$3_c$};
\draw (uc) to node[above]{0.6} (vc);
\draw (vc) to node[above]{0.8} (wc);
\draw (wc) edge[above,out=145,in=35] node{0.4} (uc);
\draw (upc) to node[left,xshift=-3pt]{0.5} (vc);
\draw (upc) to node[below]{0.4} (vpc);
\draw (vpc) to node[left,xshift=-3pt]{0.7} (wc);
\draw (vpc) to node[below]{0.8} (wpc);
\draw (wpc) edge[below,out=-145,in=-35] node{0.4} (upc);
\draw (tpc) to node[below]{1} (upc);
\node[state] (tpd) [node distance=1.2cm, right of=wpc] {$3_d$};
\node[state] (upd) [node distance=1.2cm, right of=tpd] {$4_d$};
\node[state] (vpd) [node distance=1.6cm, right of=upd] {$5_d$};
\node[state] (wpd) [node distance=1.6cm, right of=vpd] {$6_d$};
\node[state] (ud) [node distance=1.2cm, above of=upd] {$0_d$};
\node[state] (vd) [node distance=1.6cm, right of=ud] {$1_d$};
\node[state] (wd) [node distance=1.6cm, right of=vd] {$2_d$};
\draw[dashed] (ud) to node[above]{0.6} (vd);
\draw[dashed] (vd) to node[above]{0.8} (wd);
\draw[dashed] (wd) edge[above,out=145,in=35] node{0.4} (ud);
\draw[dashed] (upd) to node[left,xshift=-3pt]{0.5} (vd);
\draw[dashed] (upd) to node[below]{0.4} (vpd);
\draw[dashed] (vpd) to node[left,xshift=-3pt]{0.7} (wd);
\draw[dashed] (vpd) to node[below]{0.8} (wpd);
\draw[dashed] (wpd) edge[below,out=-145,in=-35] node{0.4} (upd);
\draw[dashed] (tpd) to node[below]{1} (upd);
\draw (wc) to node[above]{0.5} (ud);
\draw (wpc) to node[below]{0.5} (tpd);
\node[right of=wd] (outd) {};
\node[right of=wpd] (outpd) {};
\draw[densely dotted, thick] (wd) to node[above]{0.5} (outd);
\draw[densely dotted, thick] (wpd) to node[below]{0.5} (outpd);
\end{tikzpicture}
\caption{Illustration of the fuzzy automaton $\mA$ used in Example~\ref{example: KJRHW}. 
Normal arrows represent transitions of~$\sigma$, while dashed arrows indicate transitions of~$\varrho$. 
The states $0_c$ and $3_c$ appear twice in order to divide the figure into two parts. 
\label{fig: KJRHW}}
\end{center}
\end{figure}
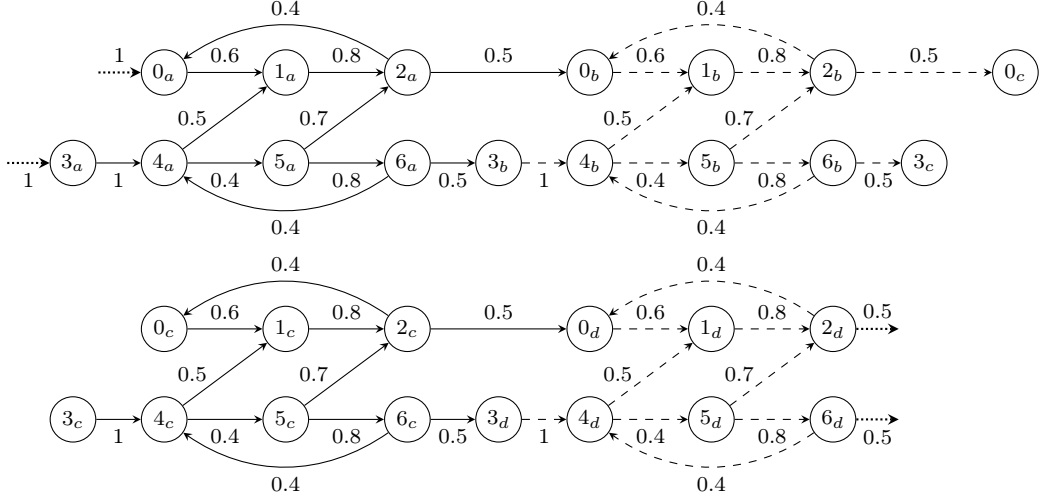

\begin{example}\label{example: KJRHW}
Consider the fuzzy automaton $\mA = \tuple{\QA, \Sigma, \IA, \TA, \FA}$ over $\Sigma = \{\sigma,\varrho\}$, illustrated in Figure~\ref{fig: KJRHW}.  
It has 28 states and is specified in the file {\em in3.txt} in~\cite{SRFA-prog}. Consider executions of the \SoftStateReduction algorithm for $\mA$ and let $\mAp$ denote the resulting fuzzy automaton. The following assertions can be checked by using our script~\cite{SRFA-prog}:
\begin{itemize}
\item If the product structure is used together with $\varepsilon \in \{0.1$, $0.01$, $0.004$, $0.003$, $0.002$, $0.001\}$, $\mAp$~has 6, 19, 24, 25, 27 or 28 states, respectively. 
\item If the product structure and $\varepsilon = 0$ are used, either alone or with $k \geq 14$, no state reduction occurs (i.e., $\mAp = \mA$). 
\item If the G\"odel, \L{}ukasiewicz, or nilpotent structure is used together with $\varepsilon = 0$, $\mAp$ has 25, 3 or 2 states, respectively. 
\item If the Hamacher structure is used together with $\varepsilon \in \{0.2, 0.1\}$, $\mAp$ has 9 or 27 states, respectively. 
\myend
\end{itemize}   
{\markLinhModified It is worth noting that:
\begin{itemize}
\item the algorithms proposed in~\cite{CSIP.10,SCI.14,SCB.18,EPTCS386.6,FSS-D-25-00029} operate under the considered residuated lattices only for $\varepsilon = 0$, and even in this case they do not yield any better reduction for~$\mA$;
\item the algorithms described in~\cite{StanimirovicMC.22,PEEVA20084152,YangL.19,BASAK2002223} can be applied to~$\mA$ when the G\"odel structure is employed;
\item none of the algorithms presented in~\cite{LEI20071413,WU20101635,StanimirovicMC.22,PEEVA20084152,MALIK1999323,CHENG2004439,YangL.19,BASAK2002223} is applicable to~$\mA$ when the product, Hamacher, \L{}ukasiewicz, or nilpotent structure is used.
\myend
\end{itemize}
} 
\end{example}

\LinhModified{Additional experimental results from executions of the~\SoftStateReduction algorithm are provided in Appendix~\ref{appendix}.}

\LinhModified{We now analyze the time complexity of the \SoftStateReduction algorithm.} 
Given an \FfA $\mA$ over $\mL$ and an $\varepsilon \in L$ such that $\mLe$ is $\otimes$-locally finite, we define the {\em $\otimes$-local-finiteness degree} of $\mLe$ with respect to $\mA$, denoted by $\lfdeg(\mA,\varepsilon)$, to be the number of all values from $L$ that can be generated using $\otimese$ from the values used for specifying $\mA$. 
For example, if $\mA$ is a crisp automaton, then $\lfdeg(\mA,\varepsilon)$ is 1 or 2. 
As another example, if $\mL$ is the product structure, $\varepsilon \in (0,1]$, there are $c$ numbers from $(0,1)$ used for specifying $\mA$ and $d$ is the biggest among them, then $\lfdeg(\mA,\varepsilon) \leq 2 + c^{\log_d \varepsilon}$. 

\begin{proposition}
Let $\mA = \tuple{\QA, \Sigma, \IA, \TA, \FA}$ be an \FfA with $|\QA| = n$ and $|\Sigma| = s$. 
When $\mLe$ is $\otimes$-locally finite, let $l = \lfdeg(\mA,\varepsilon)$. 
Then, the time complexity of the \SoftStateReduction algorithm is within 
\begin{itemize}
\item $O(n^2 s^k(n + k \log{s}) + n^5 s)$ when $k \in \NN$,
\item $O(n^3 s l^n \log{l} + n^5 s)$ when $\mLe$ is $\otimes$-locally finite.
\end{itemize}
Under the assumptions that $s$ is a constant, $s > 1$, $k \geq n$, $l > 1$ and $\log{l}$ is within $O(n)$, the time complexity order is:
\begin{itemize}
\item $O(k n^2 s^k)$ when $k \in \NN$,
\item $O(n^4 l^n)$ when $\mLe$ is $\otimes$-locally finite, 
\item $O(\min(k n^2 s^k, n^4 l^n))$ when $k \in \NN$ and $\mLe$ is $\otimes$-locally finite.
\end{itemize}
\end{proposition}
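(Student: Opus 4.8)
The plan is to split the running time of the \SoftStateReduction algorithm into three contributions: the number of invocations of \ReductionByRightInvariance, the cost of the loop (statements \ref{step: funcRBRI 1}--\ref{step: funcRBRI 9}) that builds the set $\mF$, and the cost of assembling the afterset automaton (statements \ref{step: funcRBRI 10}--\ref{step: funcRBRI 16}). Throughout I would adopt the cost model in which each application of $\otimese$, $\toe$ and each comparison of two lattice values is $O(1)$, so that one composition $\TA_\sigma \circe f$ (statement \ref{step: funcRBRI 6}) costs $O(n^2)$ and a comparison of two fuzzy subsets of $\QA$ costs $O(n)$. I would also store $\mF$ in a balanced search tree keyed lexicographically by the vector of membership values, so that the membership test and insertion in statement \ref{step: funcRBRI 7} together cost $O(n\log|\mF|)$.

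First I would bound the number of calls to \ReductionByRightInvariance. Each iteration of the ``while'' loop of \SoftStateReductionZ that does not return strictly decreases the number of states, so starting from $n$ states there are $O(n)$ iterations, each performing two calls (statements \ref{step: funSSRZ 3} and~\ref{step: funSSRZ 4}); since \SoftStateReduction invokes \SoftStateReductionZ only twice (statements \ref{step: algSSR 1} and~\ref{step: algSSR 2}), the total number of calls to \ReductionByRightInvariance is $O(n)$. The preprocessing in statement \ref{step: algSSR 0} is a support-graph reachability computation costing $O(n^2 s)$, which is dominated by the other terms.

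Next I would bound the cost of one call in terms of $|\mF|$. When $k\in\NN$, $\mF$ holds at most one fuzzy set per word of length at most $k$, so $|\mF|=O(s^k)$ and $\log|\mF|=O(k\log s)$; when $\mLe$ is $\otimes$-locally finite, each of the $n$ membership values of every $f\in\mF$ is one of the $l=\lfdeg(\mA,\varepsilon)$ values generable by $\otimese$, so $|\mF|\le l^n$ and $\log|\mF|=O(n\log l)$. Statement \ref{step: funcRBRI 6} is evaluated once per expanded element and letter, and since only words below the current length bound are expanded, this is $O(s^k)$ times in the bounded case and $O(s\,l^n)$ times in the locally finite case. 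Each evaluation costs $O(n^2)$ for the composition plus $O(n\log|\mF|)$ for statement \ref{step: funcRBRI 7}, while statement \ref{step: funcRBRI 10} costs $O(n^2|\mF|)$. In the afterset construction (statements \ref{step: funcRBRI 11}--\ref{step: funcRBRI 15}), computing each $\TAp_\sigma(p,q)=Z_p\circe\TA_\sigma\circe Z^q$ costs $O(n^2)$ and there are $O(sn^2)$ such triples, for $O(sn^4)$ in total, which dominates the cost of forming $\QA/Z$. Collecting terms yields a per-call cost of $O\big(s^k(n^2+nk\log s)+sn^4\big)=O\big(n\,s^k(n+k\log s)+sn^4\big)$ when $k\in\NN$, and $O\big(s\,l^n(n^2+n^2\log l)+sn^4\big)=O\big(n^2 s\,l^n\log l+sn^4\big)$ in the locally finite case.

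Finally I would multiply each per-call cost by the $O(n)$ calls, obtaining $O(n^2 s^k(n+k\log s)+n^5 s)$ and $O(n^3 s\,l^n\log l+n^5 s)$. Under the stated hypotheses these simplify as claimed: with $s=O(1)$ and $k\ge n$ one has $n+k\log s=O(k)$, and since $s^k$ outgrows every polynomial the term $kn^2 s^k$ absorbs $n^5 s$, giving $O(kn^2 s^k)$; with $\log l=O(n)$ the exponential $l^n$ outgrows $n$, so $n^3 s\,l^n\log l=O(n^4 l^n)$ absorbs $n^5 s$, giving $O(n^4 l^n)$; and when both conditions hold the ``repeat'' loop halts at the earlier of its two stopping criteria, so $|\mF|\le\min(O(s^k),l^n)$ and the running time is $O(\min(kn^2 s^k,\,n^4 l^n))$. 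The step I expect to require the most care is the accounting of the loop in statements \ref{step: funcRBRI 1}--\ref{step: funcRBRI 9}: one must argue that statement \ref{step: funcRBRI 6} runs only $O(s^k)$ (respectively $O(s\,l^n)$) times rather than once per word, and isolate the membership-test cost so that $\log|\mF|$ contributes exactly the additive $k\log s$ and the multiplicative $\log l$ appearing in the two bounds, rather than being absorbed too crudely.
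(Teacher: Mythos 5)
Your proposal is correct and follows essentially the same route as the paper's proof: the same decomposition into the per-statement accounting of \ReductionByRightInvariance (statements 6--7 costing $O(n^2 + n\log|\mF|)$ each and executed $O(s^k)$ resp.\ $O(s\,l^n)$ times, statement 10 at $O(n^2|\mF|)$, statements 11--15 at $O(n^4 s)$), multiplied by the $O(n)$ iterations of the ``while'' loop in \SoftStateReductionZ and the two calls in \SoftStateReduction. The only differences are cosmetic --- you make explicit the dictionary structure for $\mF$, the count of statement-6 executions, and the $\min$ argument for the combined case, which the paper leaves implicit (and your bound $|\mF| = O(s^k)$ quietly corrects the paper's slip ``$|\mF| \leq s^n$'' in the bounded-$k$ case).
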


\begin{proof}
\LinhModified{First, note that the statement~\ref{step: algSSR 0} of the \SoftStateReduction algorithm, which eliminates unreachable or unproductive states, can be performed in $O(sn^2)$ time, where $|\QA| = n$ and $|\Sigma| = s$.} 

\LinhModified{Consider} the time complexity of the \ReductionByRightInvariance function. 
The statements~\ref{step: funcRBRI 1} and~\ref{step: funcRBRI 2} run in constant time. 
Consider the ``repeat'' loop in the statements \ref{step: funcRBRI 3}-\ref{step: funcRBRI 9}. 
We have $|\mF| \leq s^n$ when $k \in \NN$, and $|\mF| \leq l^n$ when $\mLe$ is $\otimes$-locally finite. 
The time used by this loop is dominated by the time used for the statements~\ref{step: funcRBRI 6} and~\ref{step: funcRBRI 7}. The number of times these two statements are executed is bounded by $s^k$ when $k \in \NN$, and by $s |\mF| \leq s l^n$ when $\mLe$ is $\otimes$-locally finite. 
The time needed for an execution of the statements~\ref{step: funcRBRI 6} and~\ref{step: funcRBRI 7} is of the order $O(n^2 + n\log{|\mF|})$. 
Thus, the time used by the ``repeat'' loop is of the order:
\begin{itemize}
\item $O(s^k(n^2 + n\log{|\mF|}))$ when $k \in \NN$,
\item $O(s l^n(n^2 + n\log{|\mF|}))$ when $\mLe$ is $\otimes$-locally finite.
\end{itemize}
The statement~\ref{step: funcRBRI 10} runs in time of the order $O(n^2|\mF|)$. 
The statements~\ref{step: funcRBRI 11}-\ref{step: funcRBRI 15} run in time of the order $O(n^4 s)$. 
Summing up, the time complexity of the \ReductionByRightInvariance function is of the order:
\begin{itemize}
\item $O(s^k(n^2 + n\log{|\mF|}) + n^2|\mF| + n^4 s) = O(n s^k(n + k \log{s}) + n^4 s)$ when $k \in \NN$,
\item $O(s l^n(n^2 + n\log{|\mF|}) + n^2|\mF| + n^4 s) = O(n^2 s l^n \log{l} + n^4 s)$ when $\mLe$ is $\otimes$-locally finite.
\end{itemize}

Now consider the \SoftStateReductionZ function. The ``while'' loop executes at most $n$ times. Each iteration of this loop runs in time of the order as given above for the \ReductionByRightInvariance function. Therefore, the time complexity of the \SoftStateReductionZ function, and consequently also of the \SoftStateReduction algorithm, is of the order stated in the theorem.
\end{proof}


\section{Conclusions}\label{sec: conc}

{\markLinhModified
In this work, we introduced \emph{soft state reduction} for fuzzy finite automata over linear and complete residuated lattices, based on the novel concepts of \emph{approximate residuated lattices} and \emph{approximate invariances}. By treating fuzzy values below a threshold $\varepsilon$ as negligible and, optionally, restricting attention to words of bounded length $k$, we developed a flexible framework that achieves state reduction while preserving language equivalence up to a controlled tolerance. The central result is the \SoftStateReduction algorithm, which alternates reductions based on right and left approximate invariances and is guaranteed to terminate whenever the approximate residuated lattice is $\otimes$-locally finite or a finite length bound $k$ is specified.

Compared with existing approaches, our framework demonstrates clear advantages. In particular, methods from \cite{MALIK1999323,BASAK2002223,CHENG2004439,LI20071423,Belohlvek2009OnAM,LITFS2015,Halamish2015,YangL.19,YANG202172,ShamsizadehZG24,LEI20071413,PEEVA20084152,WU20101635,StanimirovicMC.22} are not applicable within the algebraic framework of our work. Although the exact reduction approach~\cite{CSIP.10,SCI.14,SCB.18} and approximate reduction approach based on bounded length~$k$~\cite{EPTCS386.6,FSS-D-25-00029} are applicable in principle, they often fail to achieve a reduction in many scenarios, as demonstrated in our work. In particular, the latter approach becomes impractical for larger values of $k$, limiting its usefulness in real-world applications. In contrast, our approach overcomes these limitations by ensuring substantial reductions for fuzzy finite automata over general linear and complete residuated lattices, even in cases where earlier methods either fail to terminate or yield no reduction.

Taken together, these contributions establish a general and practical reduction method for fuzzy finite automata that extends beyond the limitations of previous techniques. The proposed framework not only advances the foundations of state reduction but also opens the way for applications in formal verification, learning, and approximate reasoning with fuzzy models.
}


\bibliographystyle{elsarticle-num}
\bibliography{references}

\appendix
\section{Additional experimental results}
\label{appendix}

In this appendix, we present additional experimental results from executions of the~\SoftStateReduction algorithm using our implementation~\cite{SRFA-prog}, as described in Section~\ref{sec: alg}. We begin by analyzing how the algorithm behaves under various parameter settings.

\begin{example}\label{example: additional-in1-txt}
	
	Reconsider the \FfA $\mA = \tuple{\QA, \Sigma, \IA, \TA, \FA}$ over $\Sigma = \{\sigma\}$ specified in Example~\ref{example: HEJSA} and reused in Example~\ref{example: JHJKS}. Recall that it has seven states. As previously mentioned, its full specification is provided in the file~{\em in1.txt}, which is available in~\cite{SRFA-prog}. 
	
	Table~\ref{tab:reduction-results} presents the number of states remaining after executing the \SoftStateReduction algorithm for~$\mA$ under different configurations of the approximation threshold~$\varepsilon$, the word length bound~$k$, and the underlying residuated lattice. It covers five residuated lattices (the product~(P), Hamacher~(H), G\"odel~(G), \L{}ukasiewicz~(L), and nilpotent~(N) structures), three values of $\varepsilon$ (0, 0.1 and 0.2) and three options for $k$: $k = 2$, $k = 3$, or $k \geq 4$ (including $k = \infty$). A discussion on this table is given below.
	
	\begin{itemize}
		\item Using the product structure, the number of remaining states after the reduction is directly influenced by the values of~$\varepsilon$ and~$k$. At~$\varepsilon=0$, the reduction is limited, especially as~$k$ increases. Specifically, for $k \geq4$, the algorithm does not perform any reduction, resulting in seven states, the same as the original \FfA. However, when~$k$ is reduced to~2 or~3, the number of states is reduced to four or five, respectively. At~$\varepsilon=0.1$, the reduction process becomes more effective: the \FfA is reduced to four states at $k=2$, and to five states for $k\geq 3$. At~$\varepsilon=0.2$, the reduction effect is even stronger: regardless of the value of~$k$, the \FfA is consistently reduced to just four states. 
		
		\item Using the Hamacher structure, the reduction pattern remains unchanged across the three values of~$\varepsilon$. In particular, the \FfA is reduced to four states at~$k=2$, to five states at~$k=3$, and no reduction is observed for~$k\geq4$, where the number of states remains at seven.
		
		\item Using the G\"odel structure gives a steady and moderate reduction in all cases. At~$k=2$, the number of states is always reduced to three. For~$k\geq3$, it gives four states, no matter which value of~$\varepsilon$ is used. This shows that the result does not change much when~$\varepsilon$ varies.
		
		\item Using the \L{}ukasiewicz (respectively, nilpotent) structure always reduces the \FfA to three (respectively, four) states, no matter which values of~$\varepsilon$ or~$k$ are used. This means that the reduction result remains the same across all the mentioned settings. 
		\myend
	\end{itemize} 
	
	\begin{table}[t]
		\centering
		\renewcommand{\arraystretch}{1.2}
		\begin{tabular}{|c|ccc|ccc|ccc|}
			\hline
			\multirow{2}{*}{\textbf{Structure}} & \multicolumn{3}{c|}{\boldmath$\varepsilon = 0$} & \multicolumn{3}{c|}{\boldmath$\varepsilon = 0.1$} & \multicolumn{3}{c|}{\boldmath$\varepsilon = 0.2$} \\
			\cline{2-10}
			& $k=2$ & $k=3$ & $k\geq4$ & $k=2$ & $k=3$ & $k\geq4$ & $k=2$ & $k=3$ & $k\geq4$ \\
			\hline
			P & 4 & 5 & 7 & 4 & 5 & 5 & 4 & 4 & 4 \\
			H & 4 & 5 & 7 & 4 & 5 & 7 & 4 & 5 & 7 \\
			G & 3 & 4 & 4 & 3 & 4 & 4 & 3 & 4 & 4 \\
			L & 3 & 3 & 3 & 3 & 3 & 3 & 3 & 3 & 3 \\
			N & 4 & 4 & 4 & 4 & 4 & 4 & 4 & 4 & 4 \\
			\hline
		\end{tabular}
		\caption{The number of remaining states after executing the \SoftStateReduction algorithm for the \FfA~$\mA$ considered in Example~\ref{example: additional-in1-txt}, using different sets of parameter values.\label{tab:reduction-results}}
	\end{table}
	
\end{example}

\begin{figure}[h!]
	\centering
	\begin{tikzpicture}
		\begin{axis}[
			every axis plot/.append style={
				nodes near coords,
				every node near coord/.append style={font=\tiny, yshift=5pt}
			},
			width=13cm,
			height=7cm,
			xlabel={$k$},
			ylabel={Number of remaining states},
			legend pos=north west,
			grid=major,
			xtick={2,4,6,8,10,12,14},
			ytick={0,5,...,30},
			ymin=0, ymax=30,
			mark options={scale=1.2},
			legend style={font=\small},
			]
			
			\addplot+[smooth, mark=*] coordinates {
				(2,4) (4,7) (6,12) (8,16) (10,20) (12,25) (14,25)
			};
			\addlegendentry{G\"odel}

			\addplot+[smooth, mark=square*] coordinates {
				(2,2) (4,2) (6,2) (8,2) (10,2) (12,2) (14,2)
			};
			\addlegendentry{nilpotent}
			
			\addplot+[smooth, mark=triangle*] coordinates {
				(2,3) (4,3) (6,3) (8,3) (10,3) (12,3) (14,3)
			};
			\addlegendentry{\L{}ukasiewicz}

			\addplot+[smooth, mark=diamond*] coordinates {
				(2,5) (4,8) (6,12) (8,18) (10,20) (12,26) (14,28)
			};
			\addlegendentry{Hamacher}
			
			\addplot+[smooth, mark=star] coordinates {
				(2,5) (4,8) (6,14) (8,18) (10,20) (12,26) (14,28)
			};
			\addlegendentry{product}
			
		\end{axis}
	\end{tikzpicture}
	\caption{The number of remaining states after executing the \SoftStateReduction algorithm for the \FfA~$\mA$ from Example~\ref{example: additional-in3-txt}, using various values of~$k$ and different residuated lattices with~$\varepsilon = 0$.}
	\label{fig: additional-in3-txt}
\end{figure}

\begin{figure}[h!]
	\centering
	
	\begin{subfigure}[t]{0.48\textwidth}
		\centering
		\begin{tikzpicture}
			\begin{axis}[
				every axis plot/.append style={
					nodes near coords,
					every node near coord/.append style={font=\tiny, yshift=5pt}
				},
				width=\textwidth,
				height=6cm,
				title={$\varepsilon = 0.1$},
				xlabel={$k$}, ylabel={Number of remaining states},
				legend style={at={(0.02,0.98)}, anchor=north west, font=\small, draw=none, fill=none},
				xtick={2,4,6,8,10,12,14},
				ymin=0, ymax=30,
				grid=major,
				mark options={scale=1.2}
				]
				\addplot+[mark=*] coordinates {(2,4) (4,7) (6,12) (8,16) (10,20) (12,25) (14,25)};
				\addlegendentry{G\"odel}
				
				\addplot+[mark=square*] coordinates {(2,2) (4,2) (6,2) (8,2) (10,2) (12,2) (14,2)};
				\addlegendentry{nilpotent}
				
				\addplot+[mark=triangle*] coordinates {(2,3) (4,3) (6,3) (8,3) (10,3) (12,3) (14,3)};
				\addlegendentry{\L{}ukasiewicz}
				
				\addplot+[mark=diamond*] coordinates {(2,5) (4,10) (6,14) (8,18) (10,22) (12,26) (14,27)};
				\addlegendentry{Hamacher}
				
				\addplot+[mark=star] coordinates {(2,5) (4,6) (6,6) (8,6) (10,6) (12,6) (14,6)};
				\addlegendentry{product}
			\end{axis}
		\end{tikzpicture}
	\end{subfigure}
	%
	\begin{subfigure}[t]{0.48\textwidth}
		\centering
		\begin{tikzpicture}
			\begin{axis}[
				every axis plot/.append style={
					nodes near coords,
					every node near coord/.append style={font=\tiny, yshift=5pt}
				},
				width=\textwidth,
				height=6cm,
				title={$\varepsilon = 0.2$},
				xlabel={$k$}, ylabel={Number of remaining states},
				legend style={at={(0.02,0.98)}, anchor=north west, font=\small, draw=none, fill=none},
				xtick={2,4,6,8,10,12,14},
				ymin=0, ymax=30,
				grid=major,
				mark options={scale=1.2}
				]
				\addplot+[mark=*] coordinates {(2,4) (4,7) (6,12) (8,16) (10,20) (12,25) (14,25)};
				\addlegendentry{G\"odel}
				
				\addplot+[mark=square*] coordinates {(2,2) (4,2) (6,2) (8,2) (10,2) (12,2) (14,2)};
				\addlegendentry{nilpotent}
				
				\addplot+[mark=triangle*] coordinates {(2,3) (4,3) (6,3) (8,3) (10,3) (12,3) (14,3)};
				\addlegendentry{\L{}ukasiewicz}
				
				\addplot+[mark=diamond*] coordinates {(2,5) (4,9) (6,9) (8,9) (10,9) (12,9) (14,9)};
				\addlegendentry{Hamacher}
				
				\addplot+[mark=star] coordinates {(2,4) (4,4) (6,4) (8,4) (10,4) (12,4) (14,4)};
				\addlegendentry{product}
			\end{axis}
		\end{tikzpicture}
	\end{subfigure}
	
	\vspace{0.5cm}
	
	\begin{subfigure}[t]{0.48\textwidth}
		\centering
		\begin{tikzpicture}
			\begin{axis}[
				every axis plot/.append style={
					nodes near coords,
					every node near coord/.append style={font=\tiny, yshift=5pt}
				},
				width=\textwidth,
				height=6cm,
				title={$\varepsilon = 0.3$},
				xlabel={$k$}, ylabel={Number of remaining states},
				legend style={at={(0.02,0.98)}, anchor=north west, font=\small, draw=none, fill=none},
				xtick={2,4,6,8,10,12,14},
				ymin=0, ymax=30,
				grid=major,
				mark options={scale=1.2}
				]
				\addplot+[mark=*] coordinates {(2,4) (4,7) (6,12) (8,16) (10,20) (12,25) (14,25)};
				\addlegendentry{G\"odel}
				
				\addplot+[mark=square*, opacity=0.6] coordinates {(2,2) (4,2) (6,2) (8,2) (10,2) (12,2) (14,2)};
				\addlegendentry{nilpotent}
				
				\addplot+[mark=triangle*] coordinates {(2,2) (4,2) (6,2) (8,2) (10,2) (12,2) (14,2)};
				\addlegendentry{\L{}ukasiewicz}
				
				\addplot+[mark=diamond*] coordinates {(2,5) (4,5) (6,5) (8,5) (10,5) (12,5) (14,5)};
				\addlegendentry{Hamacher}
				
				\addplot+[mark=star] coordinates {(2,3) (4,3) (6,3) (8,3) (10,3) (12,3) (14,3)};
				\addlegendentry{product}
			\end{axis}
		\end{tikzpicture}
	\end{subfigure}
	%
	\begin{subfigure}[t]{0.48\textwidth}
		\centering
		\begin{tikzpicture}
			\begin{axis}[
				every axis plot/.append style={
					nodes near coords,
					every node near coord/.append style={font=\tiny, yshift=5pt}
				},
				width=\textwidth,
				height=6cm,
				title={$\varepsilon = 0.4$},
				xlabel={$k$}, ylabel={Number of remaining states},
				legend style={at={(0.02,0.98)}, anchor=north west, font=\small, draw=none, fill=none},
				xtick={2,4,6,8,10,12,14},
				ymin=0, ymax=30,
				grid=major,
				mark options={scale=1.2}
				]
				\addplot+[mark=*] coordinates {(2,4) (4,6) (6,8) (8,10) (10,12) (12,14) (14,14)};
				\addlegendentry{G\"odel}
				
				\addplot+[mark=square*, opacity=0.6] coordinates {(2,2) (4,2) (6,2) (8,2) (10,2) (12,2) (14,2)};
				\addlegendentry{nilpotent}
				
				\addplot+[mark=triangle*] coordinates {(2,2) (4,2) (6,2) (8,2) (10,2) (12,2) (14,2)};
				\addlegendentry{\L{}ukasiewicz}
				
				\addplot+[mark=diamond*] coordinates {(2,2) (4,2) (6,2) (8,2) (10,2) (12,2) (14,2)};
				\addlegendentry{Hamacher}
				
				\addplot+[mark=star] coordinates {(2,2) (4,2) (6,2) (8,2) (10,2) (12,2) (14,2)};
				\addlegendentry{product}
			\end{axis}
		\end{tikzpicture}
	\end{subfigure}
	
	\caption{The number of remaining states after executing the \SoftStateReduction algorithm for the \FfA~$\mA$ from Example~\ref{example: additional-in3-txt}, using various values of~$k$ and different residuated lattices with~$\varepsilon \in \{0.1, 0.2, 0.3, 0.4\}$.}
	\label{fig: additional-in3-txt-2}
\end{figure}
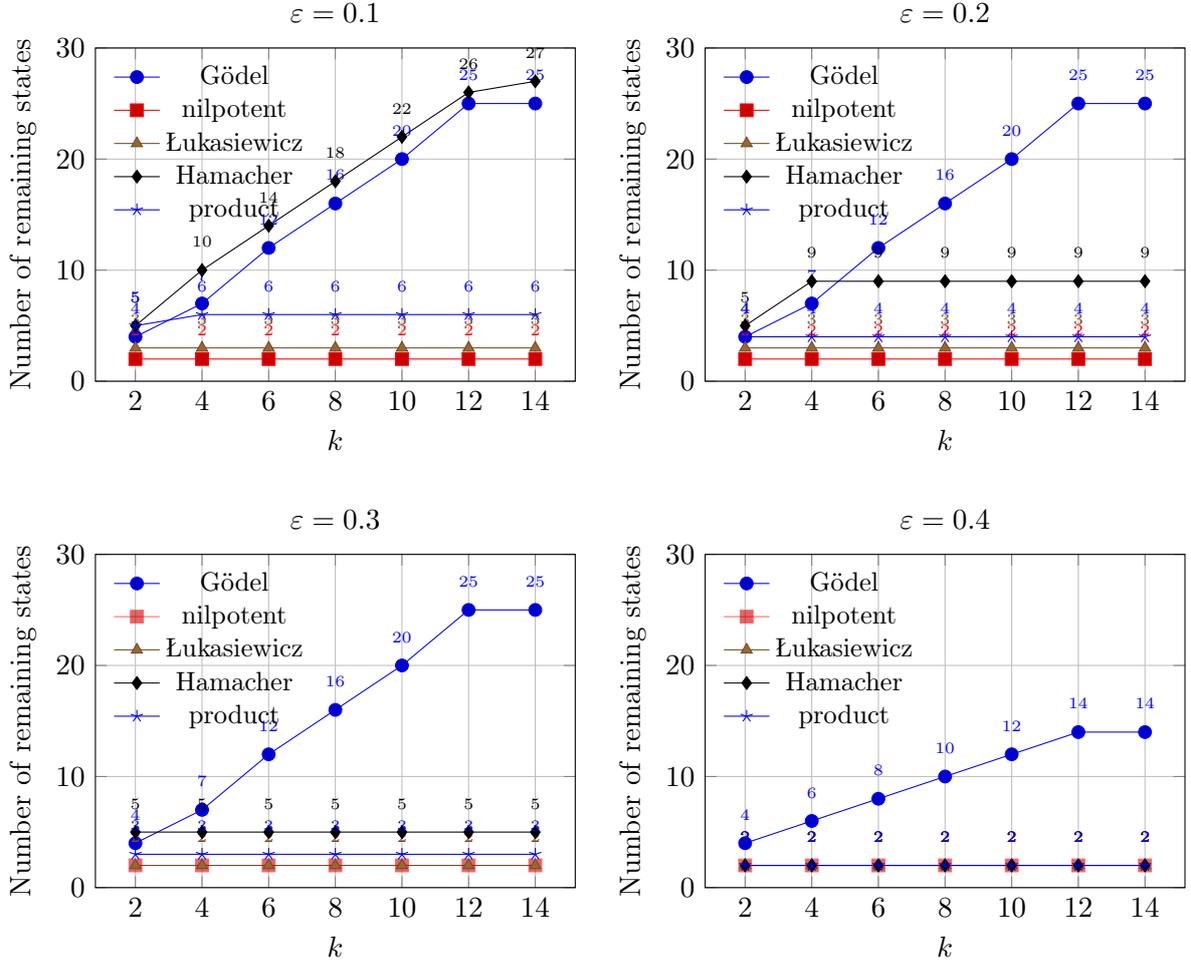

\begin{example} \label{example: additional-in3-txt} 
	Reconsider the \FfA $\mA = \tuple{\QA, \Sigma, \IA, \TA, \FA}$ over~\mbox{$\Sigma = \{\sigma,\varrho\}$} specified in Example~\ref{example: KJRHW} and depicted in Figure~\ref{fig: KJRHW}. Recall that $\mA$ has 28 states and its full specification is provided in the file~{\em in3.txt}, available in~\cite{SRFA-prog}. In this example, we examine the effects of executing the \SoftStateReduction algorithm for this \FfA under various parameter settings, using five residuated lattices (as in the previous example), five values of~$\varepsilon$ (0, 0.1, 0.2, 0.3, and 0.4) and a range of values for~$k$: 2, 3, \ldots, 13, and $\geq\!14$ (including $\infty$). Details are given below.
	\begin{itemize}
		\item Using the G\"odel structure with~$\varepsilon$ between~0 and~0.3, the number of remaining states gradually increases with~$k$, starting from~4 states at $k=2$, reaching~23 states at $k=11$, and then stabilizing at~25 states for~$k \geq 12$. However, when~$\varepsilon$ is~0.4, the number of remaining states also increases with~$k$ (starting from~4), but it stabilizes at a lower value of~14 states for~$k \geq 11$. This clearly shows the influence of~$\varepsilon$, especially for larger values of~$k$: the number of states remains at~25 for~$\varepsilon$ between~0 and~0.3, but significantly drops to~14 when~$\varepsilon$ increases to~0.4 (for $k \geq 11$). The reduction when using the G\"odel structure is thus sensitive to both the parameters~$k$ and~$\varepsilon$. Overall, the number of remaining states ranges from~4 to~25 when the G\"odel structure is used.
		
		\item Using the Hamacher structure with~$\varepsilon$ equal to 0 (respectively, 0.1), the number of remaining states also gradually increases with~$k$, starting from~5 states at $k = 2$, and stabilizing at~28 states for~$k \geq 14$ (respectively, at~27 states for~$k \geq 13$). For~$\varepsilon=0.2$, the number of remaining states is~5, 7, 9, and~8 at $k$ equal to~2, 3, 4, and~5, respectively, and stabilizes at~9 for~$k \geq 6$. Despite Corollary~\ref{cor: JHFRK2}\eqref{item: JHFRK2 2}, there is a decrease from~9 to~8 when $k$ increases from~4 to~5. This is because the~\SoftStateReduction algorithm is significantly more sophisticated than merely computing $\ReductionByRightInvariance(\mA, \varepsilon, k)$. Using a larger~$\varepsilon$ (0.3 or~0.4), the number of remaining states stays low regardless of~$k$: it is~5 states at~$\varepsilon = 0.3$ for all $k \geq 2$, and~2 states at~$\varepsilon = 0.4$ for all $2 \leq k \leq 1000$. This clearly demonstrates the strong impact of~$\varepsilon$, particularly for larger values of~$k$: the number of remaining states is up to 28 when $\varepsilon = 0$, but decreases sharply as~$\varepsilon$ increases. The reduction when using the Hamacher structure thus proves to be highly sensitive to both~$k$ and~$\varepsilon$. Larger values of~$\varepsilon$ lead to significantly greater reductions in the number of remaining states. Overall, the number of remaining states ranges from~3 to~28 when the Hamacher structure is used.
		
		\item Using the product structure, the experimental results concerning the number of remaining states are dependent on both the parameters~$k$ and~$\varepsilon$. Notably,~$\varepsilon$ has a particularly strong influence on these results. At~$\varepsilon=0$, the number of remaining states gradually increases as~$k$ increases. Starting at~5 states at~$k=2$, this number rises, reaching~27 states at $k=13$. It then stabilizes at the highest observed value of~28 states for all~$k\geq14$. When~$\varepsilon\in\{0.1,0.2,0.3,0.4\}$, the behavior changes drastically compared to~$\varepsilon=0$, resulting in significantly better state reductions. For~$\varepsilon=0.1$, the number of remaining states is~5 at $k = 2$, then jumps to and stabilizes at~6 for~$k \geq 3$. For higher values of~$\varepsilon$, such a number is even smaller. It is~4 states at~$\varepsilon=0.2$, 3~states at~$\varepsilon=0.3$, and reaches a minimum of~2 states at~$\varepsilon=0.4$ (for all~$k \geq2 $). Overall, the number of remaining states is highly sensitive to both the parameters~$k$ and~$\varepsilon$ when the product structure is used. Its range spans from~2 to~28.
		
		\item Using the \L{}ukasiewicz and nilpotent structure results in highly stable reductions, notably showing no dependence on the parameter~$k$. In the case of the \L{}ukasiewicz structure, the number of remaining states depends only on~$\varepsilon$, consistently yielding~3 states when~$\varepsilon \in \{0, 0.1, 0.2\}$, before dropping to a minimum of~2 states for~$\varepsilon \geq 0.3$. Using the nilpotent structure results in the most stable and minimal reduction overall. The number of remaining states consistently stays at~2, showing complete insensitivity to both~$k$ and~$\varepsilon$.
	\end{itemize}
	
	The number of remaining states after executing the \SoftStateReduction algorithm for the considered \FfA~$\mA$ is depicted in Figures~\ref{fig: additional-in3-txt} and~\ref{fig: additional-in3-txt-2}. These figures illustrate the results when using various residuated lattices with different values of~$k$ and~$\varepsilon \in \{0, 0.1, 0.2, 0.3, 0.4\}$.
	\myend
	
\end{example}

\begin{example}\label{examp:16-states}
	We consider an \FfA consisting of~16 states, fully specified in the file~{\em in4.txt} of~\cite{SRFA-prog}, with a graphical representation given in Figure~\ref{fig:16-states}. Following the setup of the previous example, this analysis evaluates the behavior of our \SoftStateReduction algorithm applied to this \FfA using the five previously mentioned residuated lattices, across four approximation thresholds $\varepsilon \in \{0, 0.1, 0.2, 0.3\}$ and various word length bounds~$k$. The experimental results are described in detail below.
		
	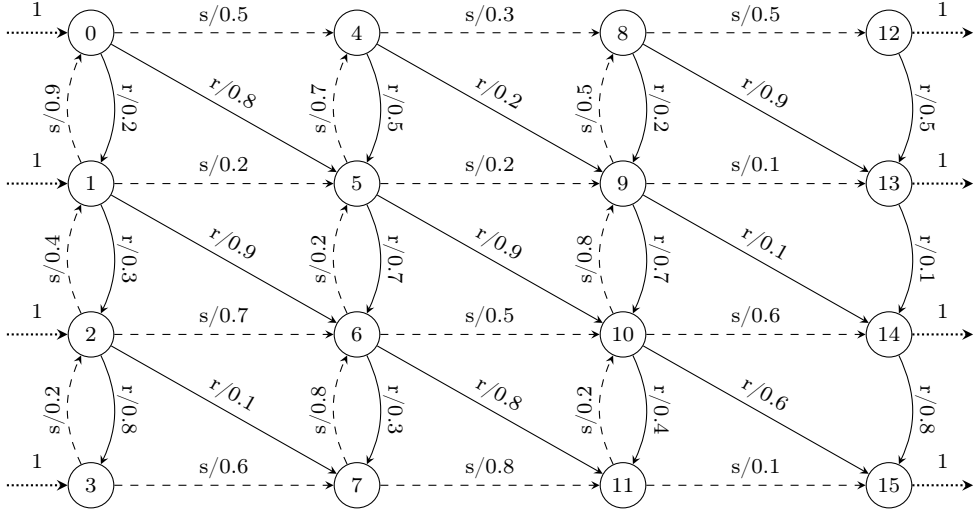
\begin{figure}[h!]
		\centering
		\begin{tikzpicture}[
			->, >=stealth, 
			every node/.style={circle, draw, minimum size=0.6cm, inner sep=1pt},
			font=\scriptsize,
			edge label/.style={midway, sloped, fill=none, draw=none}
			]
			
			\foreach \i/\x/\y in {
				0/0/6, 1/0/4, 2/0/2, 3/0/0,
				4/3.5/6, 5/3.5/4, 6/3.5/2, 7/3.5/0,
				8/7/6, 9/7/4, 10/7/2, 11/7/0,
				12/10.5/6, 13/10.5/4, 14/10.5/2, 15/10.5/0
			}{
				\node (n\i) at (\x,\y) {\i};
			}
			
			\foreach \i in {0,1,2,3} {
				\draw[densely dotted, thick, ->] ($(n\i.west)+(-0.8,0)$) -- (n\i.west) node[edge label, above] {1};
			}
			
			\foreach \i in {12,13,14,15} {
				\draw[densely dotted, thick, ->] (n\i.east) -- ($(n\i.east)+(0.8,0)$) node[edge label, above] {1};
			}
			
			\draw[bend left=25] (n0) to node[edge label, sloped, pos=0.5, yshift=5pt] {r/0.2} (n1);
			\draw[bend left=25] (n1) to node[edge label, sloped, pos=0.5, yshift=5pt] {r/0.3} (n2);
			\draw[bend left=25] (n2) to node[edge label, sloped, pos=0.5, yshift=5pt] {r/0.8} (n3);
			\draw[bend left=25] (n4) to node[edge label, sloped, pos=0.5, yshift=5pt] {r/0.5} (n5);
			\draw[bend left=25] (n5) to node[edge label, sloped, pos=0.5, yshift=5pt] {r/0.7} (n6);
			\draw[bend left=25] (n6) to node[edge label, sloped, pos=0.5, yshift=5pt] {r/0.3} (n7);
			\draw[bend left=25] (n8) to node[edge label, sloped, pos=0.5, yshift=5pt] {r/0.2} (n9);
			\draw[bend left=25] (n9) to node[edge label, sloped, pos=0.5, yshift=5pt] {r/0.7} (n10);
			\draw[bend left=25] (n10) to node[edge label, sloped, pos=0.5, yshift=5pt] {r/0.4} (n11);
			\draw[bend left=25] (n12) to node[edge label, sloped, pos=0.5, yshift=5pt] {r/0.5} (n13);
			\draw[bend left=25] (n13) to node[edge label, sloped, pos=0.5, yshift=5pt] {r/0.1} (n14);
			\draw[bend left=25] (n14) to node[edge label, sloped, pos=0.5, yshift=5pt] {r/0.8} (n15);
			\draw (n0) to node[edge label, above, yshift=-5pt] {r/0.8} (n5);
			\draw (n1) to node[edge label, above, yshift=-5pt] {r/0.9} (n6);
			\draw (n2) to node[edge label, above, yshift=-5pt] {r/0.1} (n7);
			\draw (n4) to node[edge label, above, yshift=-5pt] {r/0.2} (n9);
			\draw (n5) to node[edge label, above, yshift=-5pt] {r/0.9} (n10);
			\draw (n6) to node[edge label, above, yshift=-5pt] {r/0.8} (n11);
			\draw (n8) to node[edge label, above, yshift=-5pt] {r/0.9} (n13);
			\draw (n9) to node[edge label, above, yshift=-5pt] {r/0.1} (n14);
			\draw (n10) to node[edge label, above, yshift=-5pt] {r/0.6} (n15);
			\draw[bend left=25, dashed] (n1) to node[edge label, sloped, pos=0.5, yshift=5pt] {s/0.9} (n0);
			\draw[bend left=25, dashed] (n2) to node[edge label, sloped, pos=0.5, yshift=5pt] {s/0.4} (n1);
			\draw[bend left=25, dashed] (n3) to node[edge label, sloped, pos=0.5, yshift=5pt] {s/0.2} (n2);
			\draw[bend left=25, dashed] (n5) to node[edge label, sloped, pos=0.5, yshift=5pt] {s/0.7} (n4);
			\draw[bend left=25, dashed] (n6) to node[edge label, sloped, pos=0.5, yshift=5pt] {s/0.2} (n5);
			\draw[bend left=25, dashed] (n7) to node[edge label, sloped, pos=0.5, yshift=5pt] {s/0.8} (n6);
			\draw[bend left=25, dashed] (n9) to node[edge label, sloped, pos=0.5, yshift=5pt] {s/0.5} (n8);
			\draw[bend left=25, dashed] (n10) to node[edge label, sloped, pos=0.5, yshift=5pt] {s/0.8} (n9);
			\draw[bend left=25, dashed] (n11) to node[edge label, sloped, pos=0.5, yshift=5pt] {s/0.2} (n10);
			
			\draw[dashed] (n0) to node[edge label, above, yshift=-5pt] {s/0.5} (n4);
			\draw[dashed] (n1) to node[edge label, above, yshift=-5pt] {s/0.2} (n5);
			\draw[dashed] (n2) to node[edge label, above, yshift=-5pt] {s/0.7} (n6);
			\draw[dashed] (n3) to node[edge label, above, yshift=-5pt] {s/0.6} (n7);
			\draw[dashed] (n4) to node[edge label, above, yshift=-5pt] {s/0.3} (n8);
			\draw[dashed] (n5) to node[edge label, above, yshift=-5pt] {s/0.2} (n9);
			\draw[dashed] (n6) to node[edge label, above, yshift=-5pt] {s/0.5} (n10);
			\draw[dashed] (n7) to node[edge label, above, yshift=-5pt] {s/0.8} (n11);
			\draw[dashed] (n8) to node[edge label, above, yshift=-5pt] {s/0.5} (n12);
			\draw[dashed] (n9) to node[edge label, above, yshift=-5pt] {s/0.1} (n13);
			\draw[dashed] (n10) to node[edge label, above, yshift=-5pt] {s/0.6} (n14);
			\draw[dashed] (n11) to node[edge label, above, yshift=-5pt] {s/0.1} (n15);
			
		\end{tikzpicture}
		\caption{An illustration of the fuzzy automaton mentioned in Example~\ref{examp:16-states}.\label{fig:16-states}}
	\end{figure}

	\begin{itemize}
	\item Using the G\"odel structure, the reduction pattern depends on both the word length bound~$k$ and the approximation threshold~$\varepsilon$. For~$k = 3$, no reduction occurs at~$\varepsilon = 0$ (the \FfA remains at 16 states), but the number of states gradually decreases: to 15 at~$\varepsilon = 0.1$, to 14 at~$\varepsilon = 0.2$, and finally to 11 at~$\varepsilon = 0.3$. In contrast, for~$k \geq 4$, no reduction is observed -- the \FfA consistently remains at 16 states across all values of~$\varepsilon$.
	
	\item Using the \L{}ukasiewicz structure allows for stronger and more consistent reductions. For~$k = 3$, the number of states decreases steadily from~12 at~$\varepsilon = 0$ to~10,~8, and~6 when~$\varepsilon$ increases. A similar trend is observed for~$k \geq 4$, starting from~14 at $\varepsilon = 0$ and also reaching~6 at~$\varepsilon = 0.3$. This indicates that the \L{}ukasiewicz structure supports more flexible state merging, particularly when using higher approximation thresholds.
	
	\item Using the nilpotent structure, reduction is relatively low. For~$k = 3$, the number of states drops slightly from~13 at~$\varepsilon = 0$ to~12 at~$\varepsilon = 0.2$ and remains unchanged at~$\varepsilon = 0.3$. For~$k = 4$, only a single state is removed, with the \FfA consistently reduced to~15 states regardless of~$\varepsilon$. For~$k \geq 5$, all~16 states are retained across all values of~$\varepsilon$. 
	
	\item Using the Hamacher structure allows for a limited but consistent reduction pattern. At~$k = 3$, the number of states drops from~16 to~14 as~$\varepsilon$ increases. A similar trend appears for~$k \geq 4$, though less pronounced, with a reduction to~15 states occurring only at~$\varepsilon = 0.3$. This suggests that the Hamacher structure requires relatively high approximation thresholds to enable state merging.
	
	\item Lastly, using the product structure allows for a smooth and predictable reduction trend. At~$k = 3$, the number of states gradually drops from~16 to~14 and then to~10 as~$\varepsilon$ increases. For~$k \geq 4$, the reduction still occurs but is more modest, reaching~11 states at~$\varepsilon = 0.3$. This behavior suggests that the product structure strikes a good balance between allowing flexible approximation and maintaining structural integrity.
	\end{itemize}
	
Our experiments with the \SoftStateReduction algorithm demonstrate varied results in fuzzy automaton minimization. The underlying residuated lattice plays a key role in determining the number of remaining states. The parameters~$k$ and~$\varepsilon$ influence the reduction, but their impact varies across different residuated lattices. In this example, the \L{}ukasiewicz and nilpotent structures proved to be the most effective, allowing for the greatest reduction.
\myend
\end{example}

\begin{example}\label{Example-range-0-0.5}
Consider the \FfA~$\mA = \tuple{\QA, \Sigma, \IA, \TA, \FA}$ defined below, with eight states over the alphabet~$\Sigma = \{\sigma, \varrho\}$. All fuzzy membership values (including those in the initial state vector~$\IA$, the final state vector~$\FA$, and the transition matrices~$\TA_\sigma$ and~$\TA_\varrho$) have been randomly selected within the interval~$[0.0, 0.5]$. Moreover, each row in every transition matrix contains at least one non-zero entry, ensuring that every state has at least one fuzzy transition for each input symbol. The full specification of $\mA$ is stored in the file~\emph{in5.txt}, available in~\cite{SRFA-prog}.

{\footnotesize\[
	\QA = \{0,1,2,3,4,5,6,7\},\quad
	\IA =\begin{bmatrix} 0.3 & 0.2 & 0 & 0 & 0 & 0 & 0 & 0 \end{bmatrix},
	\]}
{\footnotesize\[
	\TA_\sigma =\begin{bmatrix}
		0.2 & 0.5 & 0.3 & 0.1 & 0   & 0.3 & 0   & 0   \\
		0.2 & 0.4 & 0.5 & 0   & 0.3 & 0.4 & 0   & 0   \\
		0.5 & 0.2 & 0.3 & 0.4 & 0   & 0.2 & 0   & 0.2 \\
		0.3 & 0   & 0.5 & 0.3 & 0.4 & 0.1 & 0   & 0.4 \\
		0   & 0.1 & 0   & 0.5 & 0.2 & 0   & 0.4 & 0.2 \\
		0.4 & 0.2 & 0.1 & 0.3 & 0   & 0.3 & 0.2 & 0.4 \\
		0   & 0   & 0   & 0   & 0.2 & 0.4 & 0.5 & 0   \\
		0   & 0   & 0.1 & 0.3 & 0.2 & 0.5 & 0   & 0.3
	\end{bmatrix},\quad
	\TA_\varrho =\begin{bmatrix}
		0.3 & 0.2 & 0.4 & 0.2 & 0   & 0.3 & 0   & 0   \\
		0.5 & 0.1 & 0.4 & 0   & 0.3 & 0.4 & 0   & 0   \\
		0.2 & 0.4 & 0.2 & 0.5 & 0   & 0.1 & 0.4 & 0.3 \\
		0.3 & 0   & 0.2 & 0.3 & 0.1 & 0.4 & 0   & 0.5 \\
		0   & 0.4 & 0   & 0.2 & 0.4 & 0.1 & 0.3 & 0.4 \\
		0.1 & 0.4 & 0.2 & 0.5 & 0.3 & 0.3 & 0   & 0.2 \\
		0   & 0   & 0.2 & 0   & 0.1 & 0   & 0.5 & 0   \\
		0   & 0   & 0.4 & 0.5 & 0.1 & 0.2 & 0   & 0.3
	\end{bmatrix},\quad
	\FA =\begin{bmatrix} 0 \\ 0 \\ 0 \\ 0.4 \\ 0.3 \\ 0.1 \\ 0 \\ 0 \end{bmatrix}.
	\]}

Executing the \SoftStateReduction algorithm for the \FfA~$\mA$ with $\varepsilon \in \{0, 0.01, 0.1, 0.2, 0.3\}$, $k = \infty$, and one of the five aforementioned residuated lattices, we obtained the following results.

\begin{itemize}
	\item Using the G\"odel structure, the number of remaining states is reduced to four at~$\varepsilon = 0$, and remains unchanged at~$\varepsilon = 0.1$, suggesting that some states are already similar. At~$\varepsilon = 0.2$, it drops to three, indicating that a higher approximation threshold allows further merging.
	
	\item Using the \L{}ukasiewicz structure, the \FfA is reduced to three states at~$\varepsilon=0$, and further to two states for~\mbox{$\varepsilon\in\{0.1, 0.2\}$}. This shows that it responds quickly to approximation, with effective merging even for low values of~$\varepsilon$.
		
	\item Using the nilpotent structure, the \FfA undergoes a significant reduction: the number of states decreases from eight to one, regardless of the value of~$\varepsilon$.
	
	\item Using the Hamacher structure results in no reduction when~$\varepsilon \leq 0.1$, with all eight states remaining. However, at~$\varepsilon = 0.2$, the number of remaining states drops to four, indicating that with this structure, reduction occurs only when the approximation threshold is sufficiently large.
	
	\item Using the product structure enables a gradual reduction. There is no reduction when~$\varepsilon \leq 0.01$, but the \FfA reduces to three states at~$\varepsilon=0.1$, and to two states at~$\varepsilon=0.2$. This suggests that the structure supports smooth merging as the approximation threshold increases, balancing precision and simplification.
\end{itemize}

At~$\varepsilon = 0.3$, the \FfA is reduced to a single state, regardless of the chosen residuated lattice. This indicates that such a high approximation threshold causes all states to behave similarly, effectively collapsing the automaton's structure. While this represents the strongest possible reduction, it also risks eliminating important behavioral distinctions. For this reason, practical analyses often rely on smaller~$\varepsilon$ values, where meaningful state reduction can be achieved without compromising the automaton's essential structure.
\myend
\end{example}	


Overall, the analysis in Examples~\ref{example: additional-in1-txt}-\ref{Example-range-0-0.5} on the use of different residuated lattices reveals a clear distinction in their behavior regarding state reduction. Using the nilpotent and \L{}ukasiewicz structures consistently results in highly effective reductions, often approaching or reaching the minimum possible number of states. When using them, the number of remaining states shows little to no sensitivity to the parameter~$k$, depending only on~$\varepsilon$ in some cases. In contrast, the G\"odel, Hamacher, and product structures exhibit more variable state reduction, where the number of remaining states tends to increase with the parameter~$k$. However, using these structures, increasing the value of~$\varepsilon$ generally leads to a significantly greater degree of reduction. Therefore, the choice of the underlying residuated lattice is a critical factor determining both the level of state reduction achievable and how sensitive that reduction is to the tuning parameters~$k$ and~$\varepsilon$.
	
In general, small values of the parameter~$k$ (e.g., values smaller than the number of states in the considered \FfA) tend to be overly restrictive. Furthermore, the choice of an appropriate residuated lattice for a given \FfA depends on the application and is usually predetermined. The \L{}ukasiewicz and nilpotent structures appear less suitable for \FfAs because the acceptance degrees of long words tend to be either~1 or~0 when these structures are used. As mentioned earlier, since the G\"odel, \L{}ukasiewicz, and nilpotent structures are $\otimes$-locally finite, it is advisable to use them in combination with~$\varepsilon = 0$. Also note that, since the Hamacher t-norm is greater than or equal to the product t-norm, it generally requires a higher threshold~$\varepsilon$ to achieve a comparable level of reduction.


The next two examples provide further insight into the performance of the~\SoftStateReduction algorithm when either the product or Hamacher structure -- both of which are not $\otimes$-locally finite -- is used.

\begin{example}
This example uses the \FfA~$\mA = \tuple{\QA, \Sigma, \IA, \TA, \FA}$ defined below, which is a modified version of that in Example~\ref{Example-range-0-0.5}. Specifically, the non-zero entries in the transition matrices~$\TA_\sigma$ and~$\TA_\varrho$ are now selected from the interval~$[0.6, 0.9]$, instead of~$[0.1, 0.5]$. The full specification of $\mA$ is stored in the file~\emph{in6.txt}, available in~\cite{SRFA-prog}.
	
	{\footnotesize\[
		\QA = \{0,1,2,3,4,5,6,7\},\quad
		\IA =\begin{bmatrix} 1 & 1 & 0 & 0 & 0 & 0 & 0 & 0 \end{bmatrix},
		\]}
	{\footnotesize\[
		\TA_\sigma =\begin{bmatrix}
			0.9 & 0.7 & 0.6 & 0.8 & 0   & 0.6 & 0   & 0   \\
			0.7 & 0.9 & 0.8 & 0   & 0.7 & 0.8 & 0   & 0   \\
			0.8 & 0.7 & 0.9 & 0.6 & 0   & 0.7 & 0   & 0.8 \\
			0.7 & 0   & 0.8 & 0.9 & 0.7 & 0.6 & 0   & 0.8 \\
			0   & 0.6 & 0   & 0.8 & 0.9 & 0   & 0.7 & 0.8 \\
			0.6 & 0.6 & 0.8 & 0.7 & 0   & 0.9 & 0.8 & 0.7 \\
			0   & 0   & 0   & 0   & 0.6 & 0.7 & 0.9 & 0   \\
			0   & 0   & 0.6 & 0.7 & 0.8 & 0.9 & 0   & 0.9
		\end{bmatrix},\quad
		\TA_\varrho =\begin{bmatrix}
			0.9 & 0.6 & 0.7 & 0.7 & 0   & 0.8 & 0   & 0   \\
			0.8 & 0.9 & 0.7 & 0   & 0.6 & 0.8 & 0   & 0   \\
			0.7 & 0.8 & 0.9 & 0.8 & 0   & 0.6 & 0.7 & 0.8 \\
			0.8 & 0   & 0.6 & 0.9 & 0.7 & 0.8 & 0   & 0.7 \\
			0   & 0.7 & 0   & 0.6 & 0.9 & 0.7 & 0.6 & 0.8 \\
			0.6 & 0.8 & 0.7 & 0.8 & 0.6 & 0.9 & 0   & 0.7 \\
			0   & 0   & 0.6 & 0   & 0.6 & 0   & 0.9 & 0   \\
			0   & 0   & 0.7 & 0.8 & 0.6 & 0.7 & 0   & 0.9
		\end{bmatrix},\quad
		\FA =\begin{bmatrix} 0 \\ 0 \\ 0 \\ 0 \\ 1 \\ 1 \\ 0 \\ 0 \end{bmatrix}.
		\]}

Theoretically, the \FfA~$\mA$ is equivalent to a two-state \FfA over any linear and complete residuated lattice.

Executing the \SoftStateReduction algorithm for~$\mA$ with $\varepsilon = 0$, $k=\infty$, and either the G\"odel, \L{}ukasiewicz, or nilpotent structure (which are $\otimes$-locally finite) as the underlying residuated lattice, the number of states is quickly reduced to two. The same result is obtained using the product structure. In the latter case, at $\varepsilon = 10^{-6}$, the algorithm terminates after 16,068 executions of the statement~\ref{step: funcRBRI 6} in the \ReductionByRightInvariance function. When $\varepsilon$ is set to~0, it terminates solely due to possible numerical errors and the precision used in the statement~\ref{step: funcRBRI 11} in that function, after 2,520,112 executions of the mentioned statement~6.

Using the Hamacher structure with $\varepsilon = 0.001$ and $k = \infty$, the algorithm reduces~$\mA$ to three states, after 7,497,026 executions of the statement~\ref{step: funcRBRI 6} in the \ReductionByRightInvariance function. If the precision used in the statement~\ref{step: funcRBRI 11} in that function is relaxed from~$10^{-12}$ to~$10^{-9}$, and all other parameters remain unchanged, the algorithm reduces $\mA$ to two states, after 7,466,972 executions of the mentioned statement~6.

Executing the \SoftStateReduction algorithm for~$\mA$ with $\varepsilon = 0$, $k=1000$, and the product (respectively, Hamacher) structure as the underlying residuated lattice, the number of states is reduced to two, after 266,126 (respectively, 291,306) executions of the statement~\ref{step: funcRBRI 6} in the \ReductionByRightInvariance function.
\myend
\end{example}	


\begin{example}
Consider the \FfA~$\mA = \tuple{\QA, \Sigma, \IA, \TA, \FA}$ defined below, with eight states over the alphabet~$\Sigma = \{\sigma, \varrho\}$. All fuzzy membership values (including those in the initial state vector~$\IA$, the final state vector~$\FA$, and the transition matrices~$\TA_\sigma$ and~$\TA_\varrho$) are either 0 or  0.5. This uniform configuration ensures that all fuzzy transitions and state memberships are defined with equal weight. The full specification of $\mA$ is stored in the file~\emph{in7.txt}, available in~\cite{SRFA-prog}.

{\footnotesize\[
	\QA = \{0,1,2,3,4,5,6,7\},\quad
	\IA =\begin{bmatrix} 0.5 & 0.5 & 0 & 0 & 0 & 0 & 0 & 0  \end{bmatrix},
	\]}
{\footnotesize\[
	\TA_\sigma =\begin{bmatrix}
		0.5 & 0   & 0   & 0   & 0   & 0   & 0   & 0   \\
		0   & 0.5 & 0   & 0   & 0   & 0   & 0   & 0   \\
		0   & 0   & 0.5 & 0   & 0   & 0   & 0   & 0   \\
		0   & 0   & 0   & 0.5 & 0   & 0   & 0   & 0   \\
		0   & 0   & 0   & 0   & 0.5 & 0   & 0   & 0   \\
		0   & 0   & 0   & 0   & 0   & 0.5 & 0   & 0   \\
		0   & 0   & 0   & 0   & 0   & 0   & 0.5 & 0   \\
		0   & 0   & 0   & 0   & 0   & 0   & 0   & 0.5 \\
	\end{bmatrix},\quad
	\TA_\varrho =\begin{bmatrix}
		0   & 0   & 0   & 0   & 0   & 0   & 0   & 0.5 \\
		0   & 0   & 0   & 0   & 0   & 0   & 0.5 & 0   \\
		0   & 0   & 0   & 0   & 0   & 0.5 & 0   & 0   \\
		0   & 0   & 0   & 0   & 0.5 & 0   & 0   & 0   \\
		0   & 0   & 0   & 0.5 & 0   & 0   & 0   & 0   \\
		0   & 0   & 0.5 & 0   & 0   & 0   & 0   & 0   \\
		0   & 0.5 & 0   & 0   & 0   & 0   & 0   & 0   \\
		0.5 & 0   & 0   & 0   & 0   & 0   & 0   & 0 \\
	\end{bmatrix},\quad
	\FA =\begin{bmatrix} 0 \\ 0 \\ 0 \\ 0 \\ 0 \\ 0 \\ 0.5 \\ 0.5 \end{bmatrix}.
	\]}	

It can be seen that the states of~$\mA$ form four strongly connected components, which are mutually disconnected. Each component consists of a pair of states. In addition, the states~$2$, $3$, $4$, and~$5$ are unreachable and inproductive, and therefore redundant.

Executing the \SoftStateReduction algorithm for~$\mA$ with $\varepsilon = 0$, $k=\infty$, and either the G\"odel, \L{}ukasiewicz, or nilpotent structure as the underlying residuated lattice, the number of states is quickly reduced to two. The same result is obtained using the product structure. In the latter case, at $\varepsilon = 10^{-6}$, the algorithm terminates after 608 executions of the statement~\ref{step: funcRBRI 6} in the \ReductionByRightInvariance function. When $\varepsilon$ is set to~0, it terminates solely due to possible numerical errors and the precision used in the statement~\ref{step: funcRBRI 11} in that function, after 34,368 executions of the mentioned statement~6.

Using the Hamacher structure together with $\varepsilon$ equal to $10^{-3}$ (respectively, $10^{-4}$ or $10^{-5}$) and $k = \infty$, the algorithm reduces~$\mA$ to two states, after 31,936 (respectively, 319,968 or 3,199,936) executions of the statement~\ref{step: funcRBRI 6} in the \ReductionByRightInvariance function. 

Executing the \SoftStateReduction algorithm for~$\mA$ with $\varepsilon = 0$, $k=1000$, and the product or Hamacher structure as the underlying residuated lattice, the number of states is reduced to two, after 31,984 executions of the statement~\ref{step: funcRBRI 6} in the \ReductionByRightInvariance function.

An interesting aspect of this example is that if the algorithm is modified by removing the statement~\ref{step: algSSR 0} (which eliminates unreachable or unproductive states), then, when executed for~$\mA$ using $\varepsilon = 0$, $k = \infty$, and the G\"odel structure, it reduces the number of states only to three (instead of two).
\myend
\end{example}


The next two examples -- the final ones -- consider \FfAs used in the literature~\cite{StanimirovicMC.22,BASAK2002223}. They demonstrate that, even without approximation (i.e., using $\varepsilon = 0$ and $k = \infty$), our \SoftStateReduction algorithm achieves better state reduction than the approaches proposed in those works for these \FfAs.

\begin{example}\label{example-StanimirovicMC.22}
	Consider the \FfA $\mA$ over~$\Sigma = \{x\}$ given in~\cite[Example~V.4]{StanimirovicMC.22}, which has 6 states. Its full details are also provided in the file~{\em in8.txt}, available in~\cite{SRFA-prog}. 
	In~\cite{StanimirovicMC.22}, the authors considered reducing this \FfA over the G\"odel structure. They showed that this \FfA cannot be reduced using forward and backward simulations, but that applying 0.7-approximate forward or backward simulation yields a reduction to three states.
	
	Similar to the previous examples, we executed the \SoftStateReduction algorithm for $\mA$ under various parameter settings. Specifically, we considered the five residuated lattices mentioned earlier, three values of~$\varepsilon$ (0, 0.1, and 0.2), and a range of values for~$k$ (including~$\infty$). Details of the experimental results are summarized below:
	
	\begin{itemize}
		\item Using the G\"odel or nilpotent structure results in a significant reduction, consistently reducing the \FfA from six to two states, regardless of the values of~$k$ and~$\varepsilon$. 
		\item Using the \L{}ukasiewicz, Hamacher, or product structure yields no reduction at all, with the number of states remaining six across all configurations.
	\end{itemize}
	
	Remarkably, for the considered \FfA, the combination of $\varepsilon = 0.2$ and $k = 1$ yields the same number of remaining states as the configuration of $\varepsilon = 0$ and $k = \infty$, regardless of the chosen residuated lattice.
	\myend
\end{example}

\begin{example}
		Consider the \FfA $\mA$ over the alphabet~$\Sigma = \{\sigma_0, \sigma_1\}$ given in~\cite[Section~7]{BASAK2002223}, which consists of nine states. Its complete specification is also provided in the file~\emph{in9.txt}, available in~\cite{SRFA-prog}. 
		In~\cite{BASAK2002223}, the authors aimed to reduce the size of this~\FfA by applying the substitution property partition method (SP-partition). During the reduction process, states with similar behavior are gradually grouped together. This results in three final blocks: one corresponding to the accepting states and two corresponding to the non-accepting states. These blocks form the states of the reduced \FfA, which therefore has only three states. The reduced \FfA preserves the original fuzzy behavior while significantly simplifying the structure. This example shows that SP-partitioning is a practical and effective approach for minimizing \FfAs without altering their semantics.
		
        
        Using the same experimental settings as in Example~\ref{example-StanimirovicMC.22}, we executed our~\SoftStateReduction algorithm for the~\FfA~$\mA$. We observed a clear and consistent outcome: the fuzzy automaton was always reduced to just two states. Notably, this result remained unchanged regardless of the values of~$\varepsilon$ and~$k$, or the choice of the underlying residuated lattice among the five considered. This indicates that, for this example, the reduction process is stable and insensitive to these parameter settings.
		\myend
\end{example}
	

\end{document}